\documentclass{article}

\usepackage{arxiv}

\usepackage[utf8]{inputenc} 
\usepackage[T1]{fontenc}    
\usepackage{hyperref}       
\usepackage{url}            
\usepackage{booktabs}       
\usepackage{amsfonts}       
\usepackage{nicefrac}       
\usepackage{microtype}      
\usepackage{graphicx}
\usepackage{doi}

\title{Smoothing methods to estimate the hazard rate under double truncation}


\author{\href{https://orcid.org/0000-0002-0570-0650}{\includegraphics[scale=0.06]{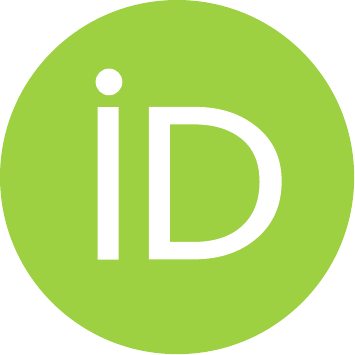}\hspace{1mm}Carla Moreira}\\
	CMAT-Centre of mathematics\\ 
University of Minho\\ 
Braga, Portugal\\
EPIUnit - Instituto de Sa\'ude P\'ublica\\
Universidade do Porto\\
    Porto, Portugal\\
	\texttt{carlamgmm@gmail.com} \\
	\And
	\href{https://orcid.org/0000-0002-4686-8417}{\includegraphics[scale=0.06]{orcid.pdf}\hspace{1mm}Jacobo ~de U\~na-\'Alvarez} \\
	Department of Statistics and OR\\
    SiDOR research group \\
    CINBIO- University of Vigo\\
     Vigo, Spain\\
	\texttt{jacobo@uvigo.es} \\
	\AND
	\href{http://orcid.org/0000-0002-2992-5299}{\includegraphics[scale=0.06]{orcid.pdf}\hspace{1mm}Ana Cristina ~Santos} \\
	EPIUnit - Instituto de Sa\'ude P\'ublica\\
    Departamento de Ci\^encias da Sa\'ude P\'ublica e Forenses e Educa\c{c}\~ao M\'edica\\
    Faculdade de Medicina\\
     Universidade do Porto\\
    Porto, Portugal\\
	\texttt{acsantos@med.up.pt} \\
\AND
    \href{https://orcid.org/0000-0003-4699-6571}{\includegraphics[scale=0.06]{orcid.pdf}\hspace{1mm}Henrique ~Barros} \\
	EPIUnit - Instituto de Sa\'ude P\'ublica\\
    Departamento de Ci\^encias da Sa\'ude P\'ublica e Forenses e Educa\c{c}\~ao M\'edica\\
    Faculdade de Medicina\\
     Universidade do Porto\\
    Porto, Portugal
	\texttt{hbarros@med.up.pt } \\
}



\newtheorem{prop}{Proposition}
\newtheorem{thm}{Theorem}

\newcommand{\argmax}{\operatorname{argmax}\displaylimits}
\newcommand{\argmin}{\operatorname{argmin}\displaylimits}

\hypersetup{
pdftitle={A template for the arxiv style},
pdfsubject={q-bio.NC, q-bio.QM},
pdfauthor={David S.~Hippocampus, Elias D.~Striatum},
pdfkeywords={First keyword, Second keyword, More},
}

\begin{document}
\maketitle

\begin{abstract}
In Survival Analysis, the observed lifetimes often correspond to individuals for which the event occurs within a specific calendar time interval. With such interval sampling, the lifetimes are doubly truncated at values determined by the birth dates and the sampling interval. This double truncation may induce a systematic bias in estimation, so specific corrections are needed. A relevant target in Survival Analysis is the hazard rate function, which represents the instantaneous probability for the event of interest. In this work we introduce a flexible estimation approach for the hazard rate under double truncation. Specifically, a kernel smoother is considered, in both a fully nonparametric setting and a semiparametric setting in which the incidence process fits a given parametric model.  Properties of the kernel smoothers are investigated both theoretically and through simulations. In particular, an asymptotic expression of the mean integrated squared error is derived, leading to a data-driven bandwidth for the estimators.  The relevance of the semiparametric approach is emphasized, in that it is generally more accurate and, importantly, it avoids the potential issues of nonexistence or nonuniqueness of the fully nonparametric estimator. Applications to the age of diagnosis of Acute Coronary Syndrome (ACS) and AIDS incubation times are included.
\end{abstract}

\keywords{Bandwidth selection, Doubly truncated data, Interval sampling, Kernel smoothing, Survival Analysis}

\maketitle

\section{Introduction}
Acute coronary syndrome (ACS) is still one of the main causes of death in Europe and worldwide. Coronary heart disease mortality has decreased in the last decades in high-income countries because of primary prevention and improvement in treatment of patients with ACS \cite{Nichols2014}.  EPIHeart \cite{Araujo18} is a prospective cohort study which includes patients with confirmed diagnosis of type 1 (primary spontaneous) ACS who were consecutively admitted to the Cardiology Department of two tertiary hospitals in Portugal between August 2013 and December 2014. EPIHeart cohort comprises 939 cases with ages at diagnosis ranging from $30$ to $94$ years.
Due to the aforementioned sampling scheme, the age at diagnosis in EPIHeart cohort is  observed conditionally on being larger than the time from birth to the study onset (August 2013) and smaller than the age at the end of the study (December 2014); i.e, the event time is doubly truncated by the sampling interval. This restriction in the observation of the ages at infarction may entail biases in estimation, unless appropriate corrections are used \cite{Mandel2017, RennertXie2019}.

In general, interval sampling occurs when only those individuals whose event lies within a certain calendar time window are observed \cite{Zhu2014}. Individuals with
event out of such sampling interval are not observed, and no information about them is available to the investigator. As mentioned, interval sampling induces double truncation on the event times; this phenomenon is less known, and much more complicated, than one-sided truncation, where only left or right observational limits are present. In particular, the nonparametric maximum-likelihood estimator (NPMLE) for doubly truncated data has no explicit form, and iterative methods are needed \cite{Efron99}. \cite{Moreira010} implemented three different iterative algorithms to compute the NPMLE. However, convergence of these algorithms for a particular dataset does not imply that the NPMLE actually exists. Indeed, the NPMLE may not exist, and may not be unique \cite
{Xiao19}. Therefore, in practice a preliminary inspection of these potential issues is required. Complications under double truncation arise in theoretical developments too; see for instance \cite{dUAVK19} for recent updates and a discussion. Summarising, double truncation requires attention and is far from simple.

One of the goals of EpiHeart study is to evaluate the risk of infarction along time from such cohort and, therefore, the estimation of the hazard rate is of interest. The hazard rate function reports the instantaneous probability of death or failure along time, and it plays an essential role in Epidemiology and particularly in Survival Analysis.  The shape of the hazard rate function helps in apprehending the mechanism that affects survival and hence, in absence of shape constraints, nonparametric methods turn out to be particularly useful for the estimation of the true curve. In this context, kernel based estimation of the hazard rate function, following the spirit of kernel density estimation \cite{Wand95}, has received particular attention in the literature.

Nonparametric estimation of density and hazard rate functions under right censoring was investigated by  \cite{Diehl88} and  \cite{Wang94}, while  \cite{Cai98} considered the issue of dependent censoring.   \cite{Zhou999} studied kernel density and hazard rate estimation based on the product-limit estimator from left truncated data \cite{Lynden71}. The mean squared error for the kernel estimator of the hazard rate from left truncated and right censored data was derived in  \cite{Lemdani07}. \cite{Moreira2012} introduced kernel-type density estimation for doubly truncated data. These authors considered a purely nonparametric density estimator, as well as a semiparametric estimator which exploits information on the truncation distribution \cite{Moreira110}. The two estimators
were shown to be asymptotically equivalent but, as expected, the semiparametric estimator exhibited a better finite-sample performance. \cite{Moreira13} explored several bandwidth selection procedures for kernel density estimation under double truncation, which are appropriate modifications of the normal reference rule, the least squares cross-validation procedure, plug-in procedures, and a bootstrap based method. However, smoothing methods for the hazard rate under double truncation have not been investigated yet.

In this paper we propose and investigate nonparametric and semiparametric kernel smoothers for the hazard rate when the data are subject to double truncation. We also illustrate how the application of the semiparametric approach may mitigate the aforementioned issues for the NPMLE. The rest of the paper is organized as follows. In Section 2 we introduce the needed notation and we review the NPMLE and the semiparametric maximum likelihood estimator (SPMLE) for doubly truncated data; the smooth hazard rate estimators are introduced too. In Section 3 the main theoretical properties of the proposed estimators for the hazatd rate are given.  The finite-sample performance of the estimators is investigated through simulations in Section 4. Section 5 is devoted to the analysis of two real datasets: the EPIHeart cohort data, and the well-known CDC AIDS blood transfusion data \cite{Lawless89}. The main conclusions of our research are given in the Section 6, whereas the technical proofs and the details on bandwidth selection and the smoothed bootstrap are deferred to the Appendix.


\section{Notation and estimators}

Let $X^{\ast }$ be the random variable of ultimate interest, with continuous df
$F$, and assume that it is doubly truncated by the random pair
$\left( U^{\ast },V^{\ast }\right) $ with joint df $T$, where
$U^{\ast }$ and $V^{\ast }$ ($U^{\ast }\leq V^{\ast }$) are the left
and right truncation variables respectively. This means that the
triplet $\left( U^{\ast },X^{\ast },V^{\ast }\right) $ is observed
if and only if $U^{\ast }\leq X^{\ast }\leq V^{\ast }$, while no
information is
available when $X^{\ast }<U^{\ast }$ or $X^{\ast }>V^{\ast }$.
In the special case of interval sampling we have $V^{\ast }=U^{\ast }+ \tau$, for some constant $\tau>0$ (the width of the sampling interval).  It is assumed that $\left( U^{\ast },V^{\ast }\right)$ is independent of $X^{\ast }$; we also assume that the supports of $X^*$, $U^*$ and $V^*$ are such that the identifiability conditions  for $F$ are satisfied \citep{Woodroofe85}. \\

\subsection{The hazard function estimator}
The ordinary kernel estimator for the hazard function is given by

\begin{equation}
\lambda_{h}(x)=\int K_{h}(x-t)\Lambda_{n}(dt)= \int K_{h}(x-t)\frac{F_n(dt)}{1-F_n(t^-)}\\
\label{npkde1}
\end{equation}

\noindent where $K_{h}(t)=K(t/h)/h$ is the re-scaled kernel function,
$h=h_{n}$ is a deterministic bandwidth sequence with $h_n
\rightarrow 0$ as $n \rightarrow \infty$, $\Lambda_n$ is the empirical cumulative hazard, and  $F_n$ is the empirical cumulative distribution function \citep{Wand95}. For doubly truncated data $F_n$ is the Efron and Petrosian estimator, which is the NPMLE in such a setting \citep{Efron99}. We revisit this estimator in the following lines.


Let $\left( U_i,X_i,V_i\right)$, $1 \leq i \leq n$, be the observed data, which are independent copies with the conditional distribution of $\left( U^{\ast },X^{\ast },V^{\ast }\right)$ given $U^{\ast }\leq X^{\ast }\leq V^{\ast }$.  Here, without loss of generality we assume that the NPMLE is a discrete distribution supported by the set of observed data.  Let $\varphi=(\varphi_1,\ldots,\varphi_n)$ be a distribution putting probability $\varphi_i$ on $X_i, i=1,\ldots,n$. Similarly, let $\psi=(\psi_1,\ldots,\psi_n)$ be a distribution putting joint probability $\psi_i$ on $(U_i,V_i), i=1,\ldots,n$.
Under the assumption of independence between $X^*$ and
$(U^*,V^*)$ the full likelihood, $\mathcal {L}(\varphi,\psi)$, can be decomposed as a product of the conditional likelihood of the $X_i$'s given the
  $(U_i,V_i)$'s, say $\mathcal {L}_1(\varphi)$, and the marginal likelihood of the $(U_i,V_i)$'s, say $\mathcal {L}_2(\varphi,\psi)$:
 \begin{equation}
 \mathcal {L}(\varphi,\psi)=\prod_{j=1}^{n}\frac{\varphi_j}{\Phi_j}\times\prod_{j=1}^{n}\frac{\Phi_j\psi_j}{\sum_{i=1}^{n}\Phi_i\psi_i}=\mathcal {L}_1(\varphi)\times \mathcal {L}_2(\varphi,\psi)
 \label{Ln_formula}
 \end{equation}
where $\Phi_i$ is defined through
$\Phi_i=\displaystyle\sum_{m=1}^{n}{\varphi_mJ_{im}}$,
$i=1,\ldots,n$ with $J_{im}=I(U_i\leq X_m\leq V_i)$   the indicator of the event $U_i
\leq X_m \leq V_i$.

\bigskip
The conditional NPMLE of $F$ \citep{Efron99} is defined as the
maximizer of $\mathcal {L}_1(\varphi)$ in equation
(\ref{Ln_formula}): $ \hat \varphi=\argmax_{\varphi} \mathcal
{L}_1(\varphi)$. The conditional NPMLE
$F_n(x)=\displaystyle\sum_{i=1}^{n}\hat \varphi_iI(X_i\leq x)$
maximizes indeed the full likelihood, which can be also written as
the product
\[
\mathcal
{L}(\varphi,\psi)=\prod_{j=1}^{n}\frac{\psi_j}{\Psi_j}\times
\prod_{j=1}^{n}\frac{\Psi_j\varphi_j}{\sum_{i=1}^{n}\Psi_i
\varphi_i}=\mathcal {L}_1^{*}(\psi)\times \mathcal
{L}_2^{*}(\psi,\varphi)
\]
where $\Psi_i=\displaystyle\sum_{m=1}^{n} \psi_mJ_{mi}$, $i=1\ldots,n$.
Here, $\mathcal L_1^{*}(\psi)$ denotes the conditional likelihood of
the $(U_i,V_i)$'s given the $X_i$'s and $\mathcal
{L}_2^{*}(\psi,\varphi)$ refers to the marginal likelihood of the
$X_i$'s. Introduce $\hat \psi=(\hat \psi_1,...,\hat \psi_n)$ as the
maximizer of $\mathcal {L}^*_1(\psi)$; then,
$T_n(u,v)=\displaystyle\sum_{i=1}^{n}\hat \psi_iI(U_i\leq
u,V_i\leq
v)$ is the NPMLE of $T$ \cite{Shen08}.\\
\bigskip

The NPMLE of $F$ also admits the representation

\begin{equation}
F_n(x)=\alpha_{n} \int_{a_F}^{x}\frac{F^{\ast }_{n}(dt)}{G_{n}(t )}
\label{ipwe}
\end{equation}

\noindent where $a_F$ is the lower limit of the support of $F$, $F^{\ast }_{n}$ is the ordinary empirical df of the
$X_i$'s,

\begin{eqnarray*}
G_n(t)=\int_{\left\{ u\leq t\leq v\right\} }T_n(du,dv)
\end{eqnarray*}

\noindent is the NPMLE for the conditional
probability of sampling a specific $X^*$-value, $X^{\ast}=t$, which is given by
$G(t)=P(U^{\ast }\leq t\leq V^{\ast })$; and
$\alpha_{n}=(\int_{a_F}^{\infty }G_{n}^{-1}(t)F^*_{n}(dt))^{-1}$ is an
estimator for the no-truncation probability $\alpha=P(U^* \leq X^*\leq V^*)$.  \\

\cite{Shen08} investigated the asymptotic properties of $F_n$ in the particular case in which both $X^*$ and $(U^*,V^*)$ have a density. Note however that for interval sampling the couple $(U^*,V^*)$ falls on a line and, therefore, the density of the truncation pair does not exist. Recently, \cite{dUAVK19} revisited and completed the asymptotic theory for the NPMLE in the  more general setting in which covariables are present; they formally established the weak convergence of both $F_n$ and $G_n$ under primitive assumptions and they repaired several gaps and inconsistencies in \cite{Shen08}.\\

\subsection{Limitations of the NPMLE}

In practice the NPMLE may have have some limitations. When analysing a particular doubly truncated dataset, the existence or uniqueness of the NPMLE may be compromised. For instance, the convergence of the iterative algorithms proposed by \cite{Turnbull76} or \cite{Efron99} does not imply that a NPMLE actually exists. In fact, such an estimate may not exist, in which case the estimates provided by the iterative algorithm may be misleading.   \cite{Xiao19} presented a necessary and sufficient graphical condition, based on \cite{Vardi85}, to determine the existence and uniqueness of the NPMLE.
The graphical condition is based on graphs theory applied to the observed triplets $(X_i, U_i, V_i), i=1, \ldots, n$. Considering that each of these triplets represents one of the $n$ vertices of the graph $\mathcal{G}$,  a directed edge from vertex $i$ to vertex $j$ exists if and only if $X_j \in [U_i, V_i]$.  A graph $\mathcal{G}$ is strongly connected if, for any two vertices $i$ and $j$, there exists a directed path from $i$ to $j$ and a directed path from $j$ to $i$, see \cite{Xiao19} for more details. Specifically, the result is as follows.

\begin{prop}
(Xiao and Hudgens, 2019) There exists a unique NPMLE if and only if the graph $G$,  is strongly connected.
\label{prop:exuniq}
\end{prop}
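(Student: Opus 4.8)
Since this statement is attributed to Xiao and Hudgens, the plan is to reconstruct the argument by analyzing the conditional likelihood $\mathcal{L}_1(\varphi)=\prod_{j=1}^n \varphi_j/\Phi_j$ directly, exploiting its invariance under the rescaling $\varphi\mapsto c\varphi$ (numerator and denominator carry the same power of $c$), so that $\mathcal{L}_1$ is really a function on the projective simplex. The combinatorial object controlling the geometry is the matrix $J=(J_{ij})$ with $J_{ij}=I(U_i\le X_j\le V_i)$, which is exactly the adjacency matrix of $\mathcal G$; note $J_{ii}=1$ for all $i$.

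For necessity (not strongly connected implies no unique NPMLE) I would condense $\mathcal G$ into its strongly connected components and pick, in the acyclic condensation, a splitting $A$, $B=A^c$ for which there is an edge from $A$ to $B$ but none from $B$ to $A$. Back in terms of the data this says every window indexed in $B$ contains no $X$-value from $A$, while at least one window in $A$ contains an $X$-value from $B$. I would then probe the ray $\varphi_j\mapsto t\varphi_j$ for $j\in A$ (fixing $j\in B$). Writing $a_i=\sum_{m\in A}\varphi_m J_{im}$ and $b_i=\sum_{m\in B}\varphi_m J_{im}$, a short computation collapses $\mathcal{L}_1$ along this ray to $\mathrm{const}\cdot t^{|A_1|}/\prod_{i\in A_1}(t a_i+b_i)$, where $A_1$ indexes the crossing windows in $A$; each factor $t/(ta_i+b_i)$ is strictly increasing, so $\mathcal{L}_1$ is strictly increasing in $t$ with supremum attained only as $t\to\infty$. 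Thus there is no interior maximizer: the mass on $B$ is driven to the boundary, giving either non-existence of a proper NPMLE or a degenerate limit, which is the desired conclusion.

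For sufficiency I would separate existence from uniqueness. Existence follows from coercivity: when $\mathcal G$ is strongly connected, every proper $A$ has edges in both directions, so the same ray computation now gives $\mathcal{L}_1(t)\sim t^{-|B_1|}\to 0$ as $t\to\infty$ and, symmetrically, $\to 0$ as $t\to 0$. Hence $\mathcal{L}_1$ tends to $0$ whenever the mass of any proper subset is driven to the boundary, and by continuity and compactness the maximum is attained in the interior. Uniqueness is the substantive part: after the standard Vardi-type reparametrization of the associated biased-sampling likelihood the log-likelihood becomes concave, and its Hessian at a critical point is negative definite modulo the one-dimensional scaling direction precisely when $J$ is irreducible. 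Since strong connectivity of $\mathcal G$ is exactly irreducibility of $J$, a Perron--Frobenius / connectivity argument identifies the null space of the Hessian with the scale direction and forces the maximizer to be unique on the simplex.

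The main obstacle I anticipate is this uniqueness step. The difficulty is that $\mathcal{L}_1(\varphi)=\prod_j \varphi_j/\Phi_j$ is not concave as written in $\varphi$, so one must first recast the problem in the biased-sampling form of \cite{Vardi85} to gain convexity, and only then can the link between non-degeneracy of the Hessian and graph-theoretic irreducibility be made rigorous; this is precisely where the argument leans on \cite{Vardi85} and the refinements in \cite{Xiao19}. By contrast, the necessity and existence parts are elementary scaling and compactness arguments once the condensation of $\mathcal G$ is in hand.
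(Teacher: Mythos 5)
Two remarks before the substance: the paper itself does not prove Proposition \ref{prop:exuniq}. It is quoted from Xiao and Hudgens \cite{Xiao19} (who in turn build on \cite{Vardi85}), and the paper only states it and then derives the simple necessary condition involving $S_j$ and $\tilde S_j$. So your reconstruction can only be judged on its own merits and against that cited literature; with that understood, your overall route (scaling rays for necessity, convexity in a suitable parametrization for uniqueness) is indeed the right one. Your necessity half is essentially correct: with no edges from $B$ to $A$, the $B$-indexed factors of $\mathcal{L}_1$ are unchanged under $\varphi_j\mapsto t\varphi_j$, $j\in A$, each $A$-factor with $b_i>0$ is strictly increasing in $t$, and since any admissible maximizer must have all $\varphi_j>0$ (a zero coordinate kills the numerator), no maximizer exists. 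One case your condensation step misses: if the condensation has \emph{no} edges at all, the splitting ``edge from $A$ to $B$ but none back'' does not exist; but then $\mathcal{L}_1$ factorizes over the blocks and is invariant to how mass is allocated among them, so the NPMLE is non-unique. That case must be added to complete necessity.

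The genuine gap is in your existence step. ``By continuity and compactness the maximum is attained in the interior'' does not work as stated, because $\mathcal{L}_1$ is \emph{not} continuous on the closed simplex: at a boundary point where an entire window $\{m: J_{im}=1\}$ carries zero mass, the factor $\varphi_i/\Phi_i$ is $0/0$, and its limit depends on the direction of approach (it can be anything in $[0,1]$). Your ray computation only controls radial approaches in which a whole block is scaled by a common $t$; a maximizing sequence may approach the boundary quite differently. The repair is short but uses an ingredient absent from your sketch: since $J_{jj}=1$, every factor obeys $\varphi_j/\Phi_j\le 1$. Hence if a maximizing sequence $\varphi^{(k)}$ converges to a boundary point with zero set $B\neq\emptyset$, strong connectivity yields $j_0\in B$ and $m_0\notin B$ with $J_{j_0 m_0}=1$, so $\Phi_{j_0}^{(k)}\ge\varphi_{m_0}^{(k)}\to\varphi_{m_0}^{*}>0$ while $\varphi_{j_0}^{(k)}\to 0$; bounding all other factors by $1$ gives $\mathcal{L}_1(\varphi^{(k)})\to 0$, contradicting that interior points give positive likelihood. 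On uniqueness, you are right that concavity fails in $\varphi$ and a reparametrization is needed, but the machinery can be much lighter than a Perron--Frobenius/Hessian analysis: with $\varphi_j=e^{\beta_j}$ the log-likelihood $\sum_j\beta_j-\sum_i\log\bigl(\sum_m J_{im}e^{\beta_m}\bigr)$ is concave (log-sum-exp), so the maximizers form a convex set; along a segment of maximizers each log-sum-exp term must be affine, which forces $\delta=\beta^{(1)}-\beta^{(2)}$ to be constant on every window $\{m: J_{im}=1\}$, and since $J_{ii}=1$ every edge $i\to m$ of $\mathcal{G}$ places $i$ and $m$ in a common window, so connectivity chains these constants together and $\delta$ is globally constant, i.e.\ the two maximizers coincide on the simplex.
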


A simple necessary condition for the existence and uniqueness of the NPMLE can be derived from Proposition \ref{prop:exuniq}. Let $S_j= \displaystyle \sum_{i=1}^n J_{ij}$ and $\tilde S_j= \displaystyle \sum_{i=1}^n I(U_j\leq X_i\leq V_j)$.
If $S_j=1$ or $\tilde S_j=1$ for some $j$ then the NPMLE does not exist or is not unique.

Even when existing, the NPMLE is of little use when its variance is extremely large, something with may occur for special truncation patterns. In Section 5  we provide illustrations of the referred limitations. An alternative to the NPMLE is the semiparametric approach, which is discussed in the following subsection.

\subsection{The SPMLE of the cumulative df}
\cite{Moreira110} derived the asymptotic results for the SPMLE of the cumulative df under double truncation, and concluded that it may be more efficient than the Efron–Petrosian NPMLE. On other hand, the use of the semiparametric approach circumvents the eventual limitation of the non-existence or non-uniqueness of the NPMLE. In the nonparametric approach, the NPMLE of $F$ was represented as an inverse probability weighted estimator (IPWE), where the $X_i$'s are upweighted according to their estimated sampling probabilities $G_n(X_i)$; see equation (\ref{ipwe}). In the semiparametric setting $T$ is assumed to belong to a parametric family of df's $\left\{T_{\theta }\right\} _{\theta \in \Theta }$, where $\theta $
is a vector of parameters and $\Theta $ stands for the parametric
space.   As a consequence, $G(t)$ becomes
\begin{eqnarray*}
G_{\theta}(t)=\int_{\left\{ u\leq t\leq v\right\} }T_{\theta}(du,dv).
\end{eqnarray*}
The parameter $\theta$ is estimated by the maximizer $\hat \theta$
of the conditional likelihood of the $(U_i,V_i)$'s given the
$X_i$'s, that is,

\begin{eqnarray*}
\mathcal{L}_1^{*}(\psi)\equiv \mathcal{L}_1^{*}(\theta)= \prod_{i=1}^{n}\frac{g_{\theta}(U_{i},V_{i})}{G_{\theta}(X_{i} )}
\end{eqnarray*}
where $g_{\theta}(u,v)=\frac{\partial^2}{\partial u \partial v}P(U^{\ast }\leq u,V^{\ast }\leq v)=T_{\theta }(du,dv)$ stands
for the joint density of $\left( U^{\ast },V^{\ast }\right) $. In the case of interval sampling one has $V^*=U^* + \tau$  for some constant $\tau >0$ (the sampling interval width) so the joint density of $(U^*,V^*)$ does not exist; in this case one should rather use $g_{\theta}(u,v)=\frac{\partial}{\partial u }P(U^{\ast } \leq u)= L_{\theta}(du)$
 and $G_{\theta}(x)=L_{\theta}(x)-L_{\theta}(x-\tau)$ in $\mathcal{L}_1^{*}(\theta)$, where $L_{\theta}(.)$ stands for the df of the parametric model assumed for $U^*$.


\bigskip

Once $\theta$ is estimated, the SPMLE of $F$ is
introduced through

\begin{equation*}
F_{\widehat \theta}(x)=\alpha_{\widehat \theta} \int_{a_F}^x
\frac{F_n^*(dt)}{G_{\widehat \theta}(t)}, \label{semip_est}
\end{equation*}
where $\alpha_{\widehat \theta}=(\int_{a_F}^{\infty }G_{\widehat
\theta}^{-1}(t)F^*_{n}(dt))^{-1}$.  \cite{Moreira110} established the
asymptotic normality of both $\hat \theta$ and $F_{\widehat
\theta}$.  As a drawback, the semiparametric estimator requires preliminary
 specification of a parametric family, which may eventually introduce a bias component when it is far away from reality \citep{Moreira110}.\\


Following (\ref{npkde1}), we introduce the semiparametric kernel estimator for the hazard function as

\begin{equation}
\lambda_{\widehat \theta,h}(x)=\int K_{h}(x-t)\frac{F_{\widehat \theta}(dt)}{1-F_{\widehat \theta}(t^-)}=
\alpha_{\widehat \theta} \frac1n{\sum_{i=1}^{n}K_{h}(x-X_{i})\frac{G_{\widehat \theta}(X_{i})^{-1}}{1-F_{\widehat \theta}(X_i^{-})}}.\label{spkde}
\end{equation}

\section{Hazard rate estimators: main properties}

\bigskip
Both $G_{n}$ and $G_{\widehat{\theta }}$ defined in the previous section are $\sqrt{n}$%
-consistent estimators of $G$. For $G_{\widehat \theta } $ this
follows from the $\sqrt{n}$-consistency of $\widehat \theta$,
provided that $G_{\theta}$ is a smooth function of $\theta$
\citep{Moreira110}. For $G_{n}$, the result may be obtained by
noting that
\begin{eqnarray*}
G_{n}(x)=\alpha _{n}^{-1}\int \int_{\left\{ u\leq x\leq v\right\} }\frac{T_{n}^{\ast }(du,dv)}{\int_{\left\{ u\leq t\leq v\right\}
}F_{n}(dt)}
\end{eqnarray*}

\noindent and

\begin{eqnarray*}
\alpha _{n}=\int \int \frac{T_{n}^{\ast }(du,dv)}{\int_{\left\{ u\leq t\leq v\right\} }F_{n}(dt)},
\end{eqnarray*}

\noindent where $T_{n}^{\ast }$ is the ordinary empirical df of the truncation
times. Hence, $\sqrt{n}$-consistency of $G_{n}$ is a consequence of
that of $F_{n}$ and $T_{n}^{\ast }$; see \cite{dUAVK19} for formal derivations. Since both
$G_{n}$ and $G_{\widehat{\theta }}$ approach to $G$ at a $\sqrt{n}
$-rate, which is faster than the nonparametric rate $\sqrt{nh}$, the
asymptotic properties of $\lambda_{h}$ and $\lambda_{\widehat{\theta },h}$ are expected to
be the same, and will coincide with those  of the artificial estimator based on
the true $G$. The same heuristic argument suggests that $F_{n}$, $F_{\widehat{\theta }}$, $\alpha_{n}$ and $\alpha_{\widehat{\theta }}$ can be replaced by their limits $F$ and $\alpha$ for asymptotic analysis. This is in parallel with the approach in \cite{Moreira2012} for density estimation.

\bigskip

Introduce the asymptotically equivalent version of $\lambda_{h}$ and $\lambda_{\widehat{\theta },h}$ through
\begin{equation}
\overline{\lambda}_{h}(x)=
\alpha \frac1n{\sum_{i=1}^{n}K_{h}(x-X_{i})\frac{G(X_{i})^{-1}}{1-F(X_i^{-})}}  . \label{aekde}
\end{equation}


\noindent As discussed, under regularity one has $(nh)^{1/2}(\lambda_h(x)-\bar \lambda_h(x))=o_P(1)$ and $(nh)^{1/2}(\lambda_{\hat \theta,h}(x)-\bar \lambda_h(x))=o_P(1)$. The function $G(.)$ may be constant; for example, this happens when $V^*-U^*$ is degenerated (that is, with interval sampling) provided that the left-truncation time $U^*$ is uniformly distributed in a suitable interval. In such a case, the correction for truncation vanishes and the usual kernel hazard estimators for complete data is obtained. This is not surprising, since a constant $G$ indicates that there is no sampling bias. In general, however, the function $G$ will not be flat and the correction for double truncation becomes relevant. In the next result we establish the strong consistency and the
asymptotic normality of $\overline{\lambda}_{h}(x)$. Throughout this Section we implicitly assume
$G(x)>0$ for each $x$ in the support of $X^*$. Note that this condition is needed to ensure the identifiability of $F$ along its whole support.

\bigskip
\begin{thm}(i) If $K$ is bounded on a compact support, $h$
is such that $\sum_{n=1}^{\infty }\exp (-\eta hn)<\infty $ for each
$\eta >0$, $G$
is continuous at $x$, and $x$ is a Lebesgue point of $\lambda$, then $\overline{\lambda}%
_{h}(x)\rightarrow \lambda(x)$ with probability 1.

(ii) If, in addition to the conditions in (i), $K$ is an even function, $%
h=o(n^{-1/5})$, $G^{-1}/(1-F)\lambda$ has a second derivative which is bounded
in a neighbourhood of $x$, and $\lambda(x)>0$, then
\begin{eqnarray*}
\left( nh\right) ^{1/2}\left( \overline{\lambda}_{h}(x)-\lambda(x)\right)
\rightarrow N(0,\alpha \frac{G(x)^{-1}}{1-F(x)}\lambda(x)R(K))
\end{eqnarray*}
in distribution, where $R(K)=\int K(t)^{2}dt$.\\
\label{Theorem1}
\end{thm}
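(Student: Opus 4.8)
The plan rests on a single algebraic observation that drives every subsequent estimate: under the independence assumption, the observed variable $X_i$ has density $f_X(t)=G(t)f(t)/\alpha$, where $f=F'$ and $\alpha=\int f(t)G(t)\,dt$. Writing $Z_{ni}=\alpha K_h(x-X_i)G(X_i)^{-1}/(1-F(X_i^-))$ so that $\overline{\lambda}_h(x)=n^{-1}\sum_{i=1}^n Z_{ni}$, this identity makes the factors $\alpha$ and $G(X_i)^{-1}$ cancel in expectation:
\[
E[Z_{ni}]=\int K_h(x-t)\frac{f(t)}{1-F(t)}\,dt=\int K_h(x-t)\lambda(t)\,dt,
\]
so $\overline{\lambda}_h(x)$ is unbiased for a kernel-smoothed hazard. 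First I would establish this identity, since it reduces the problem to the analysis of a sum of (row-wise) i.i.d., bounded summands of a now-standard kernel type.

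For part (i) I would split $\overline{\lambda}_h(x)-\lambda(x)$ into a deterministic bias $E[Z_{ni}]-\lambda(x)$ and a stochastic fluctuation $\overline{\lambda}_h(x)-E[Z_{ni}]$. The bias equals $\int K(u)(\lambda(x-hu)-\lambda(x))\,du$, which tends to $0$ as $h\to0$ at every Lebesgue point of $\lambda$ by the approximation-to-identity property of bounded, compactly supported kernels. For the fluctuation I would apply Bernstein's exponential inequality: since $K$ has compact support and $G$, $1-F$ are bounded away from zero near $x$, each centred summand is bounded by $C/h$ with variance $O(1/h)$, whence $P(|\overline{\lambda}_h(x)-E[Z_{ni}]|>\varepsilon)\le 2\exp(-\eta\, nh)$ for some $\eta=\eta(\varepsilon)>0$. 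The summability hypothesis $\sum_n\exp(-\eta hn)<\infty$ is tailored precisely to let Borel--Cantelli convert this into almost-sure convergence of the fluctuation to $0$, yielding strong consistency.

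For part (ii) I would refine both pieces. The leading variance term comes from $E[Z_{ni}^2]=\alpha\int K_h(x-t)^2\,\frac{G(t)^{-1}}{1-F(t)}\lambda(t)\,dt$; a change of variables together with the continuity of $G^{-1}/(1-F)\lambda$ at $x$ gives $\mathrm{Var}(Z_{ni})=\alpha h^{-1}\frac{G(x)^{-1}}{1-F(x)}\lambda(x)R(K)(1+o(1))$, so $nh\,\mathrm{Var}(\overline{\lambda}_h(x))$ converges to the stated limiting variance. Evenness of $K$ annihilates the first-order term in a Taylor expansion, leaving a bias of order $h^2$ (the bounded second-derivative hypothesis controls the remainder); since $h=o(n^{-1/5})$ means $nh^5\to0$, the bias is $o((nh)^{-1/2})$ and hence asymptotically negligible. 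Finally I would invoke the Lyapunov central limit theorem for the triangular array $\{Z_{ni}\}$: the third absolute moment is $O(h^{-2})$, so the Lyapunov ratio is $O((nh)^{-1/2})$, which tends to $0$ because the hypothesis inherited from (i) forces $nh\to\infty$. Combining the negligible bias with this CLT delivers the asserted normal limit.

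The main obstacle I anticipate is the almost-sure statement in (i): the fluctuation argument must produce an exponential tail whose rate is exactly $nh$ (not merely $n$, since the summands blow up like $1/h$), and matching this rate to the peculiar summability condition on $h$ is the delicate point. By contrast, the cancellation identity is the conceptual key but is short, and the asymptotic normality in (ii) is a routine triangular-array CLT once the variance order and the bias rate have been pinned down.
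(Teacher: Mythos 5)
Your proposal is correct in substance, and for part (ii) it essentially coincides with the paper's argument: the paper simply cites Section 2 of Parzen (1962) for the CLT of the centred term and then adds the same two-term Taylor expansion (evenness of $K$, $nh^{5}\rightarrow 0$) to dispose of the bias; your explicit Lyapunov computation, with $\mathrm{Var}(Z_{ni})=\alpha h^{-1}\frac{G(x)^{-1}}{1-F(x)}\lambda(x)R(K)(1+o(1))$ and third moments $O(h^{-2})$, is exactly what ``following Parzen's lines'' amounts to, and your observation that the summability condition already forces $nh\rightarrow\infty$ is what makes the Lyapunov ratio vanish. For part (i), however, your route genuinely differs from the paper's. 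The paper uses no concentration inequality: it introduces the frozen-weight estimator $\widetilde{\lambda}_{h}(x)=\alpha\frac{G(x)^{-1}}{1-F(x)}f^{*}_{h}(x)$, where $f^{*}_{h}$ is the ordinary kernel density estimator of the observed data, invokes the Devroye--Wagner theorem (whose hypotheses are precisely the ones listed in (i)) to get $f^{*}_{h}(x)\rightarrow f^{*}(x)=\alpha^{-1}G(x)f(x)$ almost surely, and then bounds
\begin{eqnarray*}
\left| \overline{\lambda}_{h}(x)-\widetilde{\lambda}_{h}(x)\right| \leq \alpha
f^{*}_{h}(x) \sup_{x-ah\leq y\leq x+ah}\left|
\frac{G(y)^{-1}}{1-F(y)}-\frac{G(x)^{-1}}{1-F(x)}\right|,
\end{eqnarray*}
which vanishes by continuity of $G$ (and of $F$) at $x$. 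Your Bernstein-plus-Borel--Cantelli argument instead re-proves a Devroye--Wagner-type result directly for the weighted sum; it is self-contained and explains where the peculiar condition $\sum_{n}\exp(-\eta hn)<\infty$ comes from, at the cost of one step you glossed over: at a mere Lebesgue point $\lambda$ need not be locally bounded, so the claim $\mathrm{Var}(Z_{ni})=O(1/h)$ must be deduced from $\int_{x-ah}^{x+ah}\lambda(t)\,dt=O(h)$ (itself a consequence of the Lebesgue-point property), not from boundedness of $\lambda$ near $x$. One caveat you share with the paper: both proofs Taylor-expand $\lambda$ itself to obtain the $O(h^{2})$ bias, whereas the literal hypothesis of (ii) concerns $G^{-1}\lambda/(1-F)$ (which is what the variance computation needs); twice differentiability of $\lambda$ around $x$ is being used implicitly, exactly as it is made explicit in assumption (A3) of Theorem \ref{Theorem2}.
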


\textit{Proof}. See the Appendix.

\bigskip
The asymptotic mean and variance of (\ref{aekde}) are given in the
following result. We refer to the following standard regularity
assumptions.

\begin{itemize}

\item [](A1) The kernel function $K$ is a density function with $\int tK(t)dt=0$, $\mu _{2}(K)=\int t^{2}K(t)dt<\infty $, and $R(K)=\int
K(t)^{2}dt<\infty $.

\item [](A2) The sequence of bandwidths $h=h_{n}$ satisfies $h\rightarrow 0$ and $nh\rightarrow \infty $ as $n\rightarrow \infty $.

\item [](A3) The functions $\lambda$ and $\frac{G^{-1}}{1-F}\lambda$ are twice continuously differentiable around $x$.
\end{itemize}

\begin{thm}Under (A1)-(A3) we have, as $n\rightarrow \infty
$,
\begin{eqnarray*}
E\left[ \overline{\lambda}_{h}(x)\right] =\lambda(x)+\frac{1}{2}h^{2}\lambda^{\prime
\prime }(x)\mu _{2}(K)+o(h^{2}),
\end{eqnarray*}
\begin{eqnarray*}
Var\left[ \overline{\lambda}_{h}(x)\right] =(nh)^{-1}\alpha
\frac{G(x)^{-1}}{1-F(x)}\lambda(x)R(K)+o((nh)^{-1}).
\end{eqnarray*}
\label{Theorem2}
\end{thm}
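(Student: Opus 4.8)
The plan is to establish Theorem~\ref{Theorem2} by directly computing the expectation and variance of the kernel sum $\overline{\lambda}_{h}(x)$ in~(\ref{aekde}), treating it as a scaled sum of i.i.d. terms and applying standard Taylor-expansion arguments from kernel smoothing theory \cite{Wand95}. Writing $\overline{\lambda}_{h}(x)=\frac{\alpha}{n}\sum_{i=1}^{n}K_{h}(x-X_{i})w(X_{i})$ with $w(t)=G(t)^{-1}/(1-F(t^{-}))$, the summands are i.i.d. under the conditional (observed) distribution, so both moments reduce to single-integral computations against the density $f^{*}$ of the observed $X_i$.

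First I would compute the expectation. The key identity is that $\alpha\, w(t)\, f^{*}(t)$ recovers the untruncated quantities: since $f^{*}(t)=\alpha^{-1}G(t)f(t)$ under the Efron--Petrosian representation, one obtains $\alpha\, w(t)\, f^{*}(t)=f(t)/(1-F(t^{-}))=\lambda(t)$. Hence
\begin{eqnarray*}
E\left[\overline{\lambda}_{h}(x)\right]=\alpha\int K_{h}(x-t)w(t)f^{*}(t)\,dt=\int K_{h}(x-t)\lambda(t)\,dt.
\end{eqnarray*}
A change of variables $t=x-hu$ together with a second-order Taylor expansion of $\lambda$ around $x$, invoking the moment conditions $\int K=1$, $\int uK(u)\,du=0$ and $\mu_2(K)<\infty$ from (A1) and the smoothness from (A3), yields the stated bias $\lambda(x)+\tfrac12 h^{2}\lambda^{\prime\prime}(x)\mu_2(K)+o(h^{2})$.

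Next I would compute the variance. By independence, $\mathrm{Var}[\overline{\lambda}_{h}(x)]=\frac{\alpha^{2}}{n}\mathrm{Var}\!\left[K_{h}(x-X_{1})w(X_{1})\right]$. The dominant contribution comes from the second moment term $\frac{\alpha^{2}}{n}E[K_{h}(x-X_{1})^{2}w(X_{1})^{2}]$, since the squared first moment is $O(n^{-1})$ and therefore negligible relative to the $(nh)^{-1}$ rate. Using $K_{h}(u)^{2}=h^{-1}K_{h}(u)\cdot h\,K(u/h)h^{-1}$, more precisely writing $K_{h}(x-t)^{2}=h^{-1}(x-t\mapsto)$ and changing variables, the leading term becomes
\begin{eqnarray*}
\frac{\alpha^{2}}{nh}\int K(u)^{2}\,w(x-hu)^{2}f^{*}(x-hu)\,du
\to \frac{\alpha^{2}}{nh}w(x)^{2}f^{*}(x)R(K),
\end{eqnarray*}
and substituting $\alpha\,w(x)f^{*}(x)=\lambda(x)$ together with one remaining factor $\alpha\,w(x)=\alpha\,G(x)^{-1}/(1-F(x))$ collapses this to $(nh)^{-1}\alpha\,\frac{G(x)^{-1}}{1-F(x)}\lambda(x)R(K)+o((nh)^{-1})$, as claimed. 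The smoothness of $w^{2}f^{*}=\frac{G^{-1}}{1-F}\lambda\cdot(\text{bounded factor})$ assumed in (A3) justifies passing the limit inside via dominated convergence.

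The main obstacle is bookkeeping rather than conceptual: one must handle the one-sided limit $F(x^{-})$ versus $F(x)$ carefully and verify that the substitution $\alpha\,w\,f^{*}=\lambda$ is legitimate at the relevant points, which requires the continuity of $F$ (so $F(x^{-})=F(x)$) and the positivity $G(x)>0$ standing throughout the section; additionally, one must confirm that the cross term and the squared-mean term in the variance decomposition are genuinely of smaller order $O(n^{-1})=o((nh)^{-1})$, which follows from (A2) since $nh\to\infty$. I expect no serious difficulty beyond ensuring the regularity in (A3) is exactly what is needed to control the remainder terms in both Taylor expansions uniformly in a neighbourhood of $x$.
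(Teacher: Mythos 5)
Correct, and essentially the paper's own approach: the paper's proof is only a sketch (``standard steps, see \cite{Wand95}; a second-order Taylor expansion of $\lambda$, respectively of $G^{-1}\lambda/(1-F)$, around $x$''), and your computation---the i.i.d.\ reduction to integrals against the observed density $f^{*}$, the identity $\alpha\,w(t)f^{*}(t)=\lambda(t)$ coming from $f^{*}=\alpha^{-1}Gf$, and the change-of-variables/Taylor expansions that produce $\lambda^{\prime\prime}(x)$ in the bias and $\alpha\,G(x)^{-1}\lambda(x)/(1-F(x))$ (i.e.\ the value at $x$ of the function $\alpha G^{-1}\lambda/(1-F)=\alpha^{2}w^{2}f^{*}$ named in the paper's sketch) in the variance---is precisely the omitted detail. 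Two trivial blemishes, neither a gap: the intermediate line rewriting $K_{h}(x-t)^{2}$ before your variance display is garbled (the display itself is right), and the negligibility $O(n^{-1})=o\bigl((nh)^{-1}\bigr)$ of the squared-mean term follows from $h\to 0$, not from $nh\to\infty$ as you state.
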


\begin{proof}

The proof follows standard steps, see e.g. \cite{Wand95}. A second-order
Taylor expansion of $\lambda$, respectively of $G^{-1}\lambda/(1-F))$,  around $x$ is used, and the assumptions on
the kernel and
the bandwidth are enough to conclude. Details are omitted.
\end{proof}

\bigskip

From Theorem \ref{Theorem2} it can be seen that the asymptotic variance of $%
\overline{\lambda}_{h}(x)$ is affected by the double truncation issue, while the asymptotic bias is that of the complete data case (no truncation). The variance can be smaller or larger than the one obtained without truncation (constant $G$) depending on the particular $x$ value. This is intuitive, since the sampling bias due to the double truncation may result in an oversampling of certain lifetime values, while other may be undersampled.\\

The global error of $\overline{\lambda}%
_{h}$ can be measured
through the mean integrated squared error (MISE), namely
\begin{eqnarray*}
MISE(\overline{\lambda}_{h})=\int MSE(\overline{\lambda}_{h}(x))dx,
\end{eqnarray*}
where
\begin{eqnarray*}
MSE(\overline{\lambda}_{h}(x))=\left[ E[\overline{\lambda}_{h}(x)]-\lambda(x)\right]
^{2}+Var\left( \overline{\lambda}_{h}(x)\right) .
\end{eqnarray*}
Under regularity, the following
asymptotic expression for the $MISE(\overline{\lambda}_{h})$ is immediately derived from the previous results:
\begin{eqnarray*}
AMISE(\overline{\lambda}_{h})=\frac{1}{4}h^{4}R\left( \lambda^{\prime \prime
}\right) \mu _{2}(K)^{2}+(nh)^{-1}\alpha R(K)\int \frac{G^{-1}}{1-F}\lambda
\end{eqnarray*}
where $R\left( \lambda^{\prime \prime }\right) =\int \left( \lambda^{\prime
\prime }\right) ^{2}$. Minimization of $AMISE(\overline{\lambda}_{h})$ w.r.t. $h$
leads to the asymptotically optimal bandwidth
\begin{eqnarray*}
h_{AMISE}=\left[ \frac{\alpha R(K)\int \frac{G^{-1}}{1-F}\lambda} {R\left( \lambda^{\prime
\prime }\right) \mu _{2}(K)^{2}}\right] ^{1/5}n^{-1/5}.
\end{eqnarray*}
Of course, this expression depends on unknown quantities that must
be estimated in practice. There exist several criteria to select the
bandwidth from the data at hand. In the Appendix, a least-squares cross-validation (LSCV) bandwidth selector is derived. The cross-validation bandwidth is used in the real data analyses of Section 5.




\section{Simulations}
\label{Section4}

In this section we illustrate the finite sample behavior of the purely nonparametric estimator $\lambda_h(x)$ and the
semiparametric estimator $\lambda_{\widehat \theta, h}(x)$ through simulations. We analyze the
influence of the bandwidth in the MISE of the estimators, and we measure the amount of efficiency
which is gained by using  the semiparametric information.

\bigskip

We simulate $U^*$ independently of $X^*$ and then we take $V^*=U^*
+ \tau$ for some constant $\tau >0$. Such scenario represents interval sampling, and follows the spirit
of the two real data examples presented in Section 5, with terminating events (acute coronary syndrome or AIDS diagnosis) falling between two specific dates. Different models
are simulated.
We take $\tau=0.25$ and $U^* \sim U(0,1)$, $X^* \sim 0.75 Beta(3/4,1)+0.25$ for Model 1 and $U^* \sim U(0,1)$, $X^* \sim 0.75N(0.5,0.15)+0.25$ for Model 2.
In Model 3 we take  $U^* \sim U(0.25,1)$, $X^* \sim 0.75 Beta(3/4,1)+0.25$ and $V^* = U^* + \tau$, considering a decreasing sequence of widths for the sampling interval: $\tau=0.25$, 0.15 and 0.10 (Model 3.1, 3.2, 3.3 respectively). As parametric information on the truncation distribution we take a $Beta(\theta _{1},1)$ model for $
U^{\ast }$. For each model, we simulate 1000 trials with final sample size $n=100$, 250, or 500.


The functions $G$ corresponding to the aforementioned models,  based on a Monte Carlo approximation from a single sample of size $n = 20,000$, are shown in Figure \ref{ObservationalBias}.
The depicted functions indicate that small  values of the variable of interest $(X^*)$ are observed with a relatively small probability in the last three models, while there is no observational bias in Models 1 and 2 ($G$ remains constant).

\begin{figure}[ht]
\begin{minipage}[b]{0.48\linewidth}
\includegraphics[width=\textwidth]{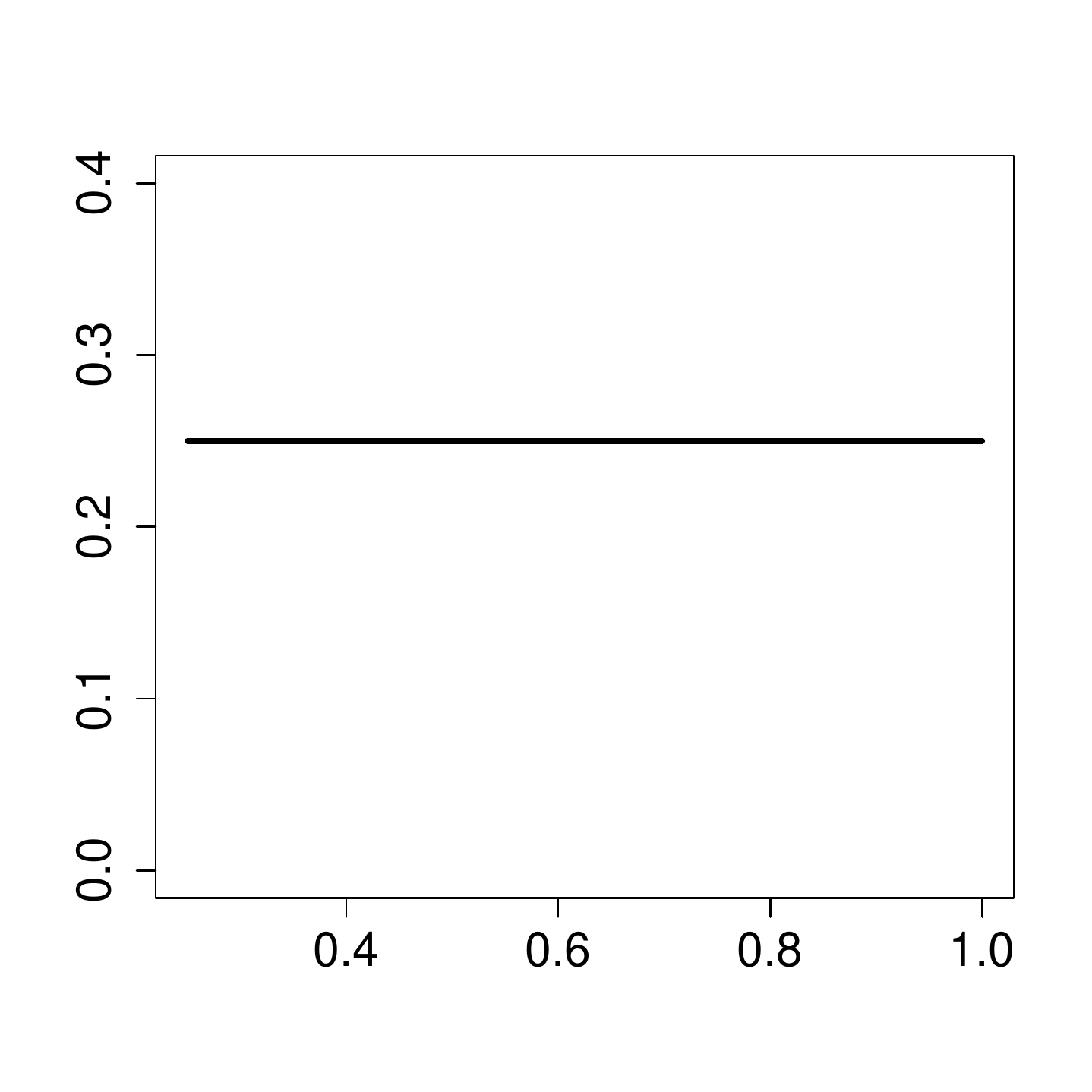}
\end{minipage}
\begin{minipage}[b]{0.48\linewidth}
\includegraphics[width=\textwidth]{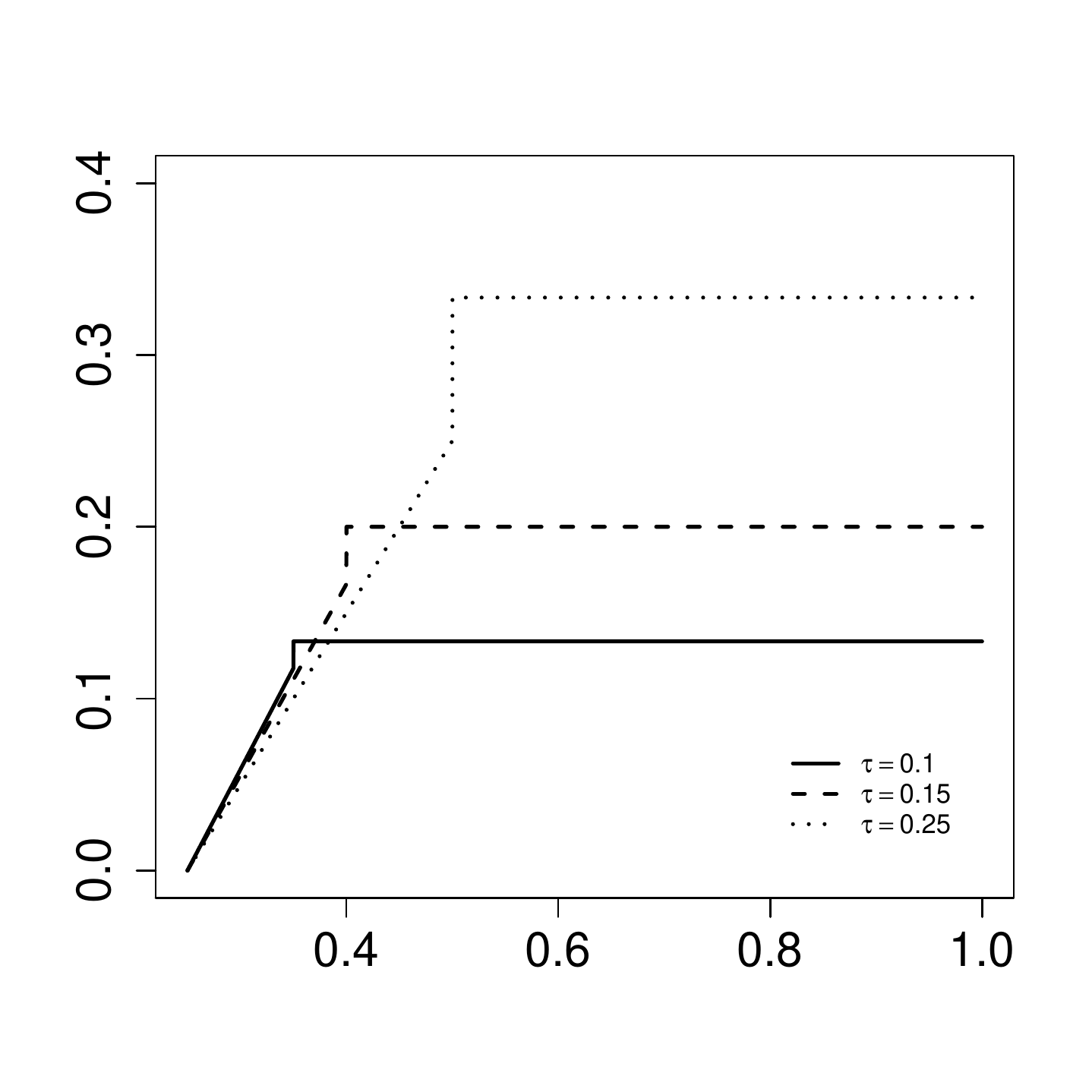}
\end{minipage}
\caption{Function $G$ for the simulated models. Left: Models 1 and 2. Right: Model 3.1, 3.2 and 3.3.}
\label{ObservationalBias}
\end{figure}

\bigskip



In Table \ref{Table1} we report the optimal bandwidths (in the sense
of the MISE) and the corresponding minimum MISEs for the
nonparametric estimator and the semiparametric estimator. The theoretical
MISE function is approximated by the average of the integrated squared error (ISE) along the
$M=$1000 trials, namely

\bigskip

$\overline{ISE}$($\lambda_h)=\dfrac 1 M \displaystyle \sum_{m=1}^M \int
\left(\lambda_h^m-\lambda \right)^2$\hspace{0.2 cm},\\
\hspace{0.2 cm}$\overline{ISE}$($\lambda_{\widehat \theta,h})=\dfrac 1 M \displaystyle \sum_{m=1}^M \int \left(\lambda_{\widehat \theta,h}^m-\lambda\right)^2$,\\


\noindent where $\lambda_h^m$ and $\lambda_{\widehat \theta,h}^m$ are respectively the
nonparametric and the semiparametric estimators when based on the
$m$-th Monte Carlo trial.

\bigskip

From Table \ref{Table1} it is seen that the optimal bandwidths and
the MISEs decrease when increasing the sample size; besides, the
semiparametric estimator has an error which is smaller than that
of the nonparametric estimator. It is also seen that the
optimal bandwidths for the semiparametric estimator are generally smaller than
those of the nonparametric estimator, according to the extra amount
of information. As the sample size grows, the relative efficiency of
the nonparametric estimator approaches to one; this is in agreement
to the asymptotic equivalence of the semiparametric and the
nonparametric hazard estimators discussed in Section 2. Results for Models 3.1 to 3.3 reveal that the MISE increases as the width of the sampling interval decreases, due to the large variance attached to a small observational window. The semiparametric estimator however behaves much better than the nonparametric estimator, offering moderate MISE values even in the extreme situation with $\tau=0.1$.

\bigskip

In Table \ref{Table2} we report the biases and the variances of the nonparametric and semiparametric hazard estimators at some selected time points, corresponding to the quartiles of $\lambda$, for sample sizes n=100, 500, along the 1,000 Monte Carlo trials. It is seen that the squared bias is always of a smaller order when compared to the variance, so the resulting mean squared errors (MSEs) are mainly determined by the dispersion of the estimates. For all the cases, the local MSEs of the semiparametric estimator are smaller than those pertaining to the nonparametric estimator, as expected.


\begin{table}[h]
\centering
\caption {Optimal bandwidths ($h_{opt}$) and minimum MISEs of the
hazard estimators: nonparametric estimator ($NP$) and semiparametric
estimator ($SP$). Averages along 1000 trials of a sample size $n$.}
\label{Table1}
\begin{tabular}{c c c c c c}
Model& n& \multicolumn{2}{c}{$h_{opt}$}  & \multicolumn{2}{c}{$MISE(h_{opt})$}\\
&&NP&SP&NP&SP\\
\hline
&100&0.100&0.090&7.723&4.649\\
1 &250&0.020&0.020&3.432&2.381\\
&500&0.020&0.020&1.532&1.031\\
\hline
&100&0.091&0.078&9.392&4.713\\
2 &250&0.090&0.076&5.473&2.763\\
&500&0.070&0.070&2.483&1.754\\
\hline
&100&0.073&0.066&8.726&6.026\\
3.1 &250&0.014&0.014&7.788&5.489\\
&500&0.011&0.011&4.290&3.823\\
\hline
&100&0.093&0.068&15.968&10.759\\
3.2&250&0.014&0.014&11.595&9.925\\
&500&0.011&0.011&5.869&4.150\\
\hline
&100&0.18&0.066&28.651&10.808\\
3.3 &250&0.115&0.014&17.0180&10.043\\
&500&0.011&0.011&8.517&4.162\\
\hline
\end{tabular}
\end{table}

\begin{table}[h]
\centering
\caption {Biases ($ \times 10^3$) and variances ($ \times 10$) of the
nonparametric estimator ($NP$) and semiparametric
estimator ($SP$) at the quartiles of $F$, for sample sizes n=100, 500, along 1000 Monte Carlo trials.}
\label{Table2}
\begin{tabular}{c c c r r r r}
& && \multicolumn{2}{c}{$NP$}  & \multicolumn{2}{c}{$SP$}\\
\cline{4-7}
Model& n&x& Bias&Var  & Bias&Var\\
\hline
&&$q_{.25}$&$37.20$&$2.44$&$149.80$&$2.01$\\
&100&$q_{.50}$&$72.20$&$9.99$&$468.10$&$8.31$\\
&&$q_{.75}$&$110.30$&$37.95$&$155.60$&$32.75$\\
\cline{3-7}
1&&$q_{.25}$&$20.34$&$1.65$&$8.59$&$0.44$\\
&500&$q_{.50}$&$28.69$&$3.04$&$16.24$&$0.75$\\
&&$q_{.75}$&$72.09$&$6.58$&$63.28$&$3.06$\\
\hline
&&$q_{.25}$&$62.29$&$6.93$&$69.50$&$4.86$\\
&100&$q_{.50}$&$66.05$&$11.98$&$13.30$&$2.86$\\
&&$q_{.75}$&$726.60$&$44.92$&$325.00$&$16.95$\\
\cline{3-7}
2&&$q_{.25}$&$20.16$&$1.45$&$19.65$&$1.22$\\
&500&$q_{.50}$&$8.70$&$3.85$&$0.02$&$0.69$\\
&&$q_{.75}$&$38.00$&$14.02$&$25.50$&$5.37$\\
\hline
&&$q_{.25}$&$12.66$&$7.37$&$5.88$&$2.39$\\
&100&$q_{.50}$&$130.62$&$10.86$&$34.18$&$2.88$\\
&&$q_{.75}$&$299.28$&$24.22$&$159.63$&$11.04$\\
\cline{3-7}
3.1&&$q_{.25}$&$2.37$&$1.40$&$2.47$&$0.56$\\
&500&$q_{.50}$&$17.93$&$2.07$&$9.66$&$0.68$\\
&&$q_{.75}$&$67.40$&$5.15$&$67.18$&$2.94$\\
\hline
&&$q_{.25}$&$157.70$&$32.72$&$1792.10$&$1.47$\\
&100&$q_{.50}$&$435.79$&$39.31$&$959.42$&$3.53$\\
&&$q_{.75}$&$882.44$&$64.47$&$1070.40$&$15.54$\\
\cline{3-7}
3.2&&$q_{.25}$&16.050&38.89&115.50&0.17\\
&500&$q_{.50}$&$82.48$&$0.68$&$82.48$&$0.68$\\
&&$q_{.75}$&$188.52$&$3.26$&$188.52$&$3.26$\\
\hline
&&$q_{.25}$&$2731.90$&$34.94$&$1705.30$&$0.88$\\
&100&$q_{.50}$&$6355.11$&$41.19$&$947.58$&$3.43$\\
&&$q_{.75}$&$2796.60$&$37.41$&$2553.54$&$0.06$\\
\cline{3-7}
3.3&&$q_{.25}$&$32.41$&$8.87$&$46.79$&$0.25$\\
&500&$q_{.50}$&$114.02$&$12.00$&$50.42$&$57.34$\\
&&$q_{.75}$&$296.83$&$24.30$&$120.98$&$2.79$\\
\hline
\end{tabular}
\end{table}

In Figures \ref{Teoric_Model21} to \ref{Teoric_Model233} we report for each simulated model: (i) the target hazard function together with its semiparametric and nonparametric estimators averaged along the 1000 Monte Carlo trials (bottom row); and (ii) the ratio between the MISEs of the semiparametric and the nonparametric estimators along a grid of bandwidths (top row). From these Figures \ref{Teoric_Model21} to \ref{Teoric_Model233} several interesting features can be seen. First, for each given smoothing degree, the MISE of the semiparametric estimator is less than that of the nonparametric estimator; the relative benefits of using the semiparametric information are more clearly seen when working with relatively smaller bandwidths, when the variance component of the MISE is larger. This illustrates how the semiparametric estimator achieves a variance reduction w.r.t. the NPMLE.  Also importantly, we see that the ratios of the MISEs approach to one as the sample size increases.  This was expected, since (as discussed in Section 3) both estimators are asymptotically equivalent.


\begin{figure}[ht]
\begin{minipage}[b]{0.32\linewidth}
\includegraphics[width=\textwidth]{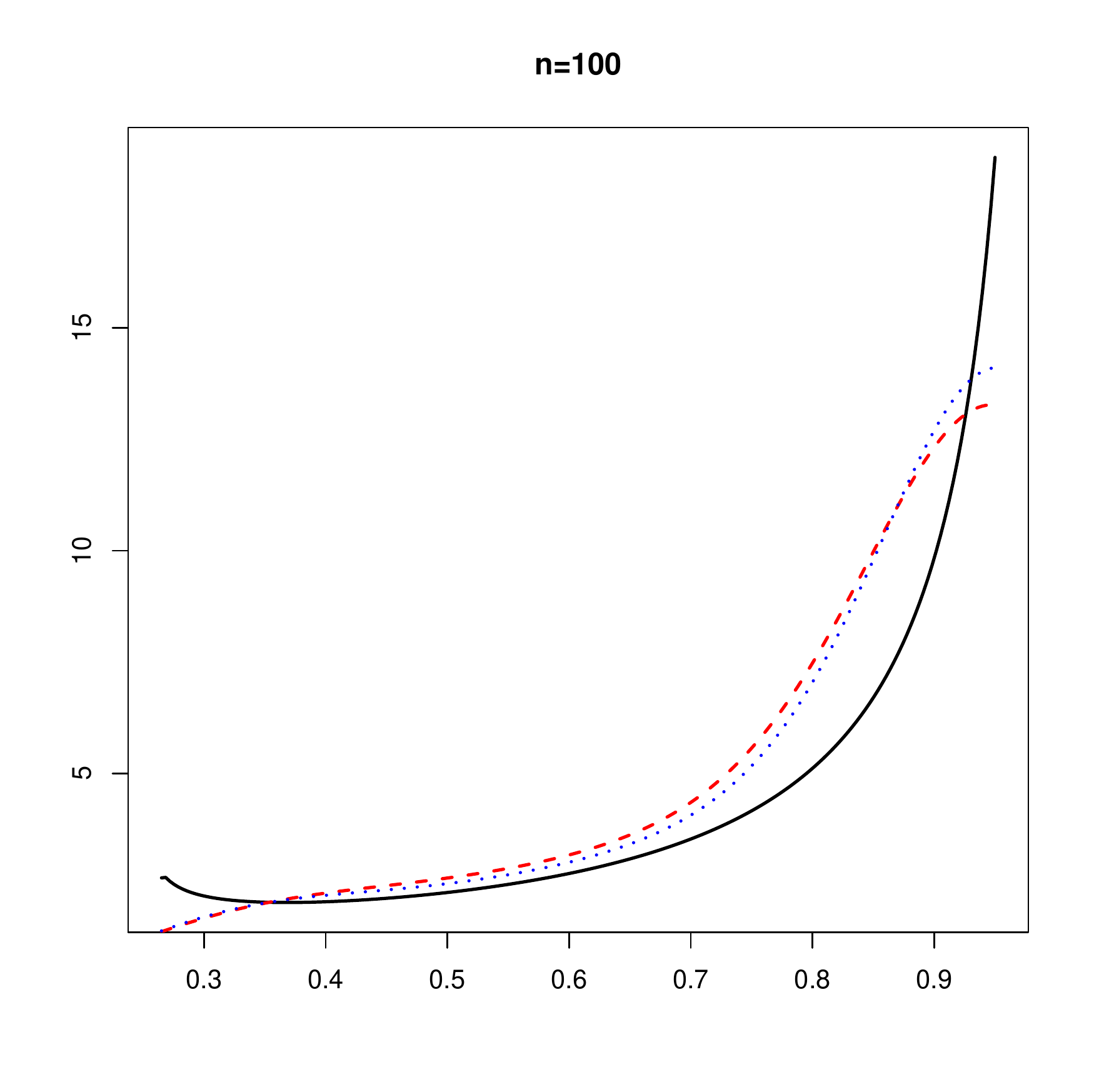}
\end{minipage} \hfill
\begin{minipage}[b]{0.32\linewidth}
\includegraphics[width=\textwidth]{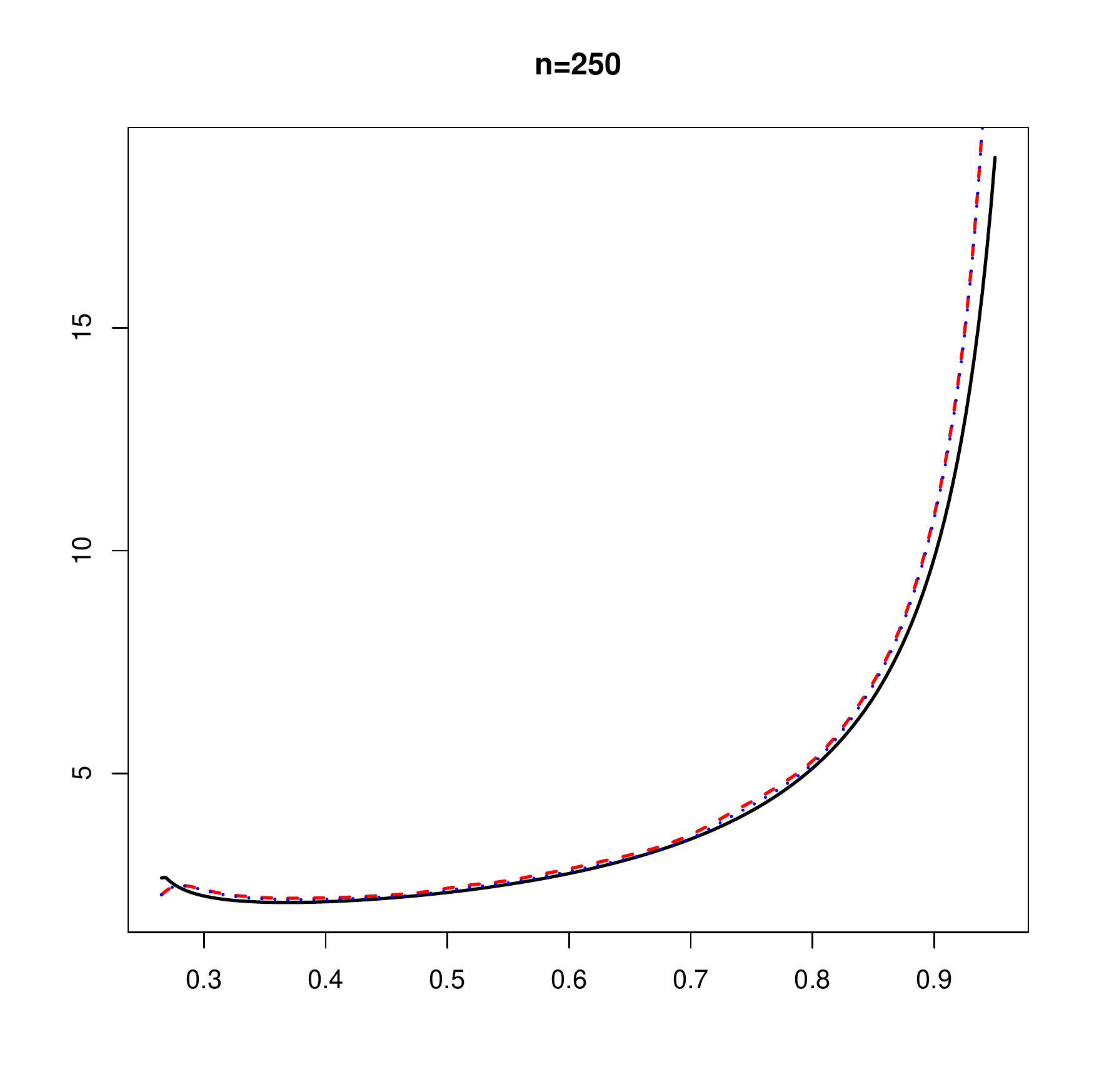}
\end{minipage} \hfill
\begin{minipage}[b]{0.32\linewidth}
\includegraphics[width=\textwidth]{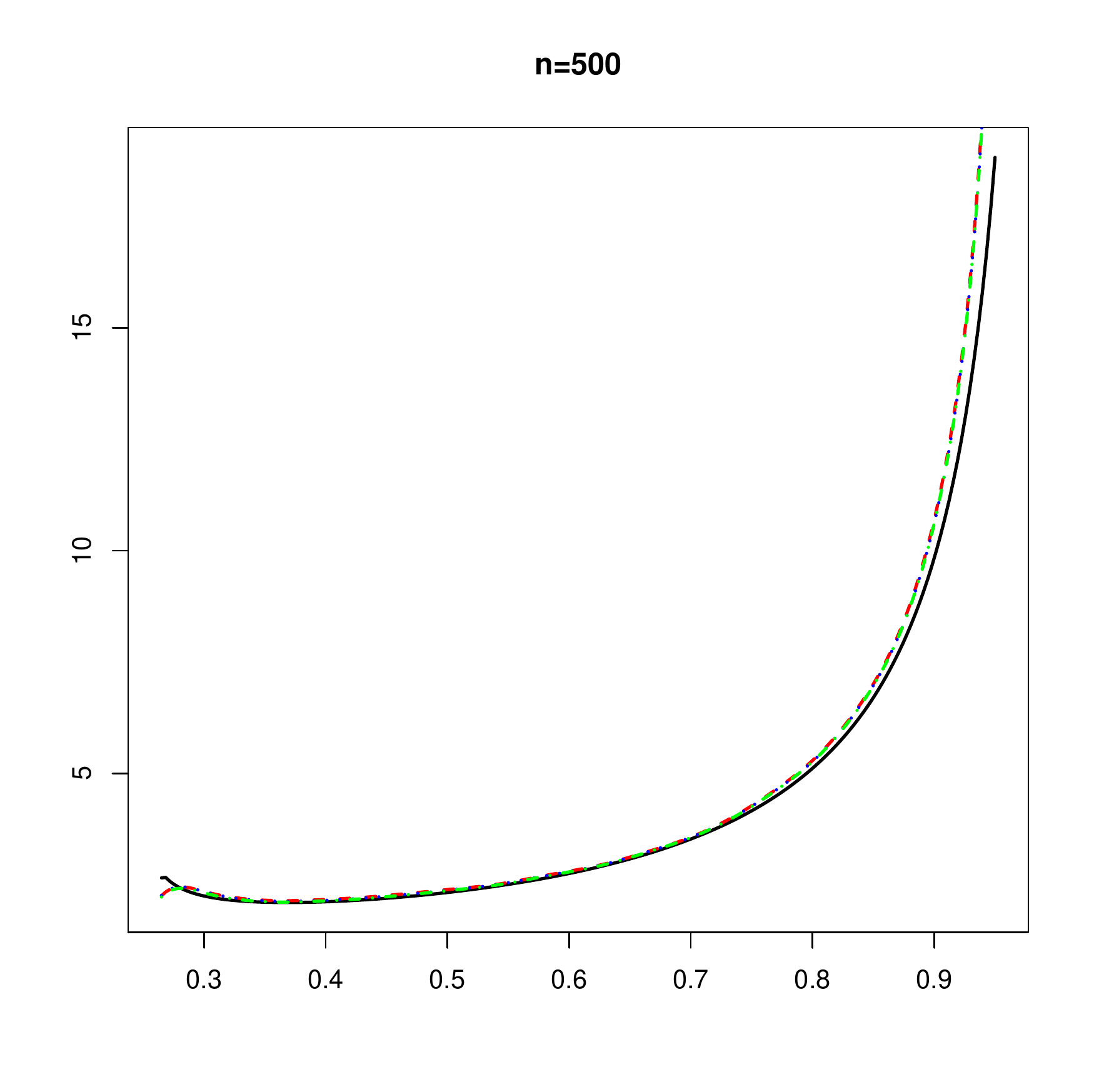}
\end{minipage}
\begin{minipage}[b]{0.32\linewidth}
\includegraphics[width=\textwidth]{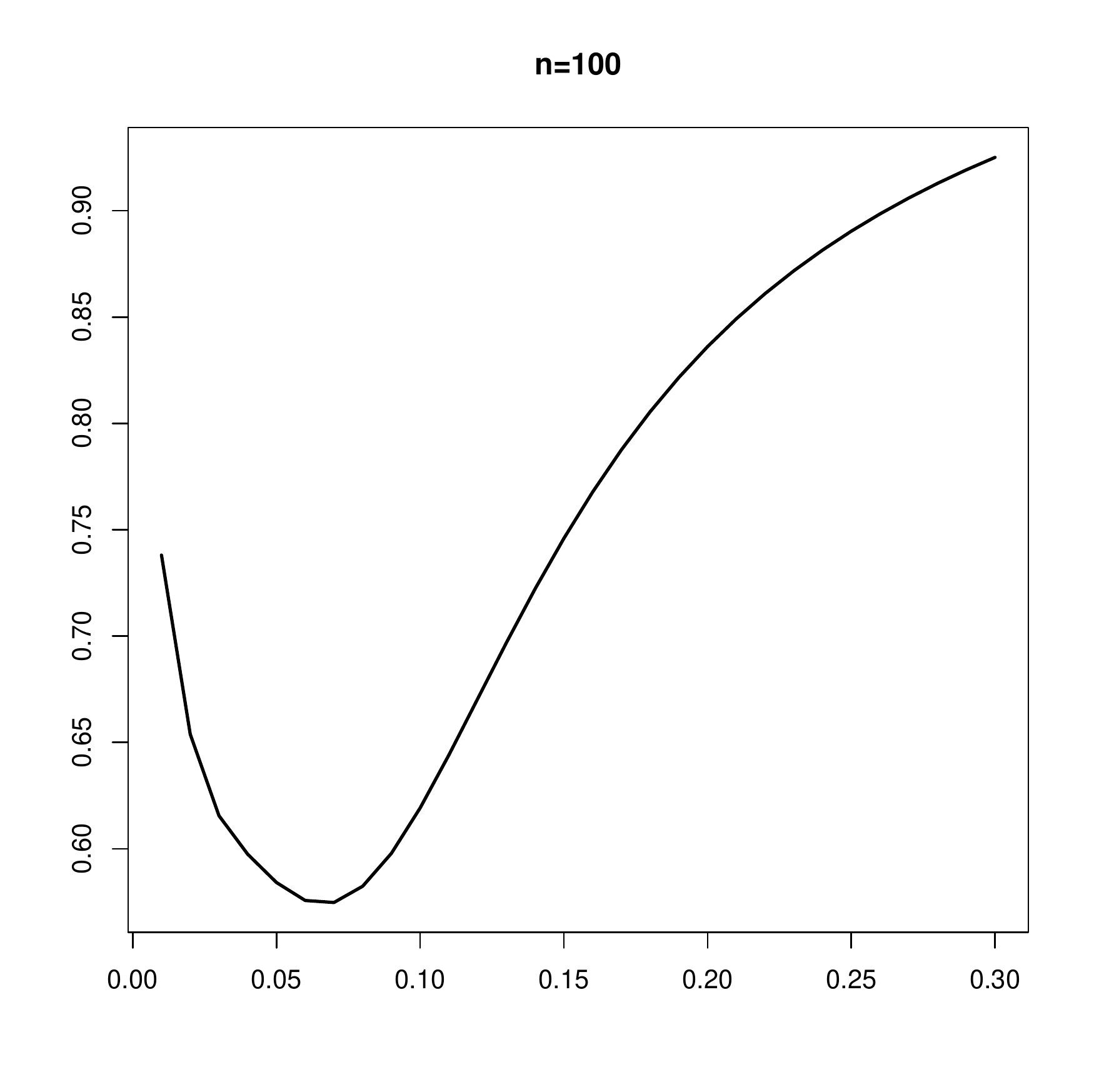}
\end{minipage} \hfill
\begin{minipage}[b]{0.32\linewidth}
\includegraphics[width=\textwidth]{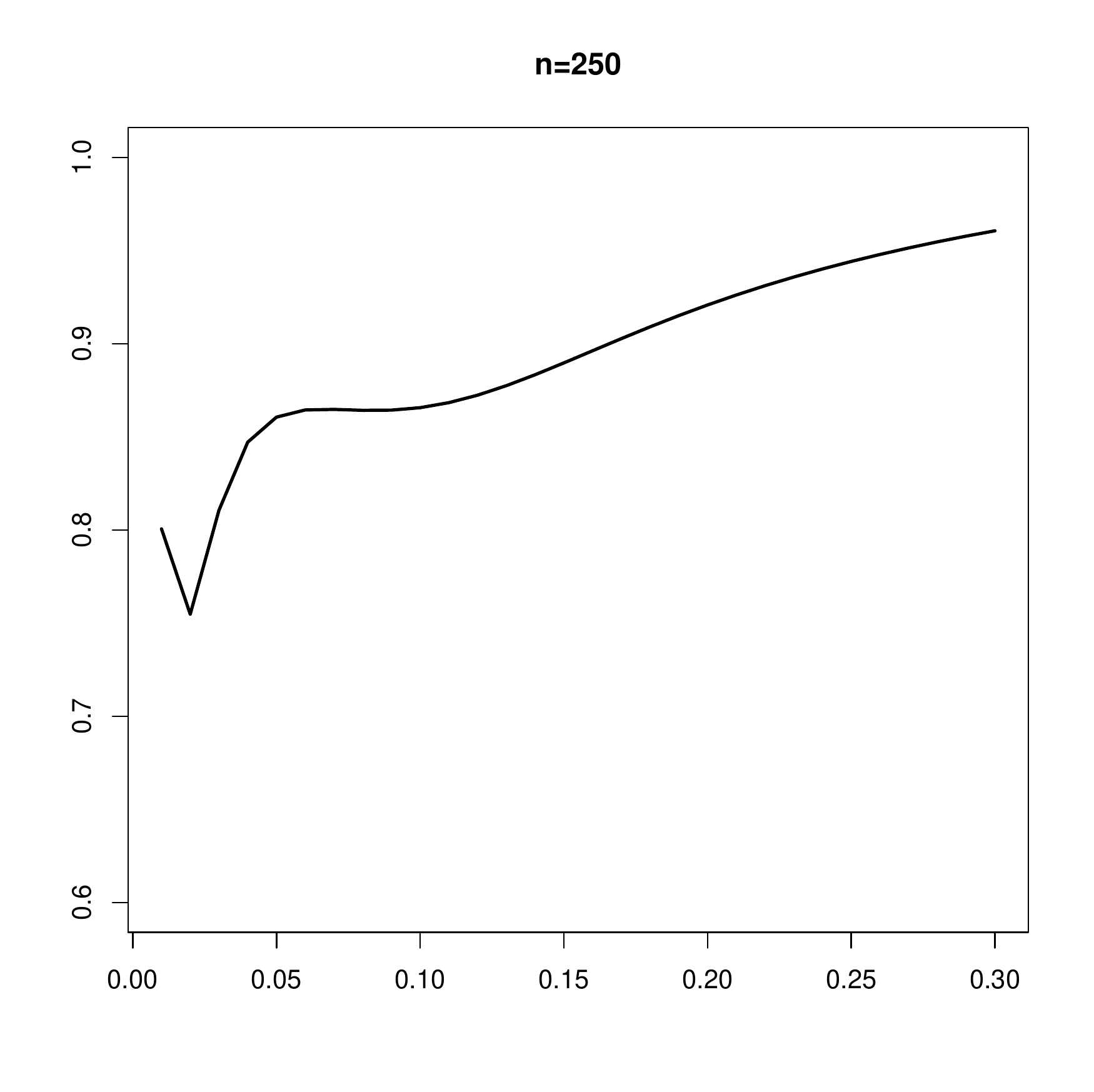}
\end{minipage} \hfill
\begin{minipage}[b]{0.32\linewidth}
\includegraphics[width=\textwidth]{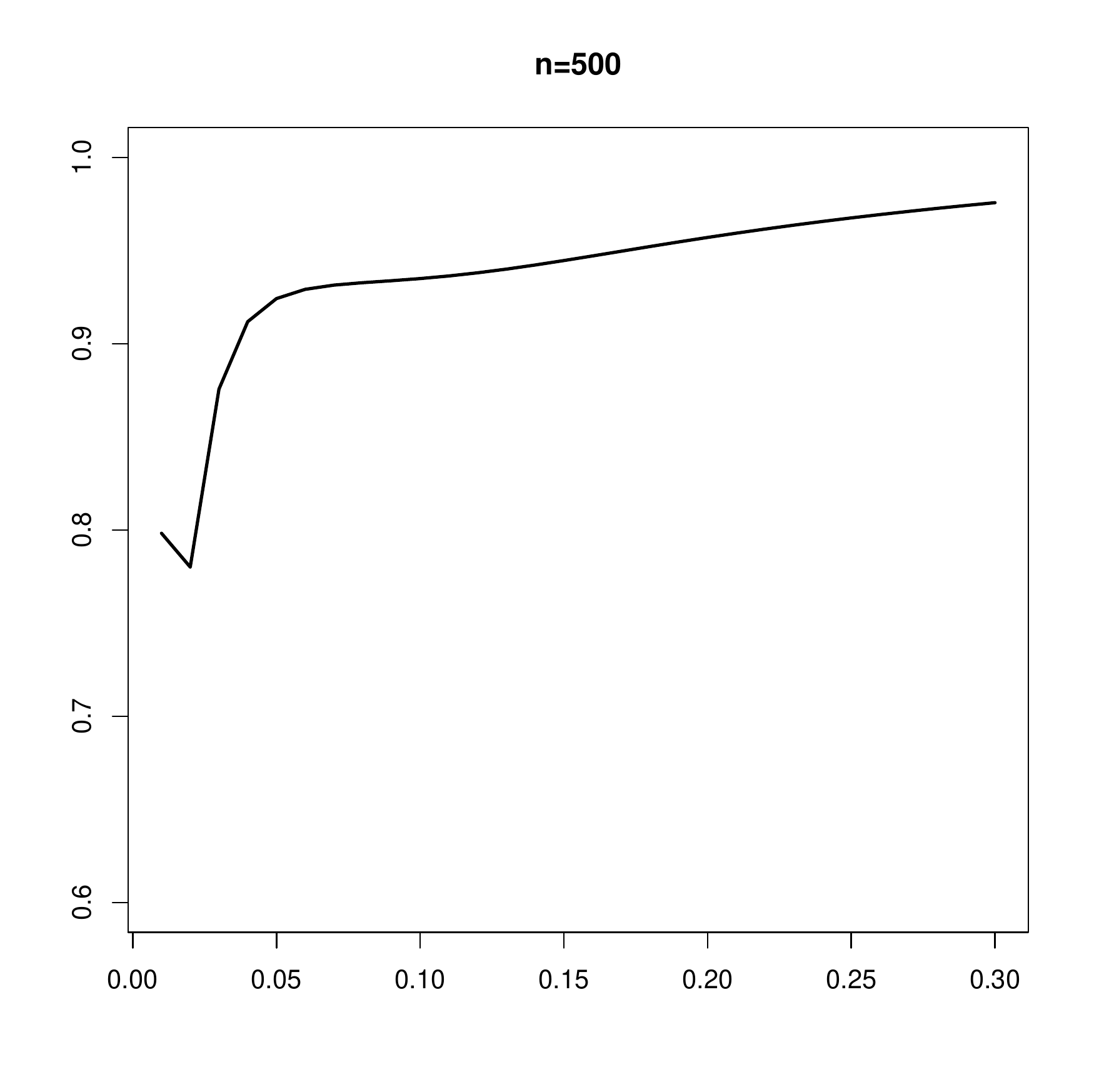}
\end{minipage}
\caption{(i) the target hazard (solid line) together with
its semiparametric (dashed line), nonparametric (dotted line) and naive (dashed-dotted line for n=500) estimators averaged along the
1000 Monte Carlo trials for Model 1 (top row); (ii) The ratio between the MISE's of the semiparametric and the nonparametric estimators along a grid of bandwidths (bottom row). }
\label{Teoric_Model21}
\end{figure}

\begin{figure}[ht]
\begin{minipage}[b]{0.32\linewidth}
\includegraphics[width=\textwidth]{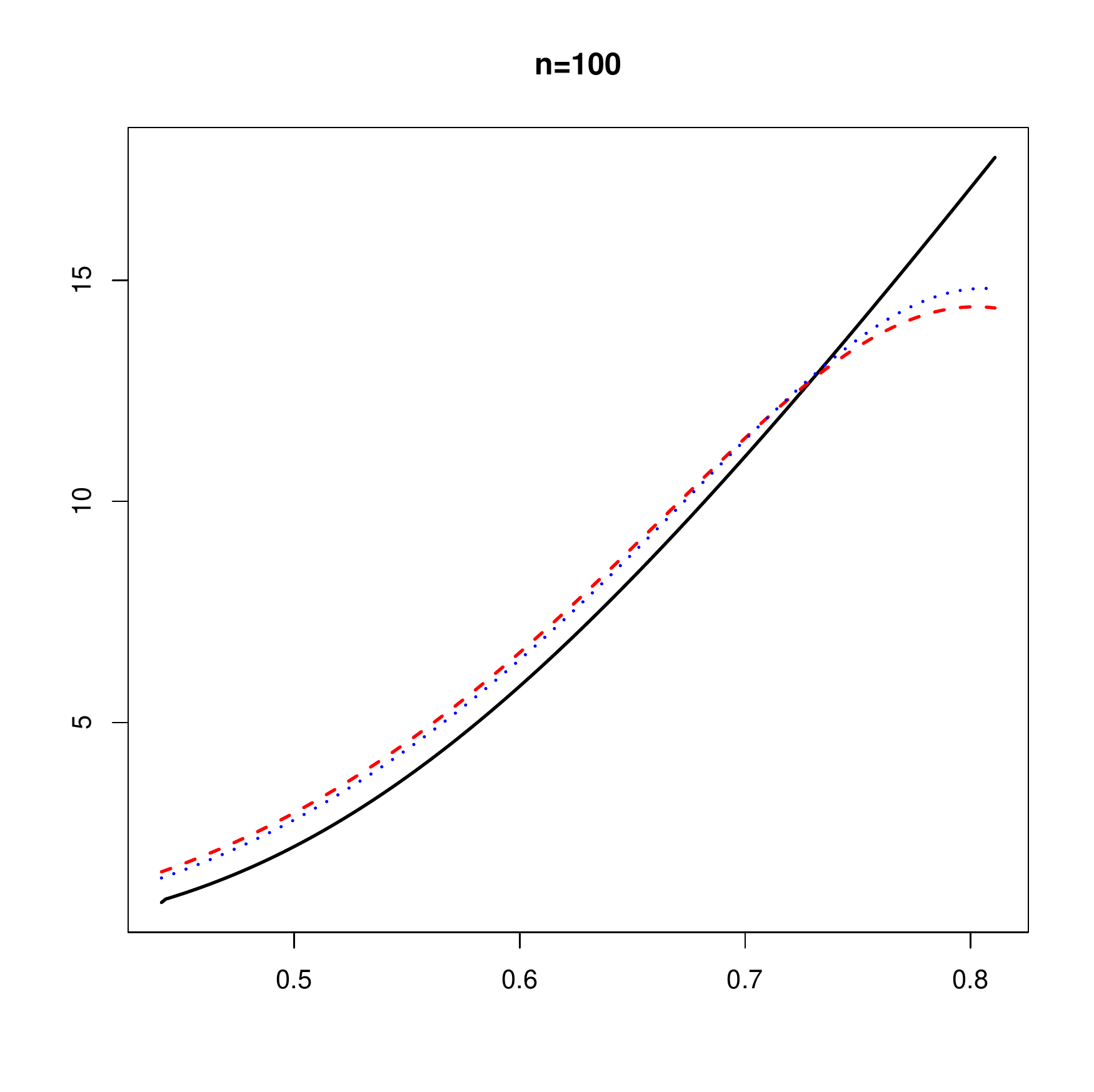}
\end{minipage} \hfill
\begin{minipage}[b]{0.32\linewidth}
\includegraphics[width=\textwidth]{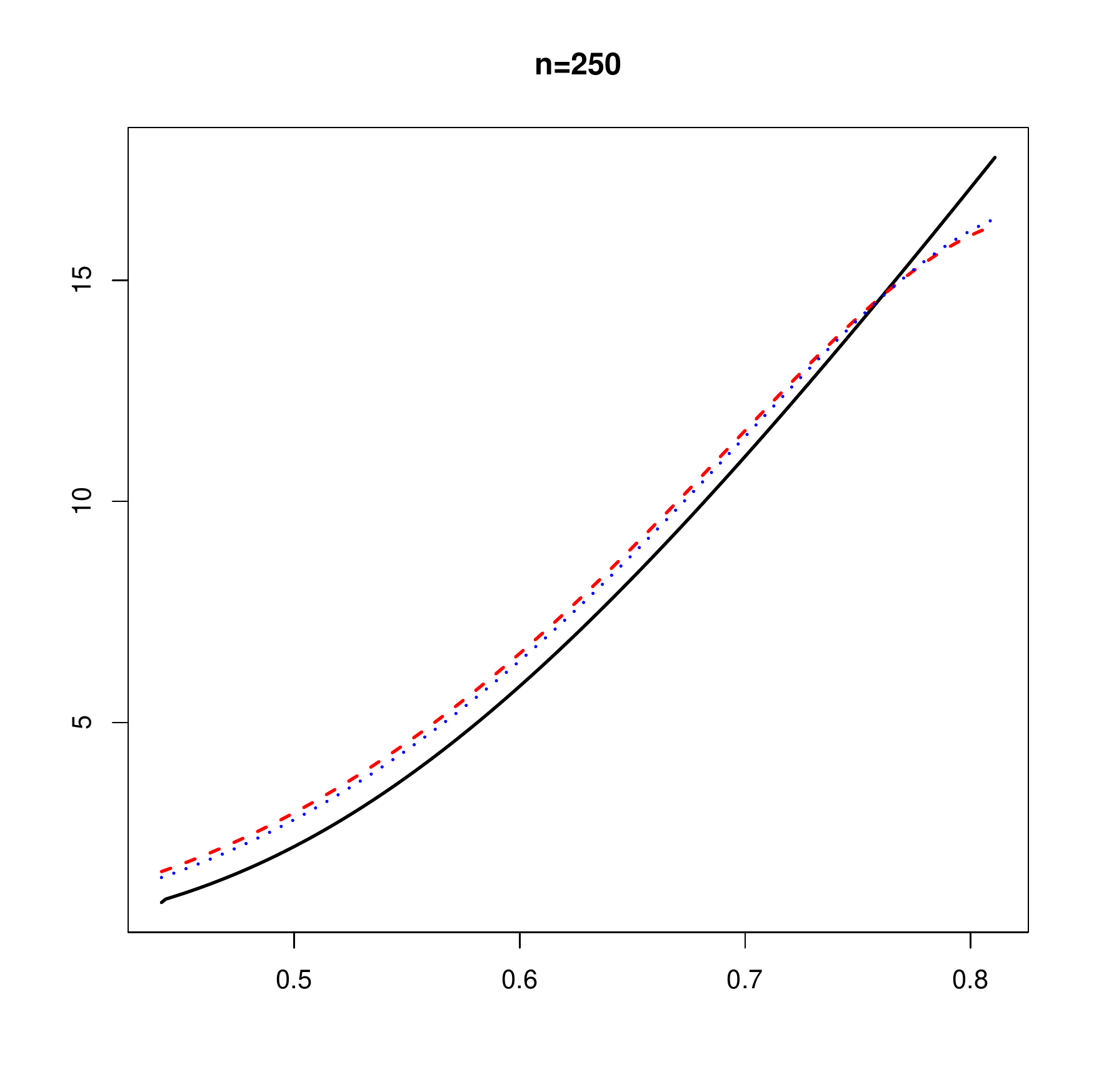}
\end{minipage} \hfill
\begin{minipage}[b]{0.32\linewidth}
\includegraphics[width=\textwidth]{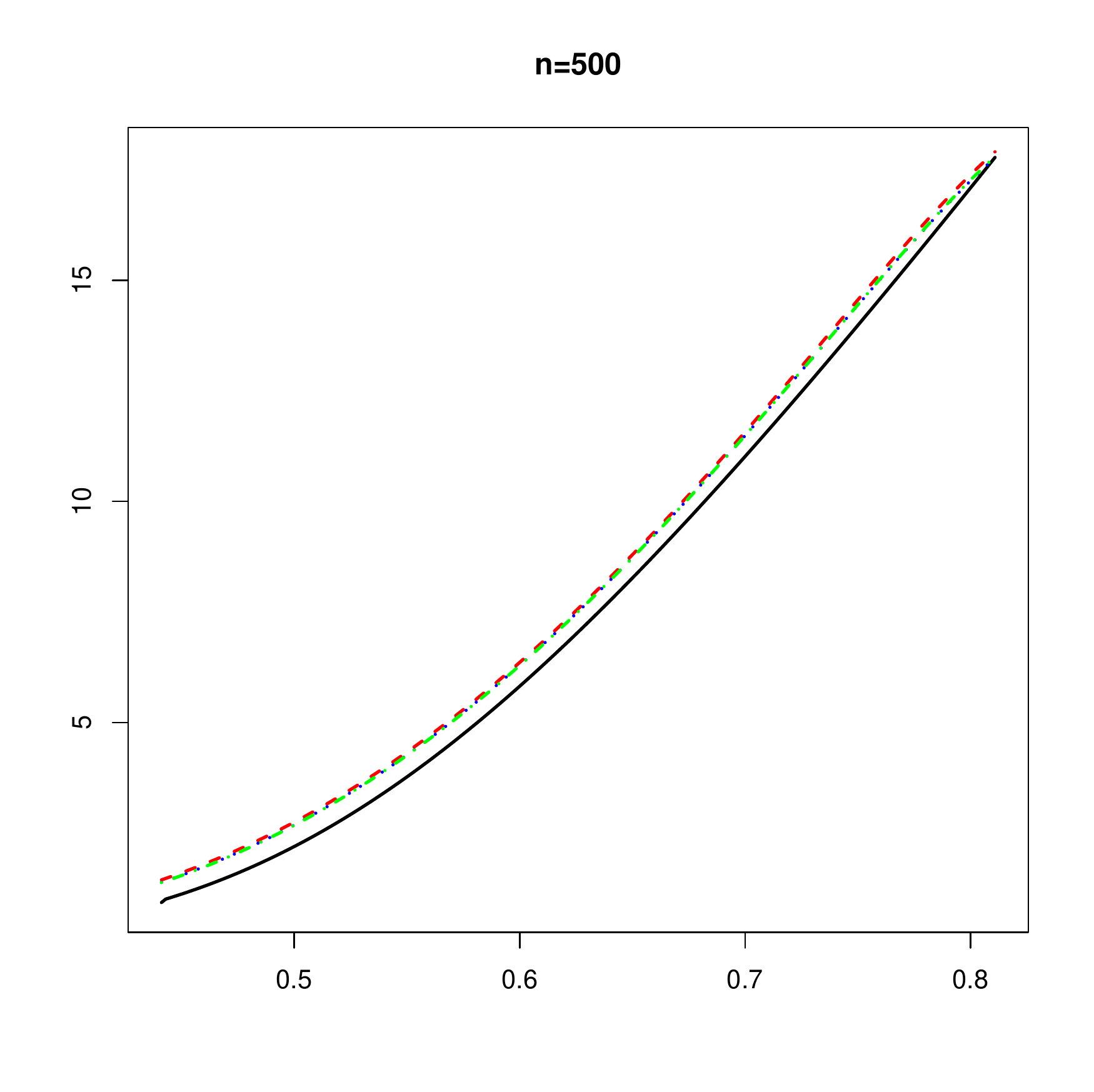}
\end{minipage}
\begin{minipage}[b]{0.32\linewidth}
\includegraphics[width=\textwidth]{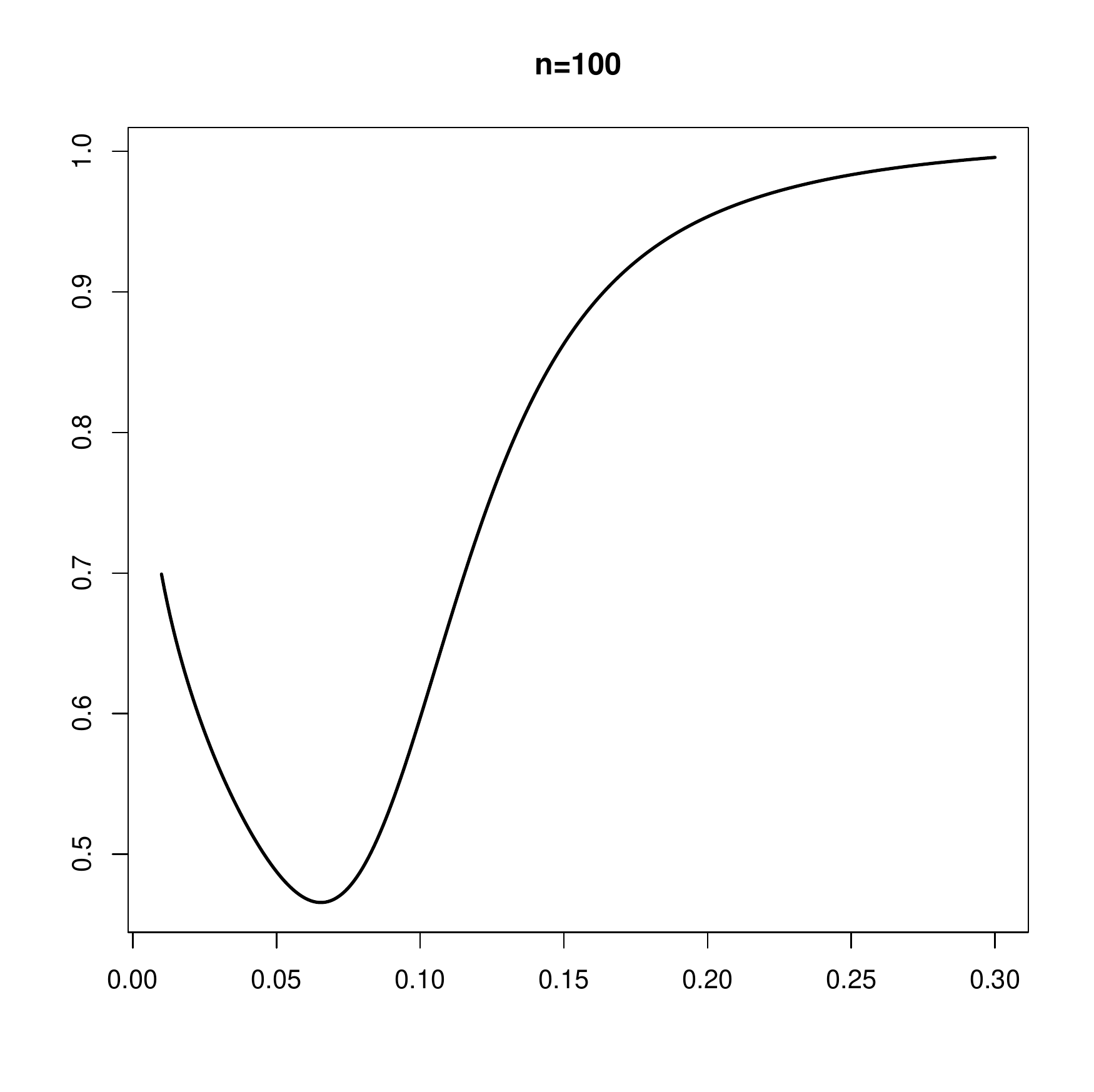}
\end{minipage} \hfill
\begin{minipage}[b]{0.32\linewidth}
\includegraphics[width=\textwidth]{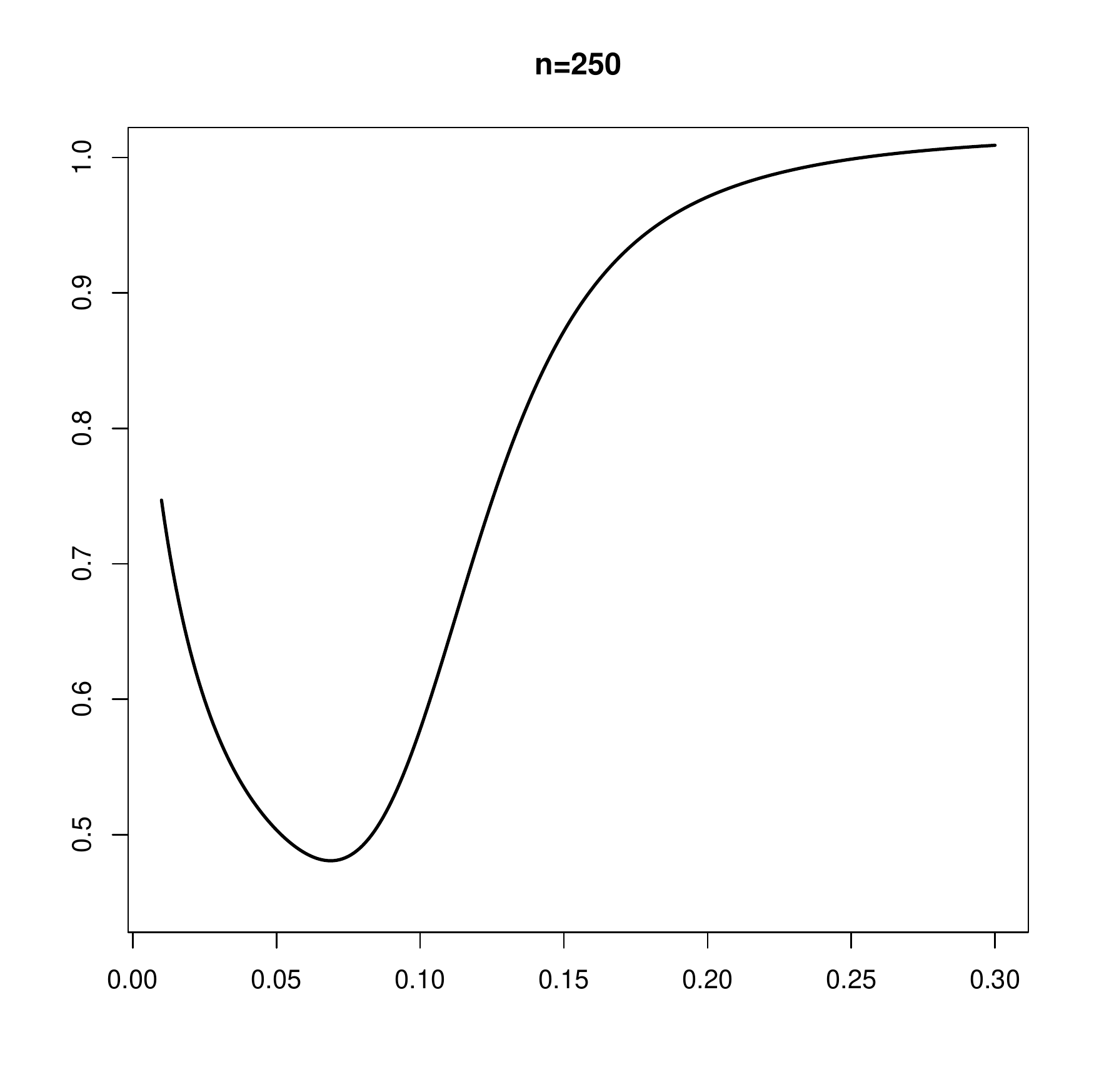}
\end{minipage} \hfill
\begin{minipage}[b]{0.32\linewidth}
\includegraphics[width=\textwidth]{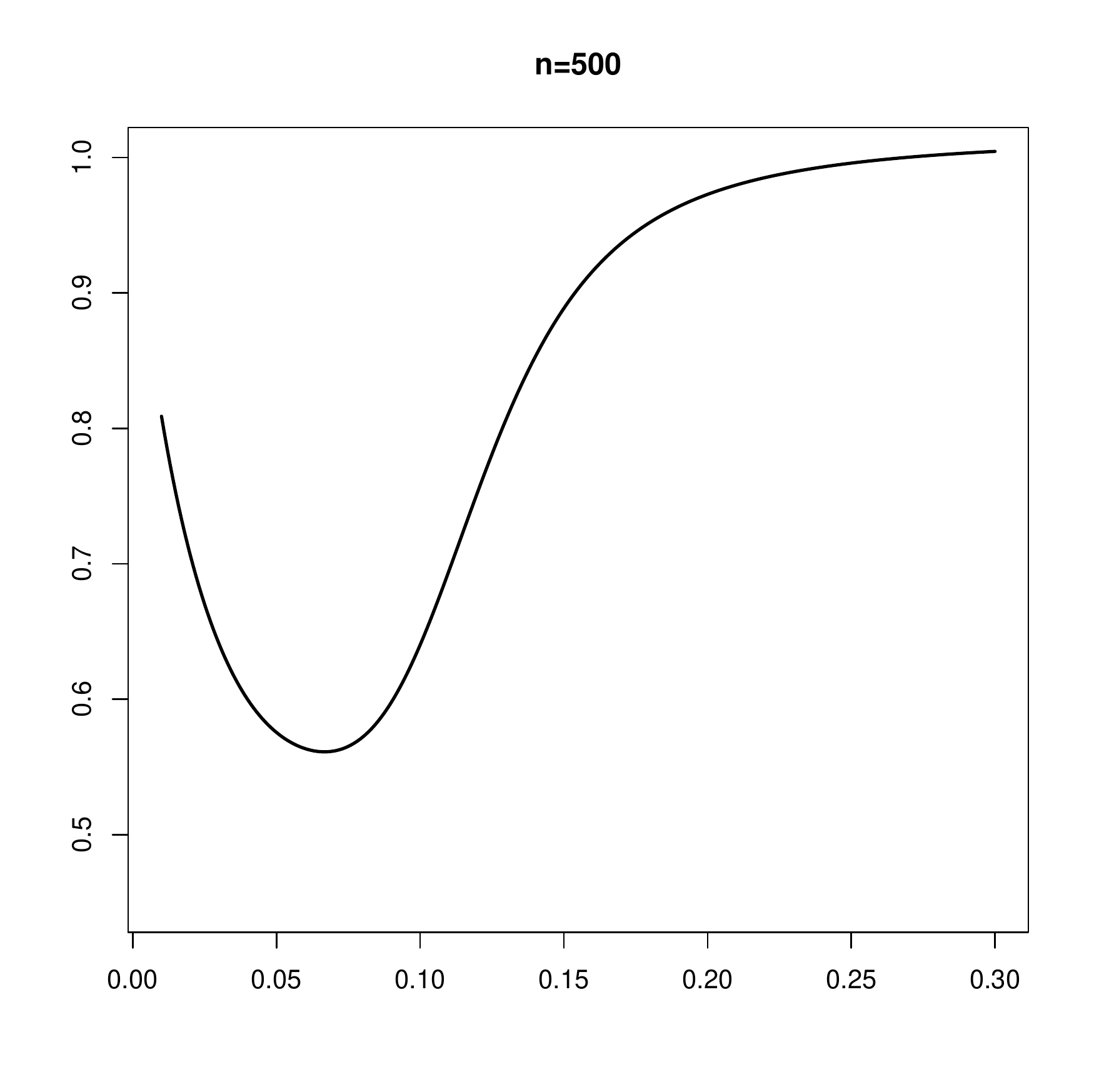}
\end{minipage}
\caption{(i) the target hazard (solid line) together with
its semiparametric (dashed line), nonparametric (dotted line) and naive (dashed-dotted line for n=500) estimators averaged along the
1000 Monte Carlo trials for Model 2 (top row); (ii) The ratio between the MISE's of the semiparametric and the nonparametric estimators along a grid of bandwidths (bottom row). }
\label{Teoric_Model22}
\end{figure}

\begin{figure}[ht]
\begin{minipage}[b]{0.32\linewidth}
\includegraphics[width=\textwidth]{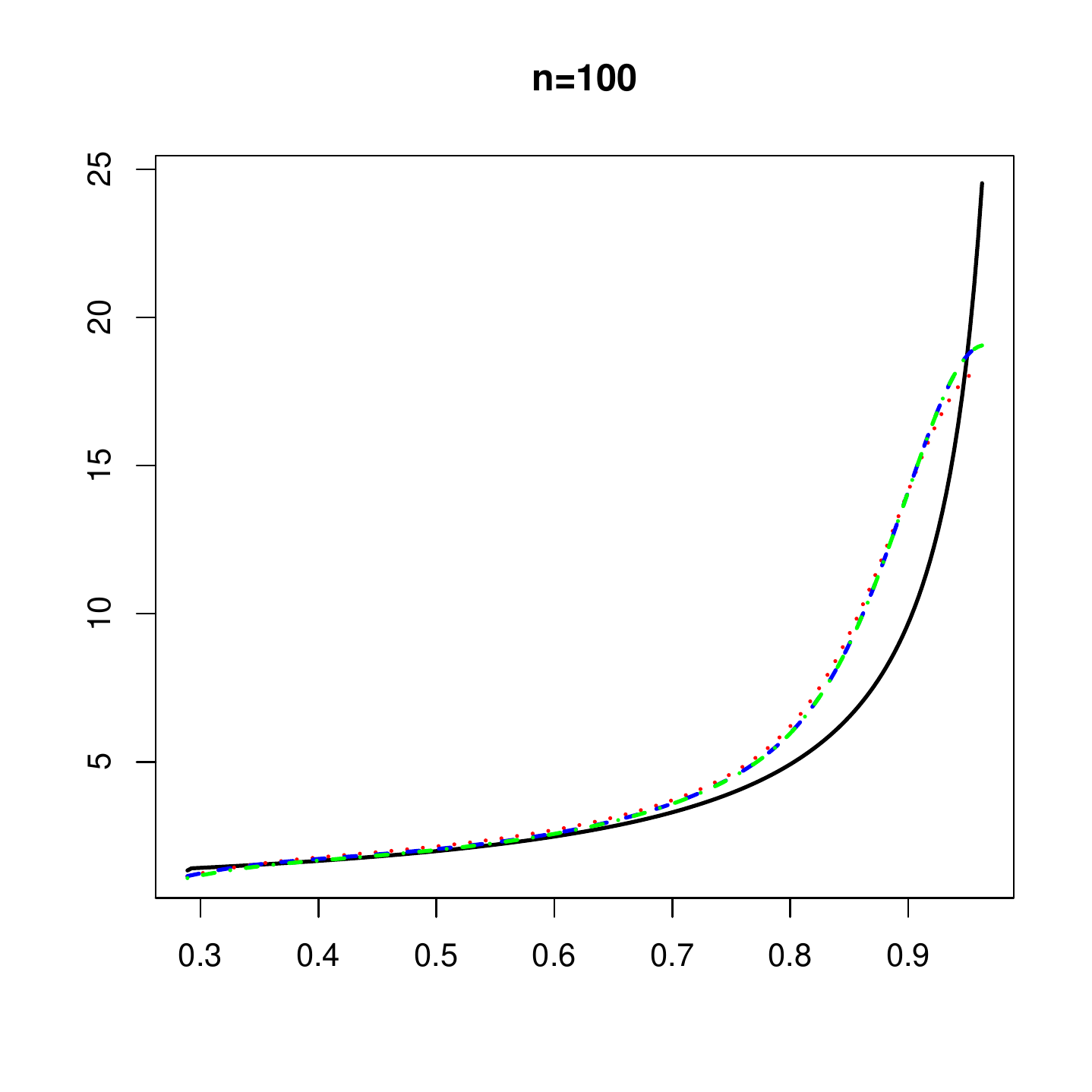}
\end{minipage} \hfill
\begin{minipage}[b]{0.32\linewidth}
\includegraphics[width=\textwidth]{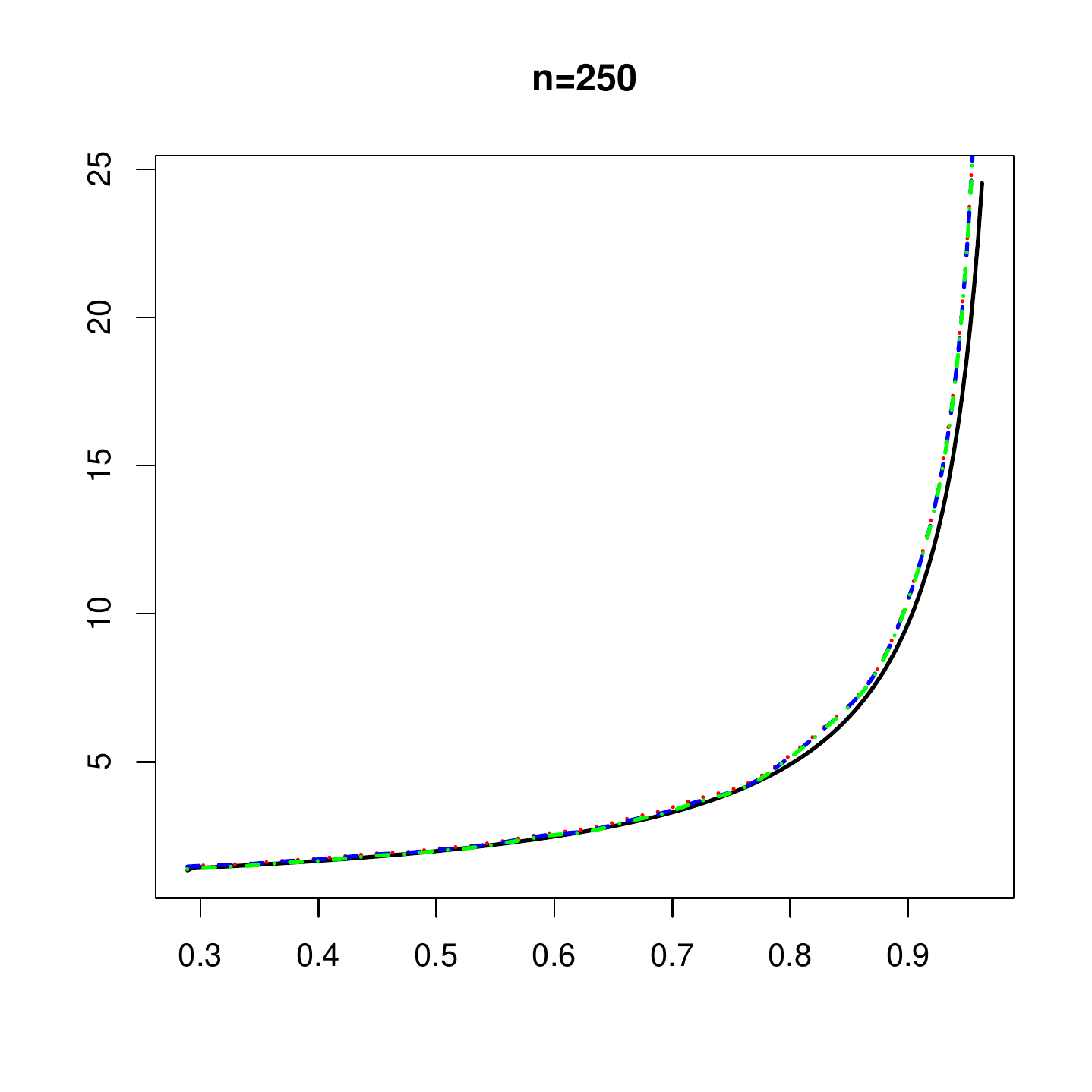}
\end{minipage} \hfill
\begin{minipage}[b]{0.32\linewidth}
\includegraphics[width=\textwidth]{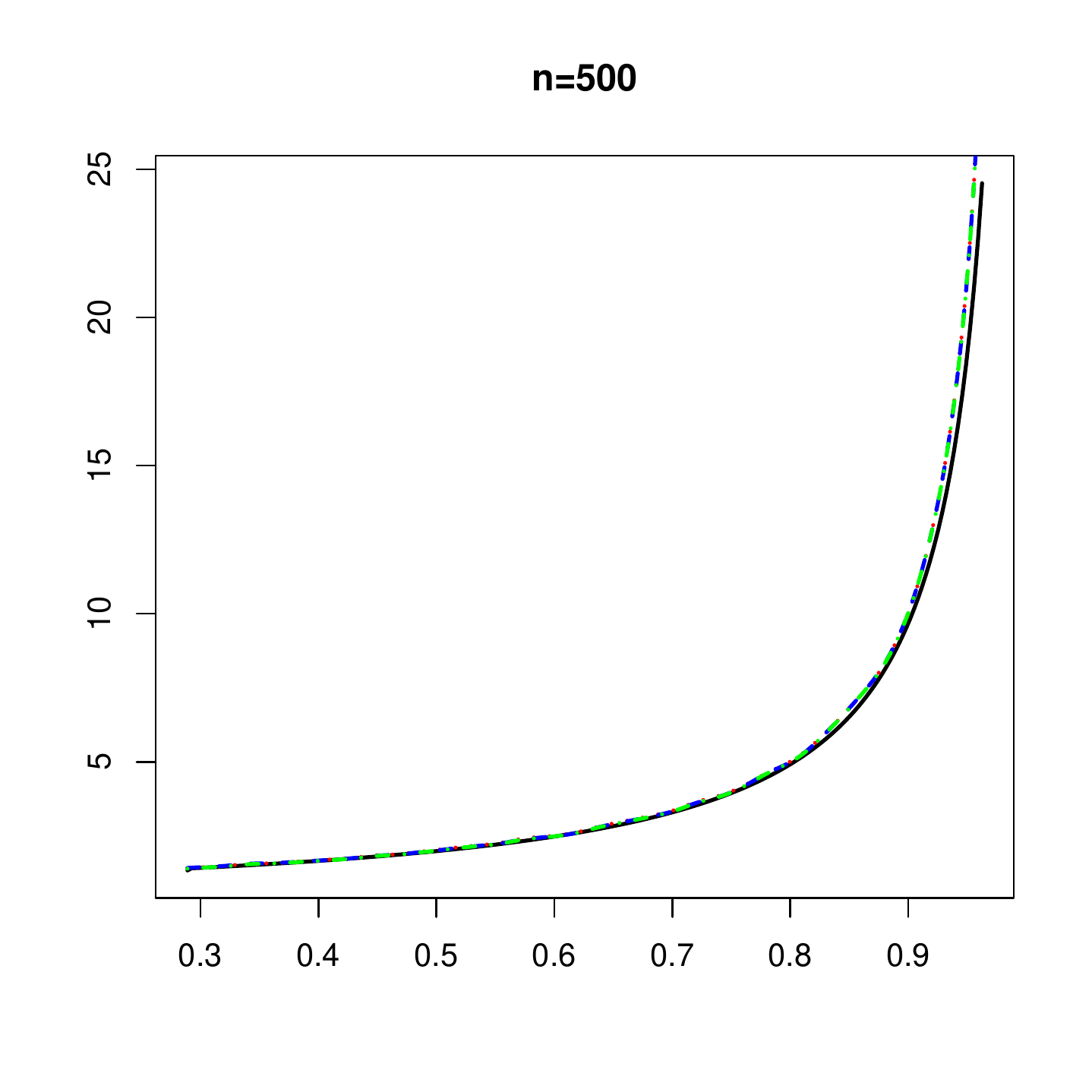}
\end{minipage}
\begin{minipage}[b]{0.32\linewidth}
\includegraphics[width=\textwidth]{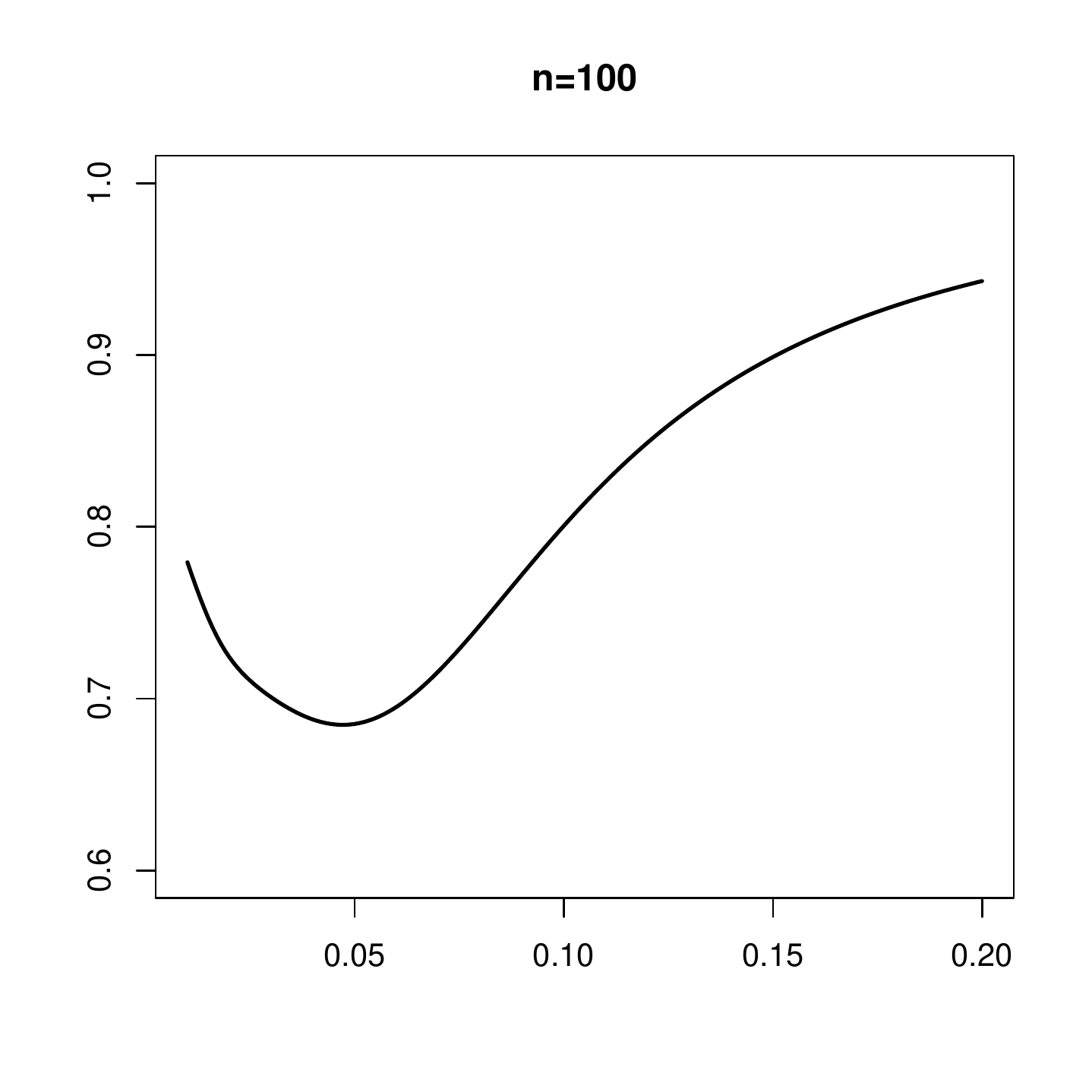}
\end{minipage} \hfill
\begin{minipage}[b]{0.32\linewidth}
\includegraphics[width=\textwidth]{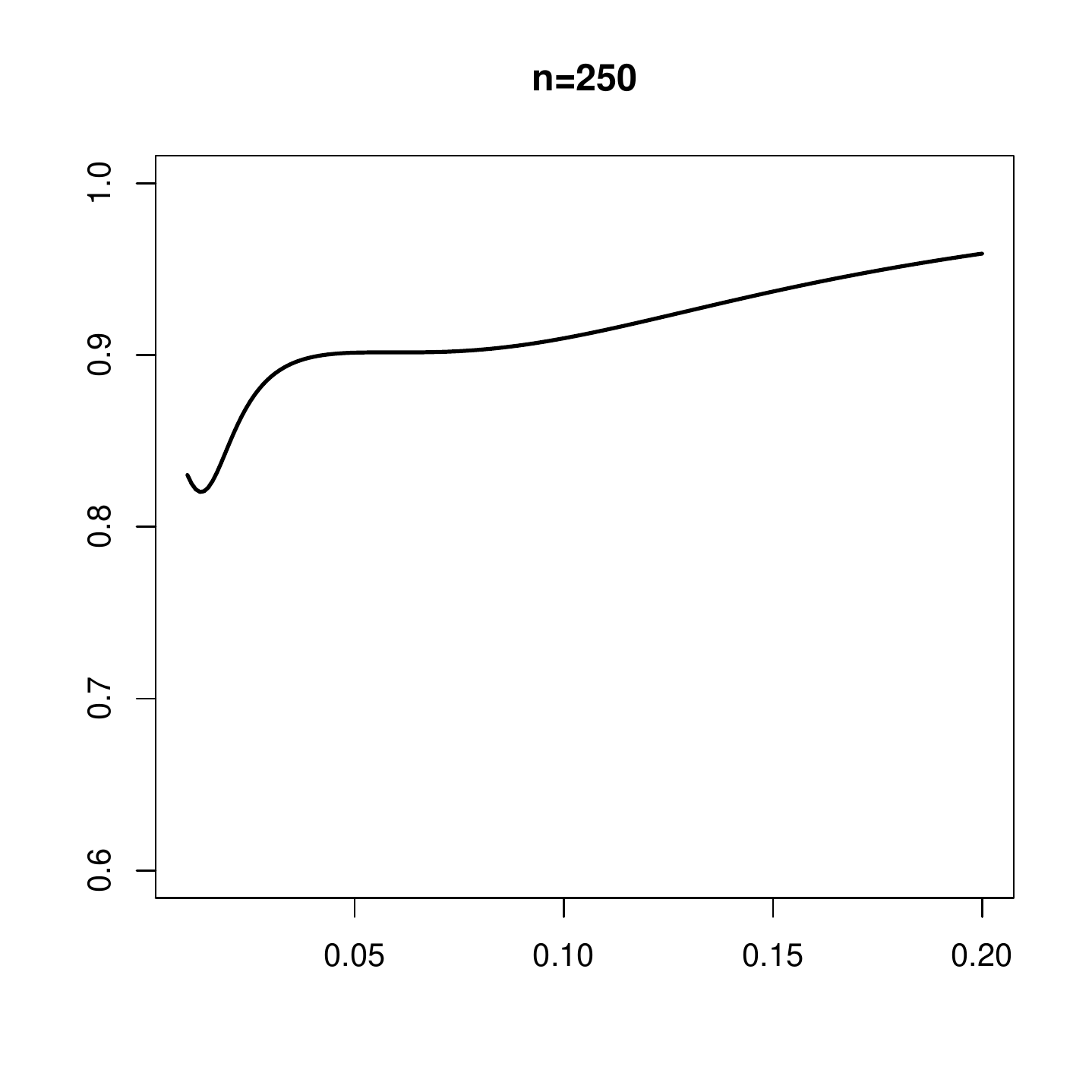}
\end{minipage} \hfill
\begin{minipage}[b]{0.32\linewidth}
\includegraphics[width=\textwidth]{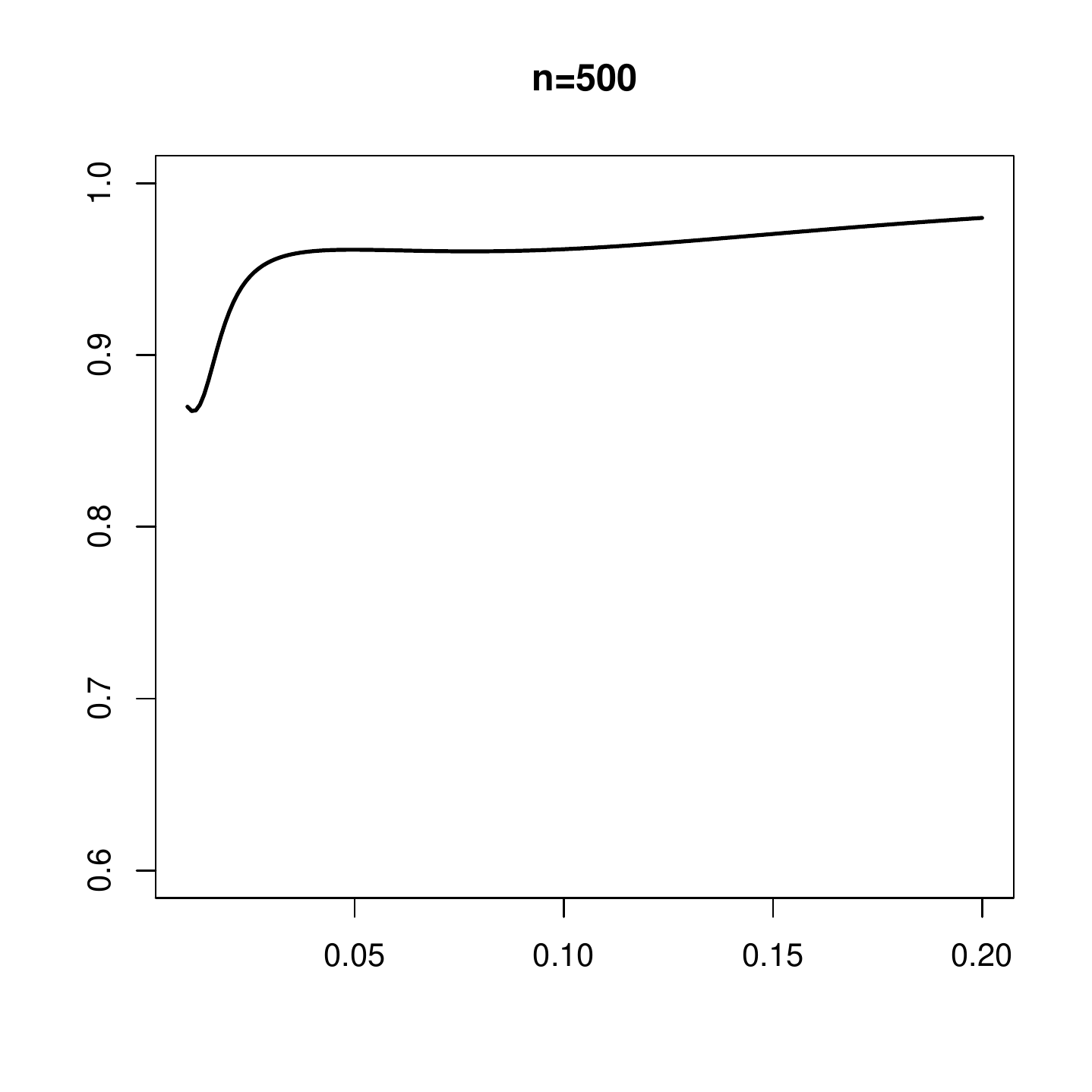}
\end{minipage}
\caption{(i) the target hazard (solid line) together with
its semiparametric (dashed line), nonparametric (dotted line) and naive (dashed-dotted line for n=500) estimators averaged along the
1000 Monte Carlo trials for Model 3.1 (top row); (ii) The ratio between the MISE's of the semiparametric and the nonparametric estimators along a grid of bandwidths (bottom row). }
\label{Teoric_Model231}
\end{figure}

\begin{figure}[ht]
\begin{minipage}[b]{0.32\linewidth}
\includegraphics[width=\textwidth]{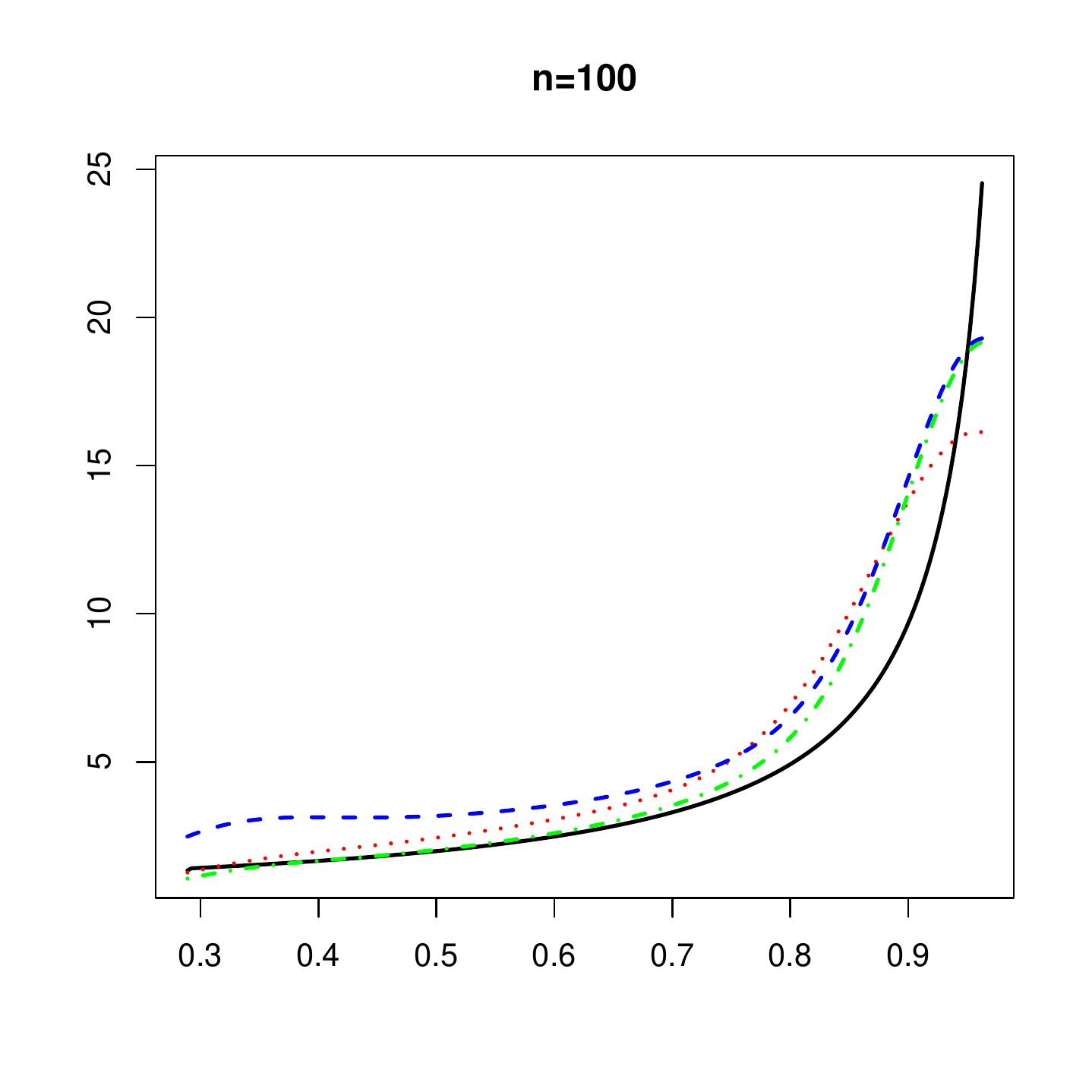}
\end{minipage} \hfill
\begin{minipage}[b]{0.32\linewidth}
\includegraphics[width=\textwidth]{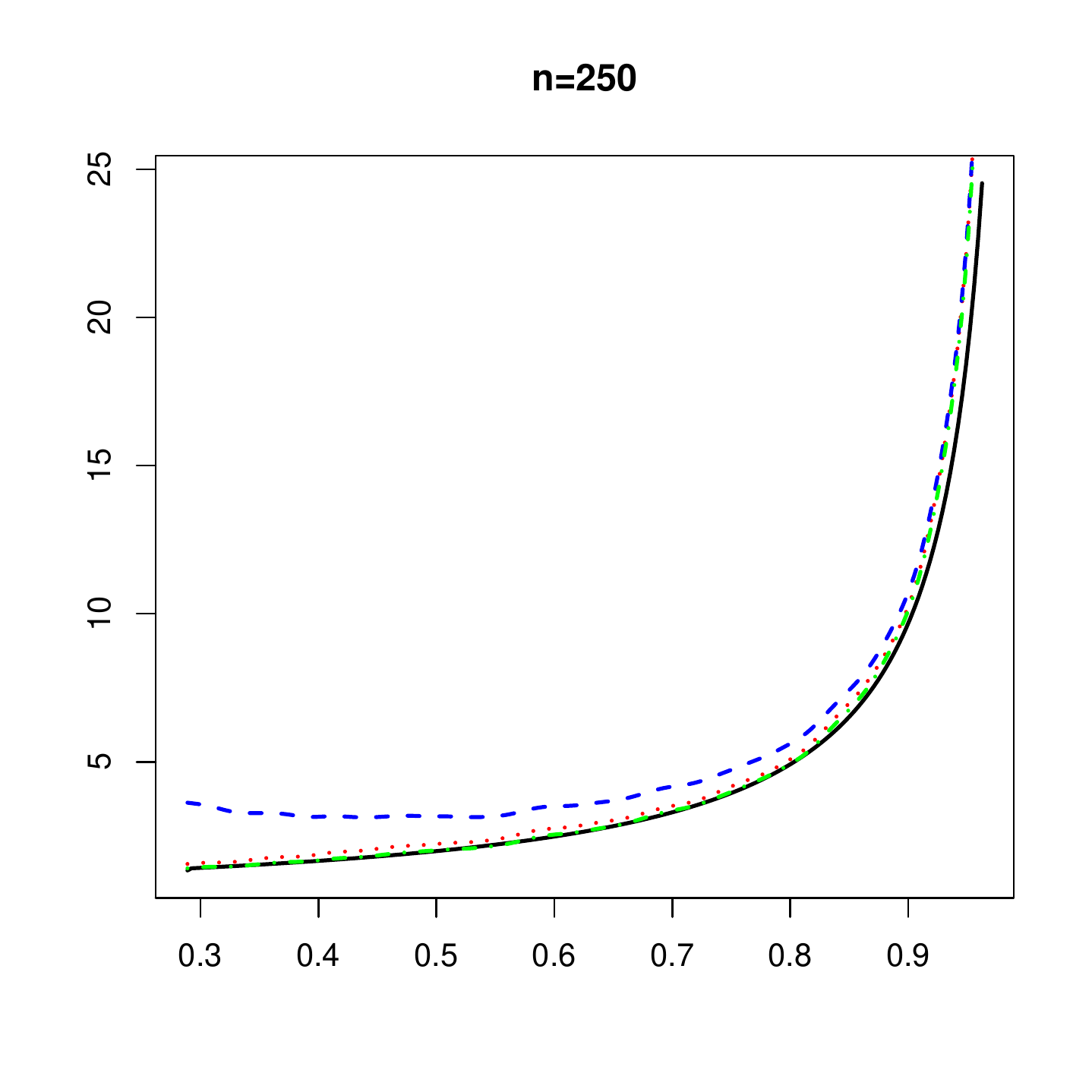}
\end{minipage} \hfill
\begin{minipage}[b]{0.32\linewidth}
\includegraphics[width=\textwidth]{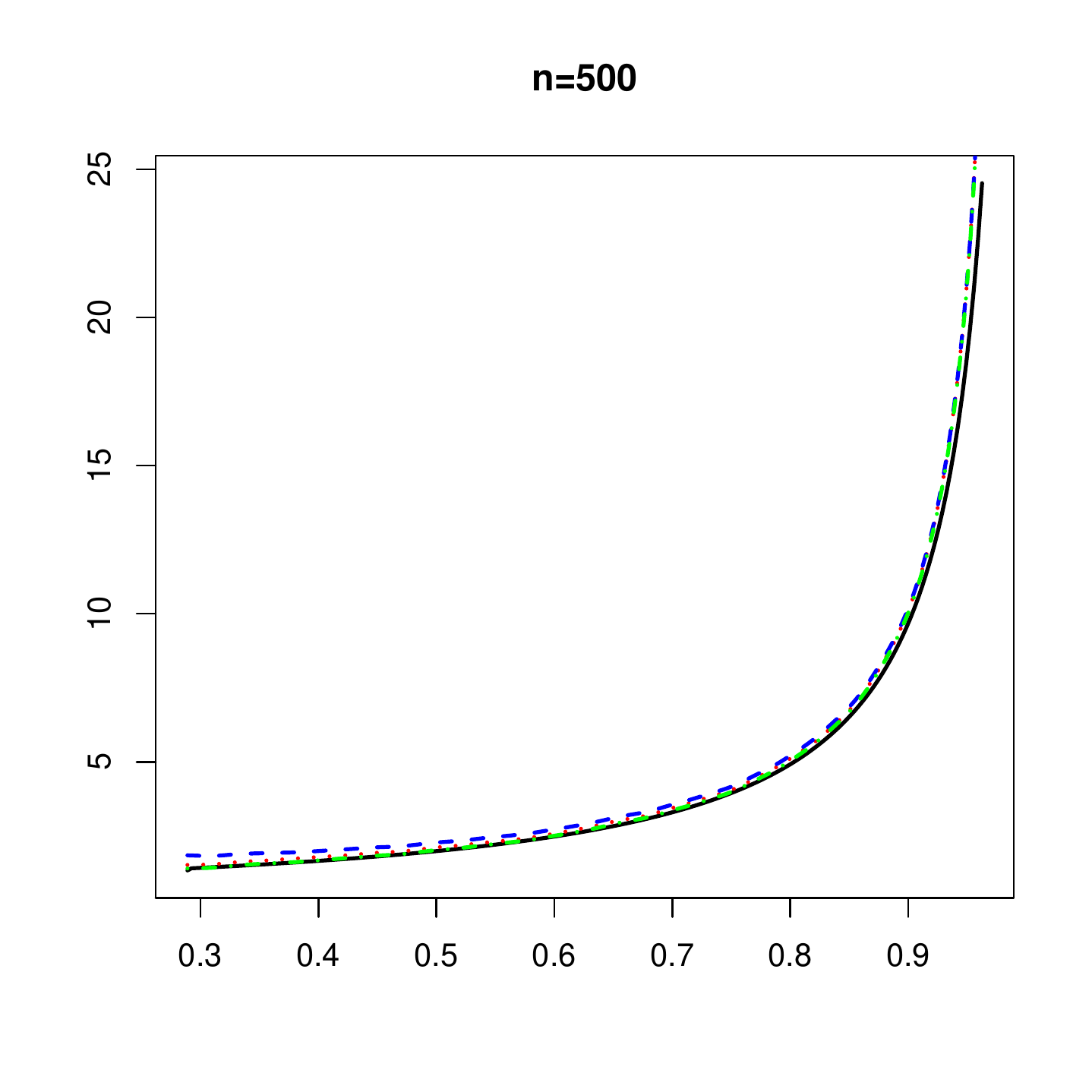}
\end{minipage}
\begin{minipage}[b]{0.32\linewidth}
\includegraphics[width=\textwidth]{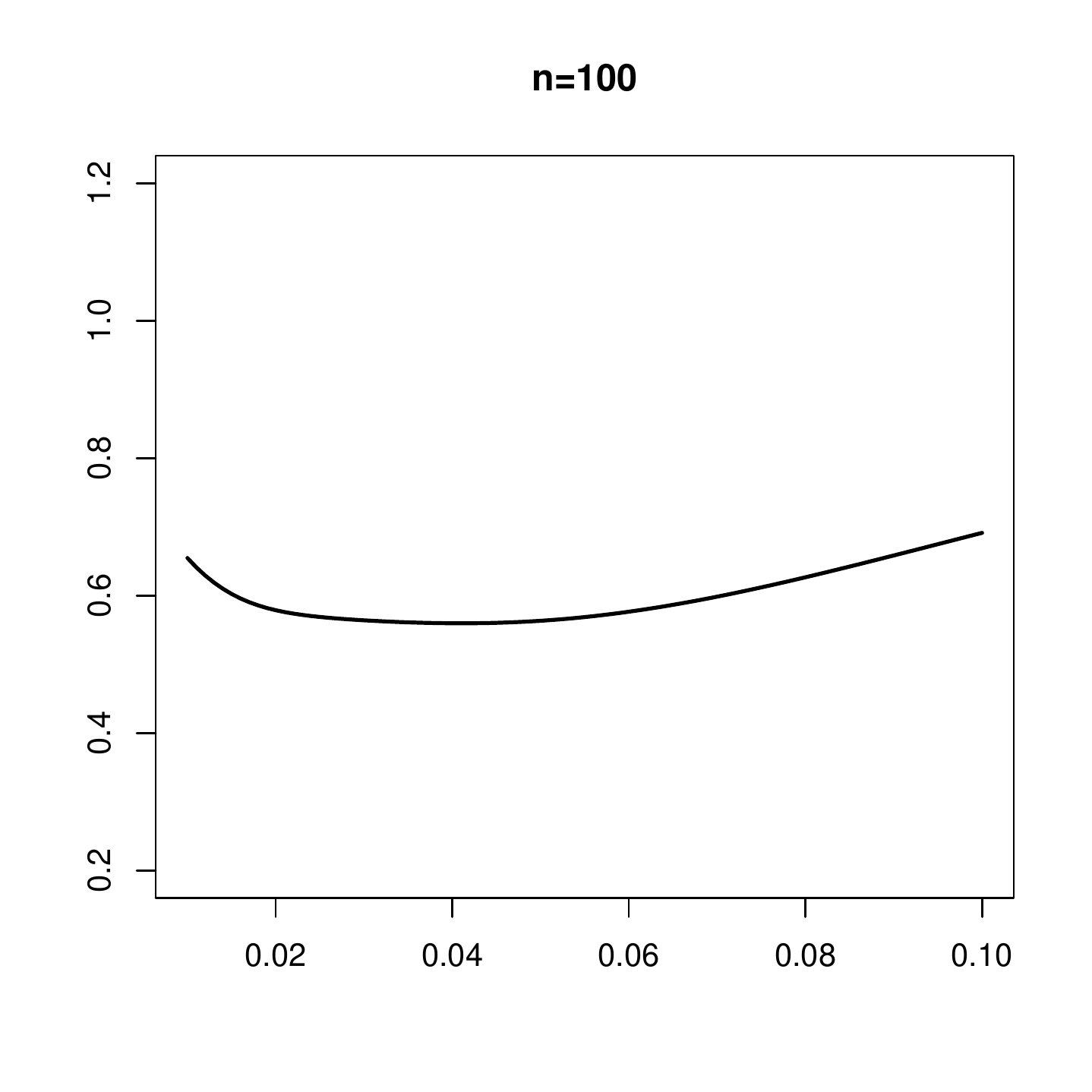}
\end{minipage} \hfill
\begin{minipage}[b]{0.32\linewidth}
\includegraphics[width=\textwidth]{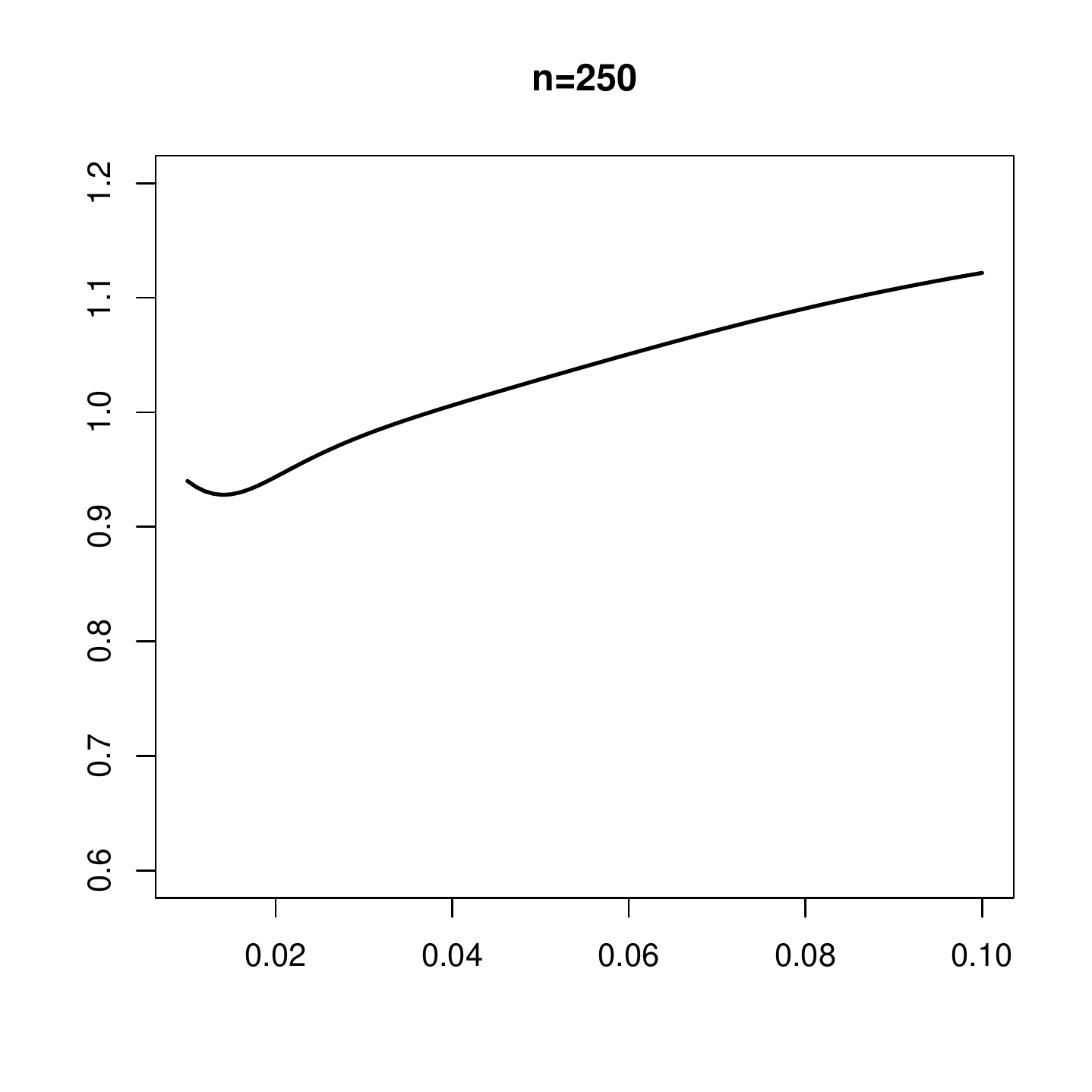}
\end{minipage} \hfill
\begin{minipage}[b]{0.32\linewidth}
\includegraphics[width=\textwidth]{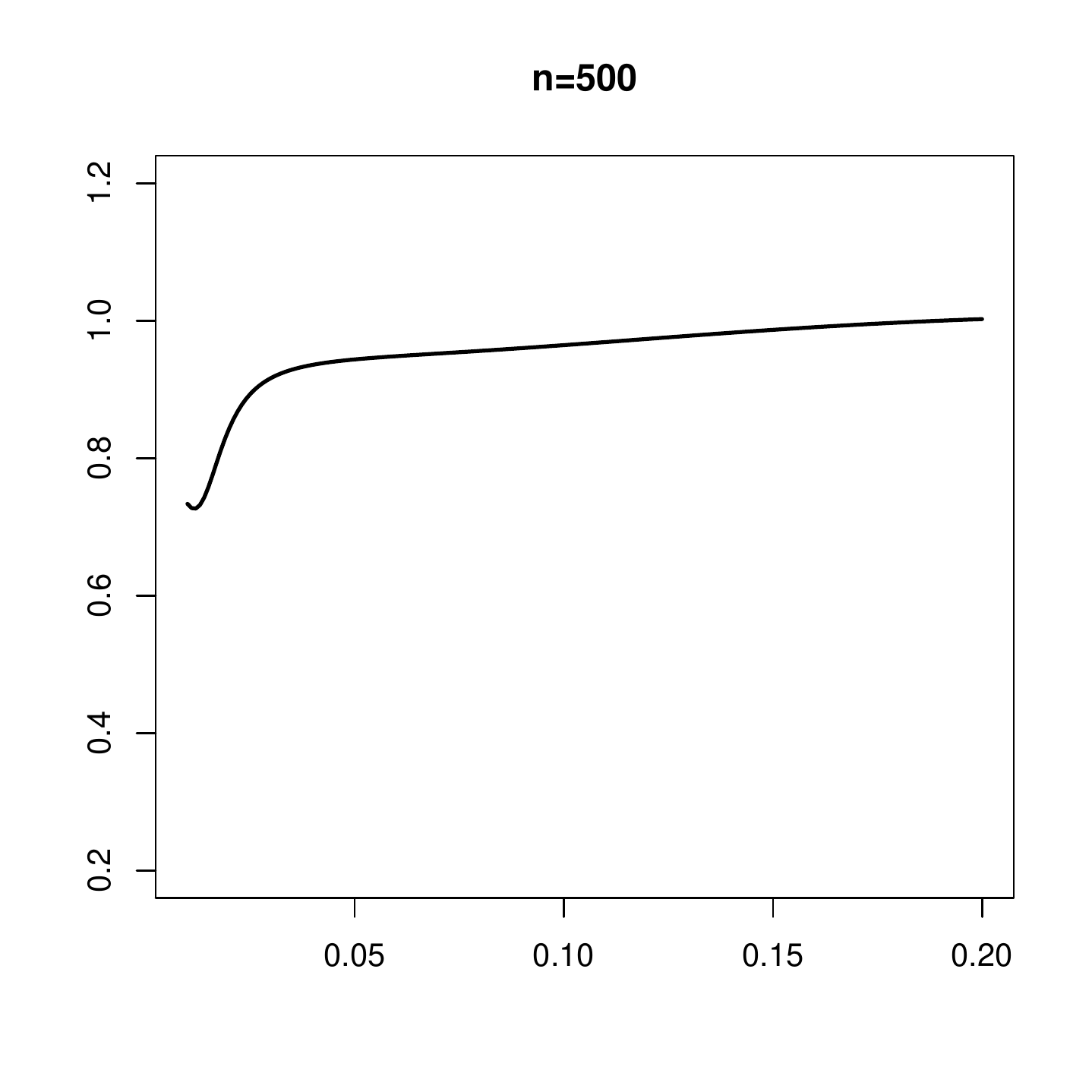}
\end{minipage}
\caption{(i) the target hazard (solid line) together with
its semiparametric (dashed line), nonparametric (dotted line) and naive (dashed-dotted line for n=500) estimators averaged along the
1000 Monte Carlo trials for Model 3.2 (top row); (ii) The ratio between the MISE's of the semiparametric and the nonparametric estimators along a grid of bandwidths (bottom row). }
\label{Teoric_Model232}
\end{figure}

\begin{figure}[ht]
\begin{minipage}[b]{0.32\linewidth}
\includegraphics[width=\textwidth]{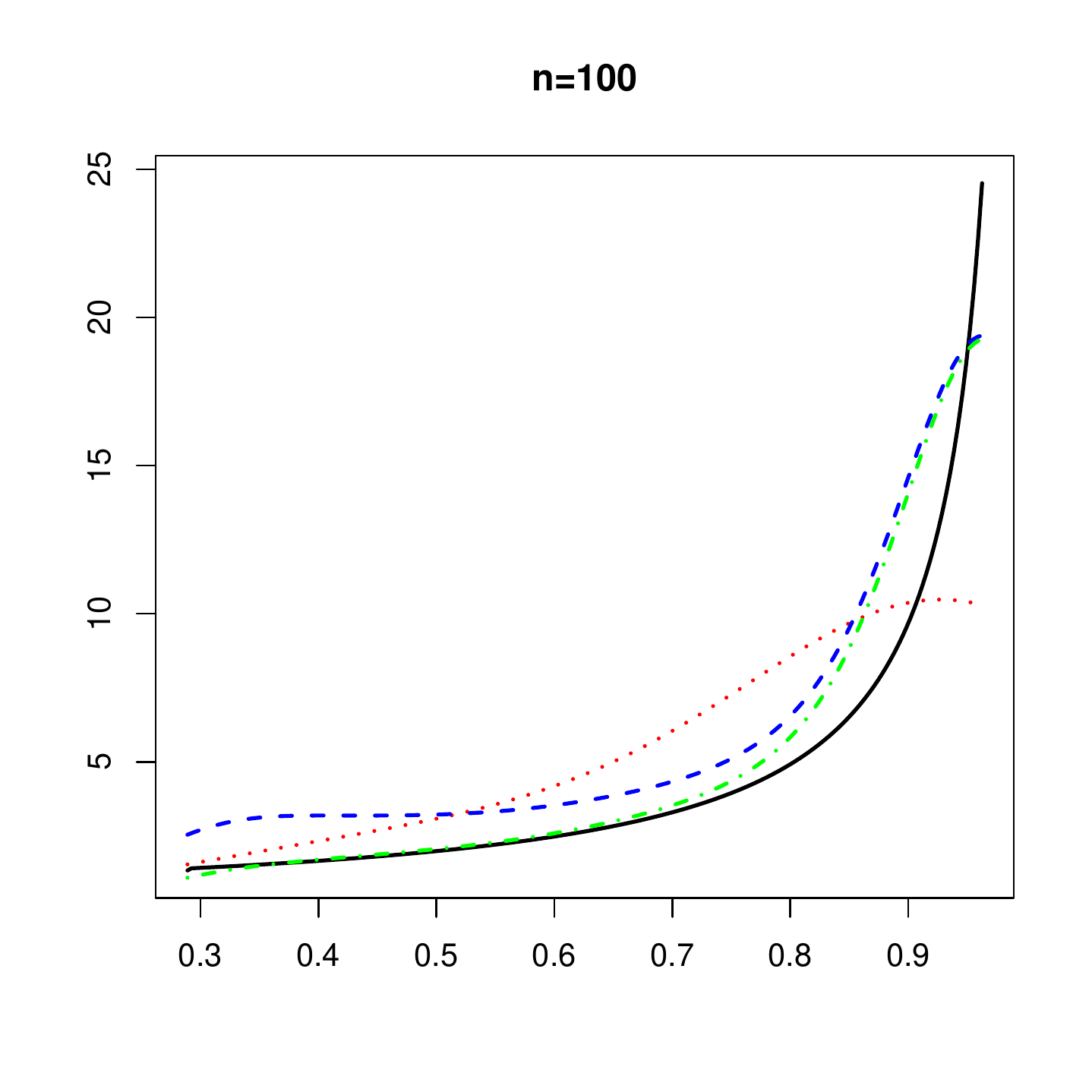}
\end{minipage} \hfill
\begin{minipage}[b]{0.32\linewidth}
\includegraphics[width=\textwidth]{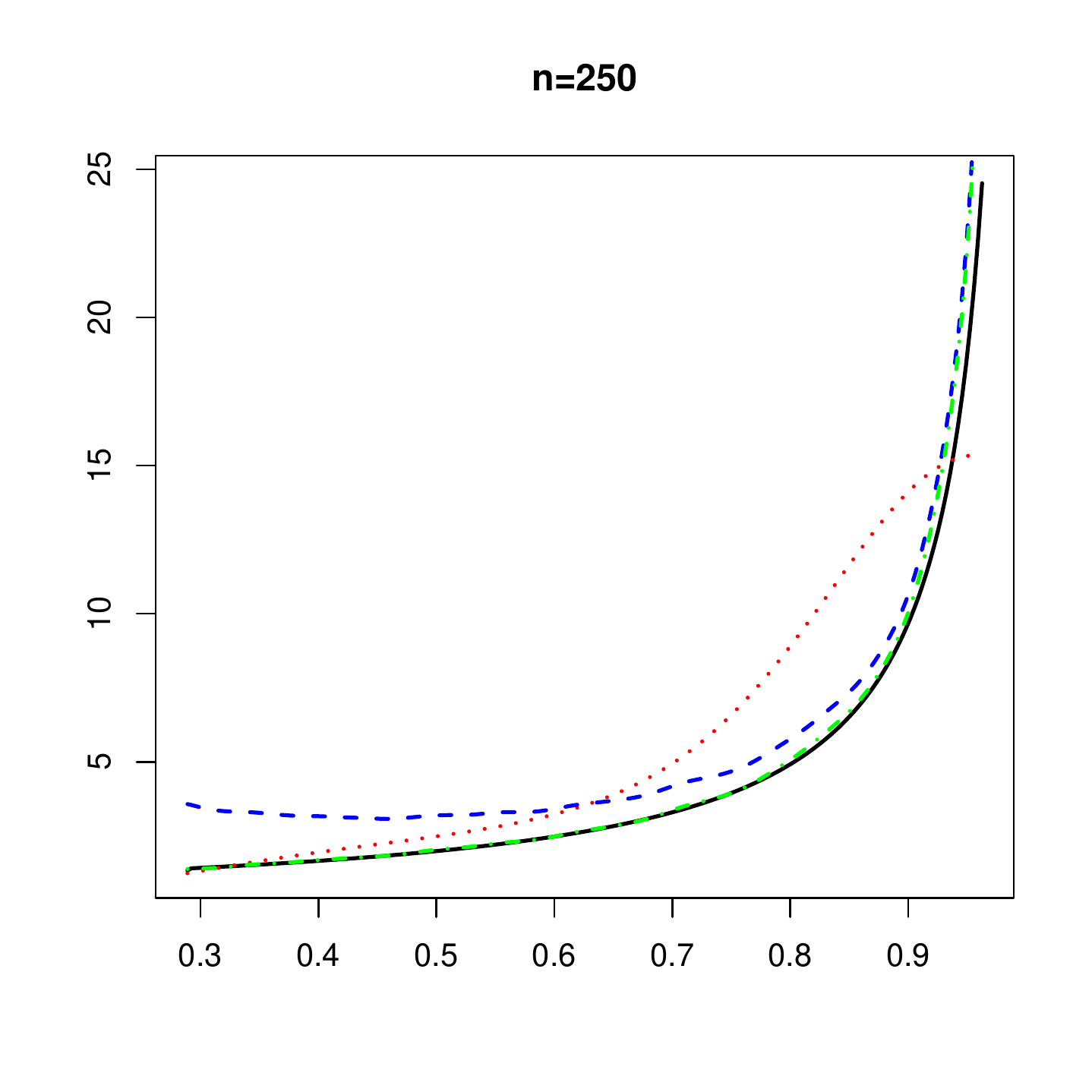}
\end{minipage} \hfill
\begin{minipage}[b]{0.32\linewidth}
\includegraphics[width=\textwidth]{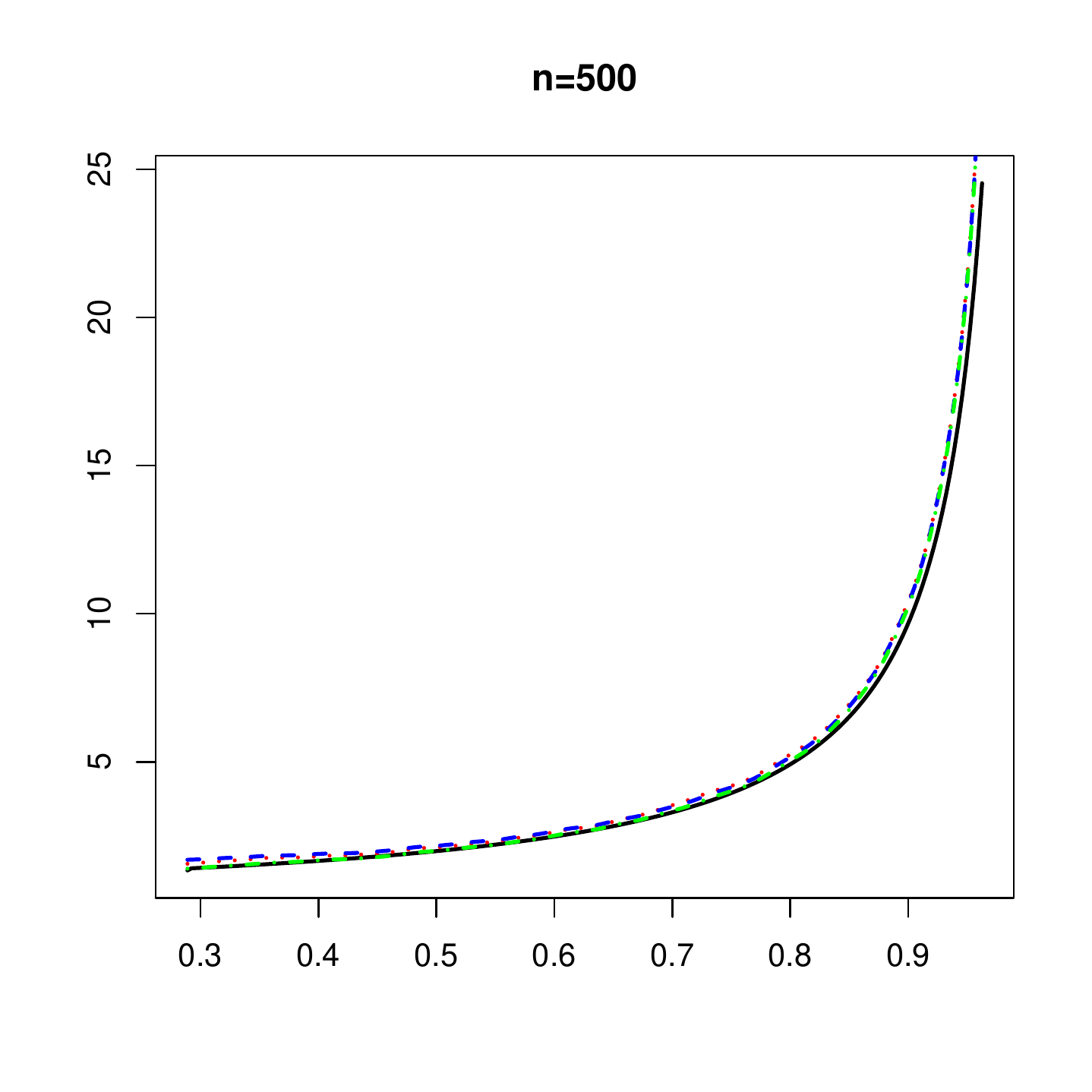}
\end{minipage}
\begin{minipage}[b]{0.32\linewidth}
\includegraphics[width=\textwidth]{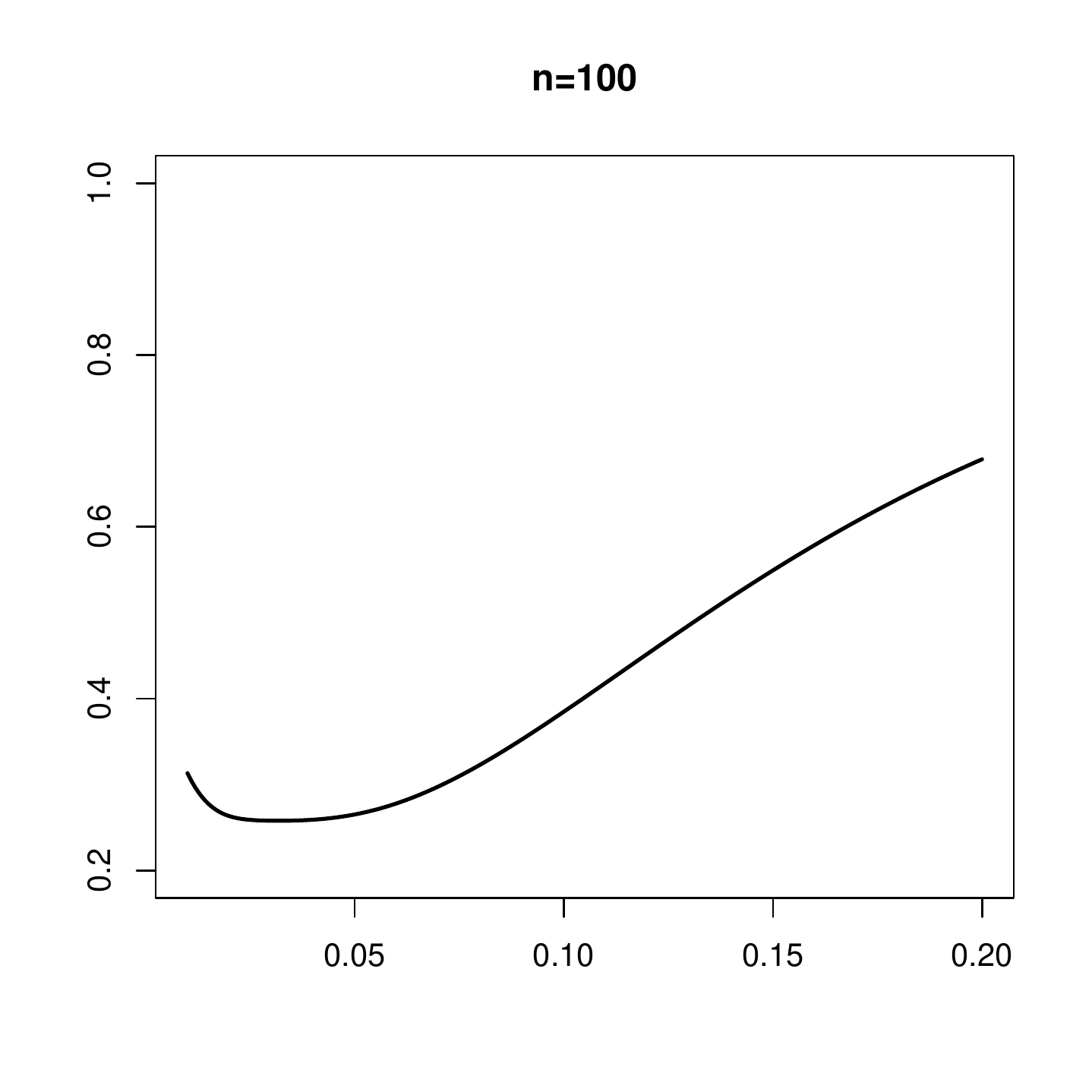}
\end{minipage} \hfill
\begin{minipage}[b]{0.32\linewidth}
\includegraphics[width=\textwidth]{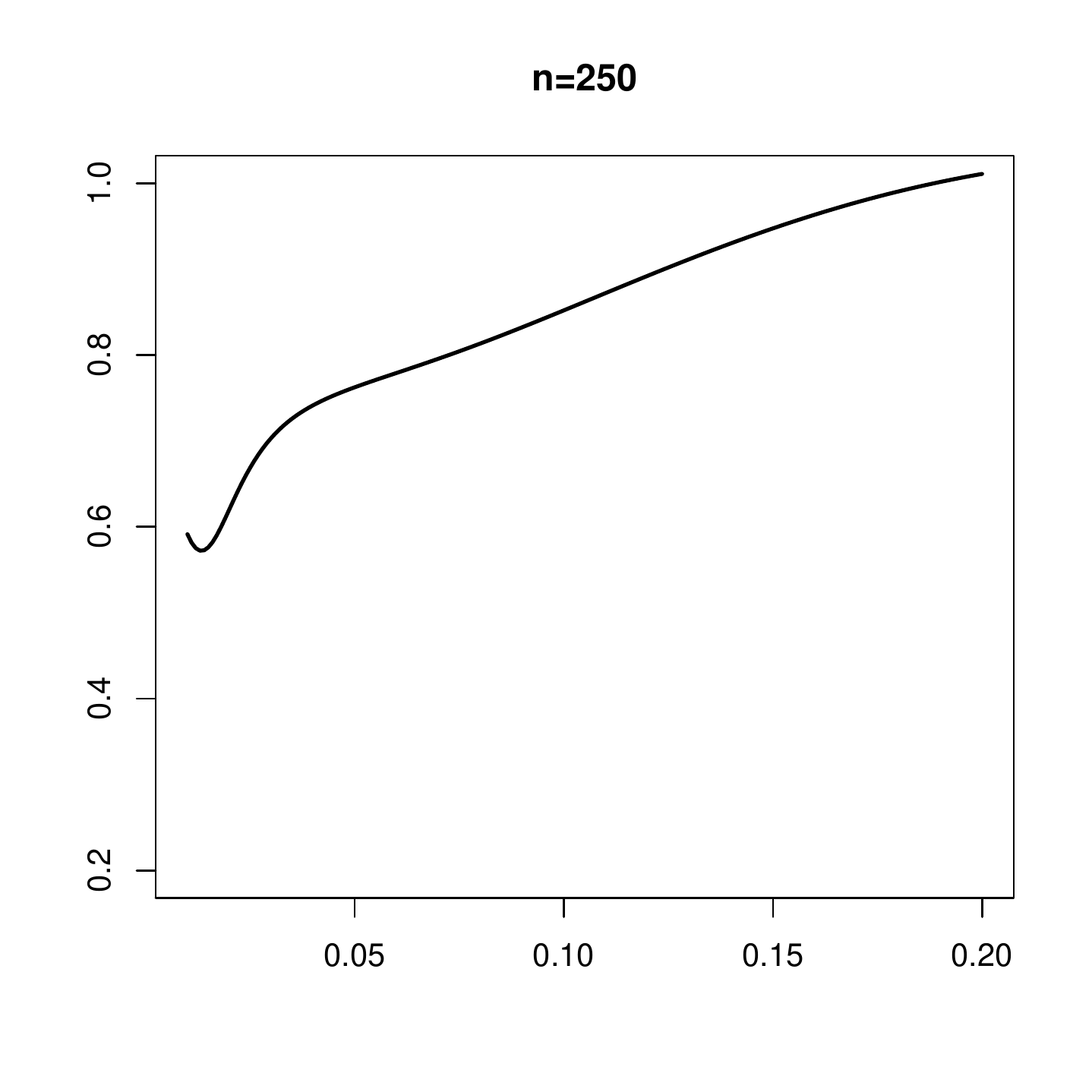}
\end{minipage} \hfill
\begin{minipage}[b]{0.32\linewidth}
\includegraphics[width=\textwidth]{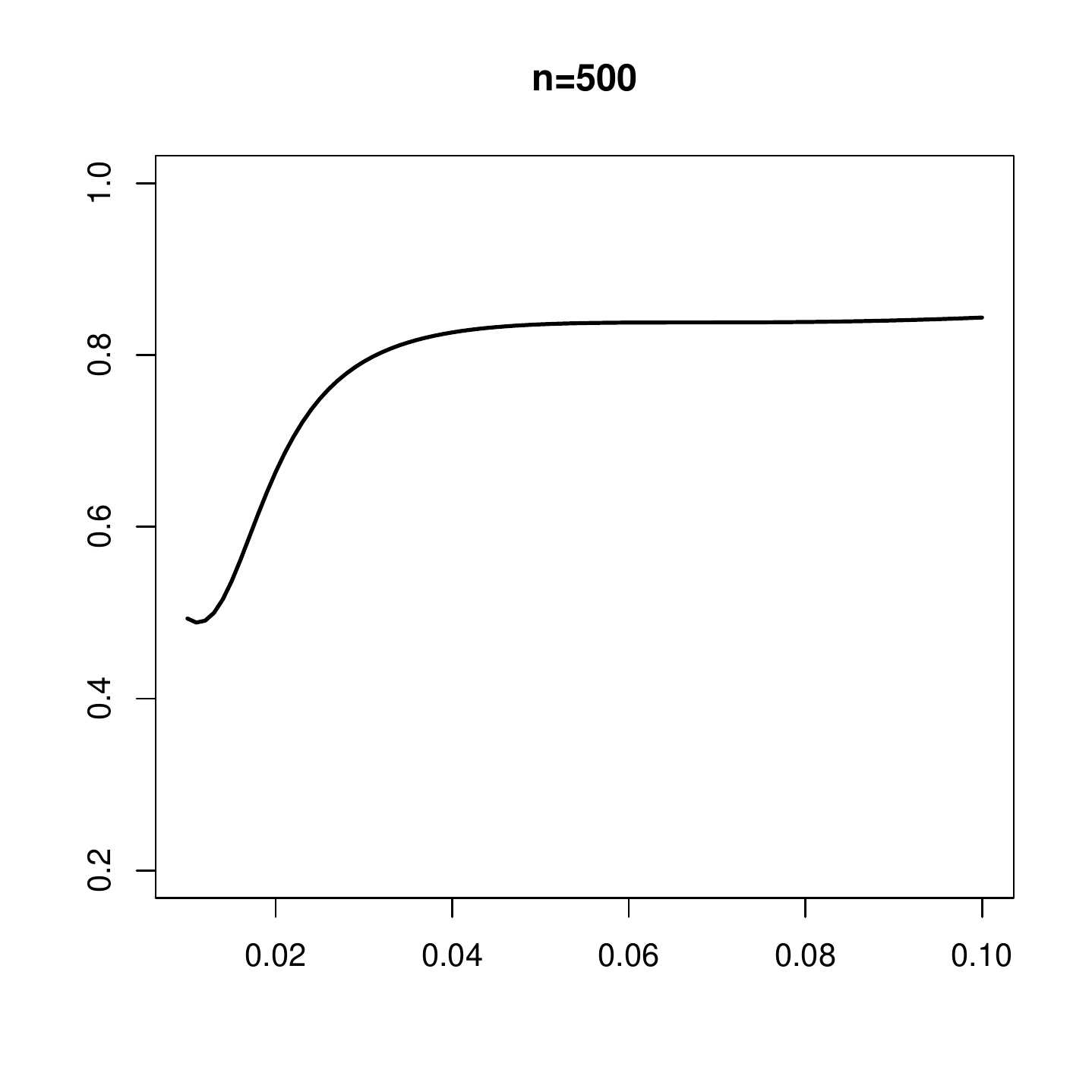}
\end{minipage}
\caption{(i) the target hazard (solid line) together with
its semiparametric (dashed line), nonparametric (dotted line) and naive (dashed-dotted line for n=500) estimators averaged along the
1000 Monte Carlo trials for Model 3.3 (top row); (ii) The ratio between the MISE's of the semiparametric and the nonparametric estimators along a grid of bandwidths (bottom row). }
\label{Teoric_Model233}
\end{figure}
Simulations above are informative about the relative performance of the two proposed estimators when the parametric information on the truncation distribution is correctly specified. However, in practice, some level of misspecification in the parametric model may occur. To investigate the sensitivity of the semiparametric estimator to the misspecification degree, we have repeated the simulation of Model 1 but changing the $U(0,1)$ distribution of $U^*$ for a $Beta(1,a)$ distribution, with $a\neq 1$, so the parametric information $Beta(\theta_1,1)$ on $U^*$ is misspecified.  Note that the misspecification degree increases as $a$ departs from $1$. Results on the bandwidth, the MISE, and the local MSE of
both the semiparametric and nonparametric hazard estimators are reported in
Tables \ref{Table3} and \ref{Table4} for the case $n=500$ (results based on 1,000 trials). From Table \ref{Table3},
it is seen that the semiparametric estimator may be still equivalent or even preferred
to the nonparametric estimator in all cases of misspecification. Table \ref{Table4}
indicates that, when the parametric information is misspecified, the variance of
the semiparamtric estimator remains smaller than that of the nonparametric estimator for almost all the cases (exceptions for $a=5$ are found in the first two quartiles).

\begin{table}[ht]
\centering
\caption {Optimal bandwidths ($h_{opt}$) and minimum MISEs of the
hazard estimators: nonparametric estimator ($NP$) and semiparametric
estimator ($SP$). Averages along 1000 trials of a sample size $n=500$. Similar as Model 1, but $U^*$ is simulated as a $Beta(1, a)$ random variable.}
\label{Table3}
\begin{tabular}{ c c c c c}
 a& \multicolumn{2}{c}{$h_{opt}$}  & \multicolumn{2}{c}{$MISE(h_{opt})$}\\
\cline{2-3}
\cline{4-5}
&NP&SP&NP&SP\\
\hline
$1/5$&0.027&0.037&1.632&1.492\\
$1/2$&0.028&0.030&1.586&1.443\\
$1$&0.031&0.031&1.309&1.116\\
$3/2$&0.025&0.025&2.794&1.850\\
5&0.024&0.026&6.024&3.386\\
\hline
\end{tabular}
\end{table}

\begin{table}[ht]
\centering
\caption {Bias ($ \times 10^3$) and variances ($ \times 10$)  of the
nonparametric estimator ($NP$) and semiparametric
estimator ($SP$) at the quartiles of $F$, for sample sizes n=500, along 1000 Monte Carlo trials. Misspecified parametric model.}
\label{Table4}
\begin{tabular}{ c c r r r r}
\hline
 && \multicolumn{2}{c}{$NP$}  & \multicolumn{2}{c}{$SP$}\\
\cline{3-4}
\cline{5-6}
a&x& Bias&Var  & Bias&Var\\
\hline
&$q_{.25}$&$0.44$&$0.51$&$11.80$&$0.44$\\
$1/5$&$q_{.50}$&$7.02$&$1.26$&$6.45$&$0.99$\\
&$q_{.75}$&$1825.00$&$11.64$&$4687.00$&$6.86$\\
\cline{2-6}
&$q_{.25}$&$0.50$&$0.44$&$0.92$&$0.35$\\
$1/2$&$q_{.50}$&$7.18$&$1.08$&$1.83$&$0.73$\\
&$q_{.75}$&$36.84$&$4.18$&$2.89$&$2.93$\\
\cline{2-6}
&$q_{.25}$&$0.80$&$0.39$&$0.99$&$0.30$\\
1&$q_{.50}$&$7.70$&$1.06$&$8.80$&$0.87$\\
&$q_{.75}$&$38.28$&$4.47$&$37.98$&$3.79$\\
\cline{2-6}
&$q_{.25}$&$0.58$&$40.42$&$24.23$&$0.34$\\
$3/2$&$q_{.50}$&$0.02$&$1.13$&$55.54$&$1.03$\\
&$q_{.75}$&$39.09$&$4.88$&$171.48$&$4.59$\\
\cline{2-6}
&$q_{.25}$&$1.57$&$48.41$&$1.98$&$51.30$\\
5&$q_{.50}$&$14.75$&$1.36$&$293.40$&$1.53$\\
&$q_{.75}$&$3176.00$&$8.90$&$2037.00$&$8.58$\\
\hline
\end{tabular}
\end{table}

\section{Real data illustration} \label{Section5}

\subsection{Acute Coronary Syndrome data}
\indent For illustration purposes, in this section we consider the aforementioned data on the
age at diagnosis of ACS. In Portugal, with a population of 10.3 million inhabitants, there are 38 public hospitals with resources for structured care of patients with ACS, out of which 16 have catheterisation laboratory facilities. Public hospitals provide  treatment for the majority of the acute coronary events and the number of patients submitted to primary Percutaneous coronary intervention (PCI) increased by 37.0\% from 2009 to 2013, although at the regional level access to this procedure varied. The EPIHeart cohort is a prospective study  assembled between August 2013 and December 2014  to the Cardiology Department of two tertiary hospitals in two regions in Northern Portugal (Hospital de S\~ao Jo\~ao, Porto, covering the metropolitan area of Porto in the coast;  and Hospital de S\~ao Pedro, Vila Real, covering the interior, northeastern region). The inclusion criteria to the cohort were admission with a diagnosis of ACS type I, aged 18 years or older, living in the catchment area of the referred hospitals (Porto, Vila Real, Bragan\c{c}a or Viseu) with confirmed diagnosis of type 1 (primary spontaneous) ACS. Data was collected through structured interviews within the first 48 hours after admission. Of the 1297 patients initially considered, 939 were included in the cohort due the inclusion criteria. The age at diagnosis (ranging
from $30$ to $94$ years old) was doubly truncated by ($U^*,V^*$),
where $V^*$ stands for the elapsed time (in years) between birth and
end of the study (December 2014), and $U^*=V^*-1.42$.

For this dataset it happens $\sum_{j=1}^n I(U_j\leq X_i\leq V_j)=1$ for the three largest values of $X_i$, corresponding to ages $X_i=90.70$, $91.99$ and $93.80$. Then, the NPMLE $F_n$ does not exist or is not unique for the ACS data \citep{Xiao19}. We redefined our sample restricting to the largest dataset in which the Proposition \ref{prop:exuniq} is satisfied, i.e, the conditional NPMLE exists and is unique. Thus, our final sample is composed by 917 patients, $680$ male and $237$ female, with ages at ACS diagnosis between $39$ and $90$ years.

The nonparametric and semiparametric kernel estimators for the
hazard function of $X^*$ computed from the $n=917$ patients together with the pointwise confidence bands at 95\%
level are given in Figure
\ref{Gfunction}, top row. For the semiparametric estimator a $Beta(\theta_1,\theta_2)$
model for $U^*$ was assumed, and the parameters were estimated by maximizing the
conditional likelihood of the truncation times.  The 95\% pointwise confidence bands were computed from the smoothed bootstrap; the parametric information was included in the bootstrap when dealing with $\lambda_{\widehat \theta,h}(x)$ (see the Appendix for more details).  The reason to use a smoothed bootstrap procedure (as opposed to a non-smoothed one) is the same as in \cite{Silverman87}, namely without smoothing the bootstrap would be inconsistent.

Generally speaking it is seen that the hazard of ACS increases with age. A local mode, located approximately at 78 years old, is suggested by the nonparametric estimator. This mode can be medically ignored since it is a result of the large variability of the estimator; the bootstrap confidence intervals around this mode are very wide indeed. This is in well agreement with our findings in Section 3, Models 3.1 to 3.3, in which large MISE values were found when the sampling interval was too narrow. The situation here is even worse in that the width of the sampling interval ($1.42$ years) is only 3\% the width of the support of the target variable $X^*$. In situations like this the NPMLE may be expected to be irrelevant, and the SPMLE becomes a useful alternative.


The optimal bandwidths, derived from the LSCV method in the Appendix, are $h=0.028$ and $h=0.031$ for the nonparametric and semiparametric estimators, respectively. For comparison purposes the naive estimator is also depicted in Figure \ref{Gfunction}  ($h=0.042$). It is seen that the three estimators are close to each other, suggesting no impact of the double truncation issue in the hazard function; this can be further investigated through the estimation of the biasing function $G(.)$.\\


The biasing function,  together with the 95\% pointwise confidence bands based on the bootstrap are displayed in Figure \ref{Gfunction}, bottom panel. It can be seen that both biasing functions $G_n$ and  $G_{\hat \theta}$ are roughly flat, and this explains why the semiparametric hazard estimator mimics the naive one which does not correct for double truncation. Note that, although $G_n$ exhibits some bumps, these are not significant according to the confidence limits.\\


\begin{figure}[h]
\begin{minipage}[b]{0.48\linewidth}
\includegraphics[width=\textwidth]{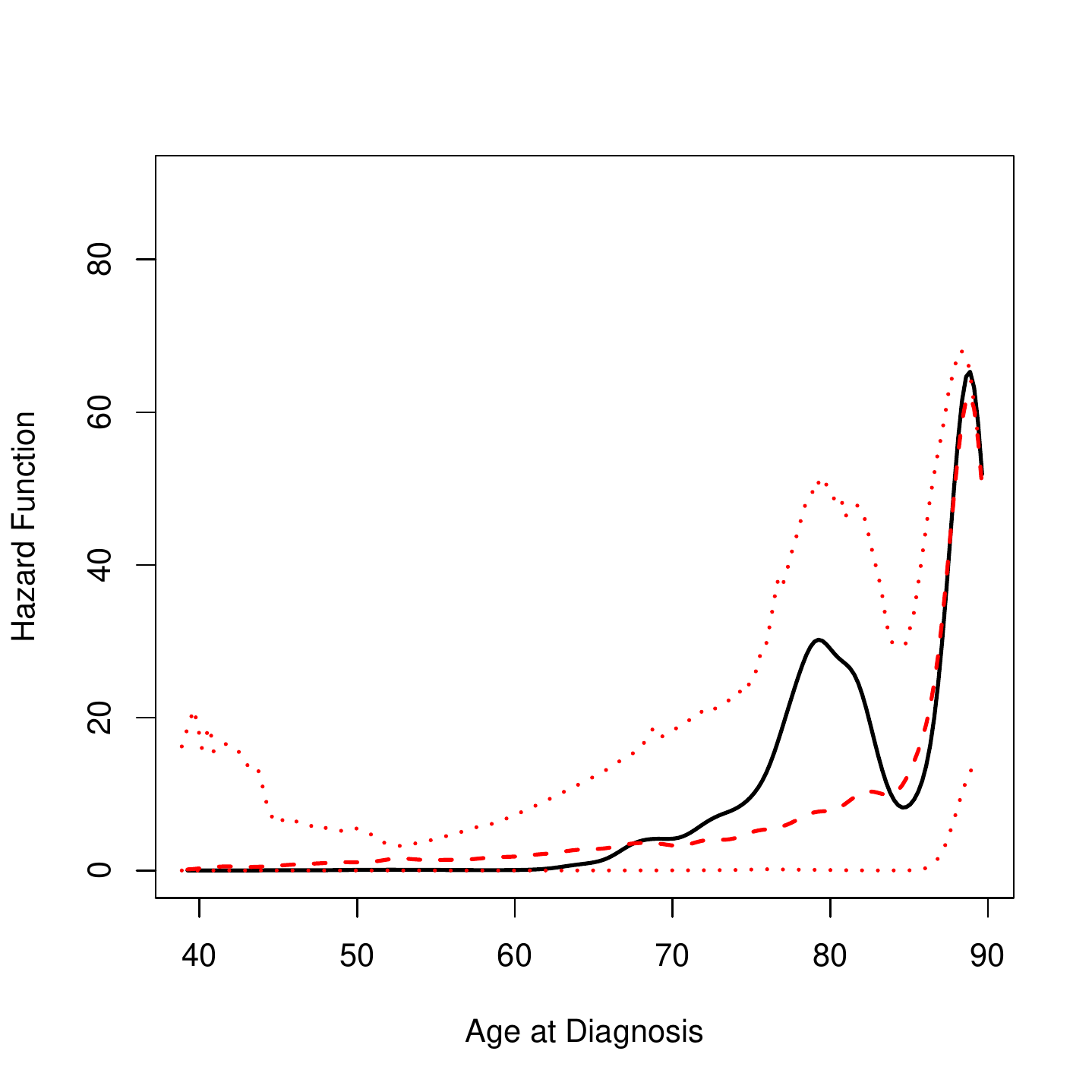}
\end{minipage}
\begin{minipage}[b]{0.48\linewidth}
\includegraphics[width=\textwidth]{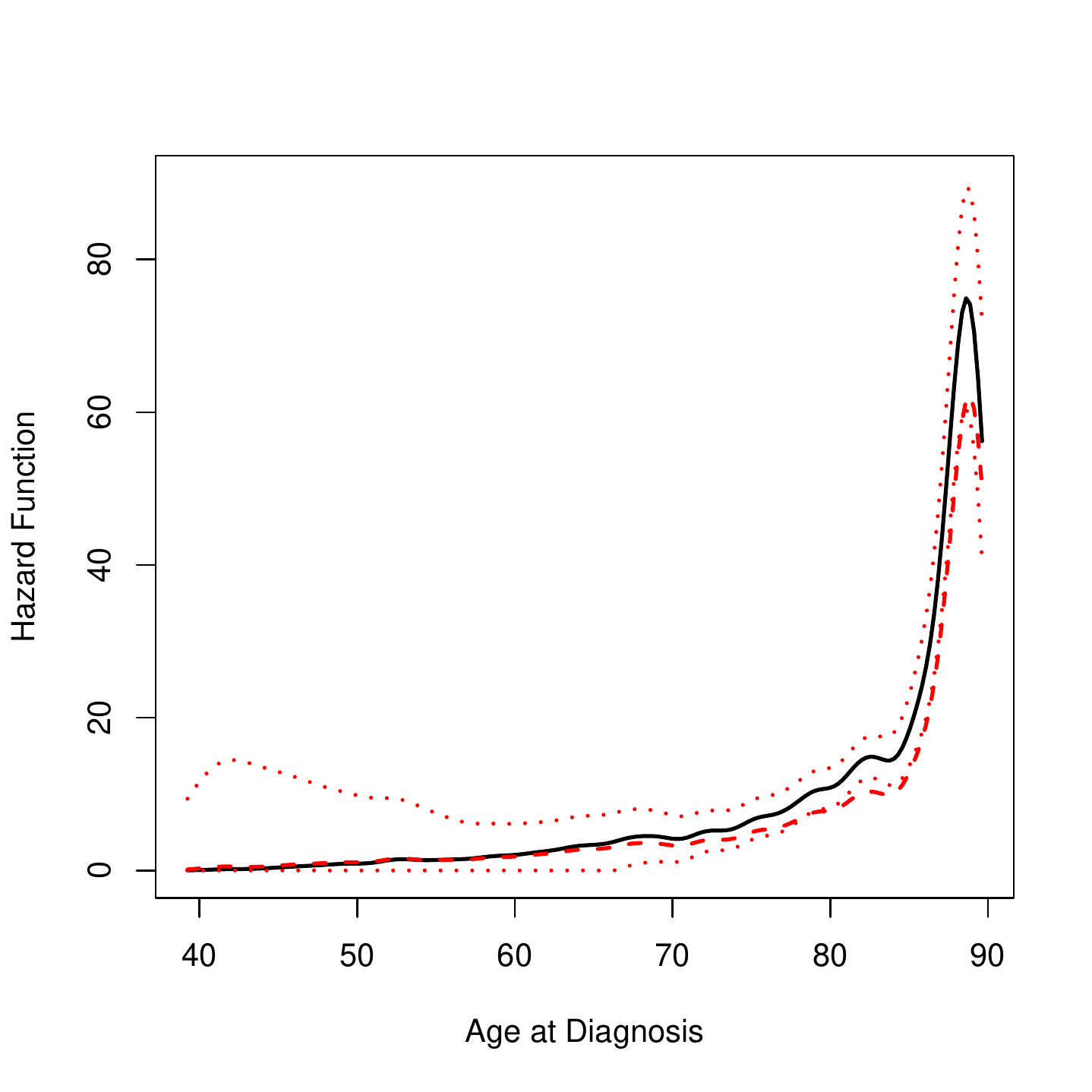}
\end{minipage}
\begin{minipage}[b]{0.48\linewidth}
\includegraphics[width=\textwidth]{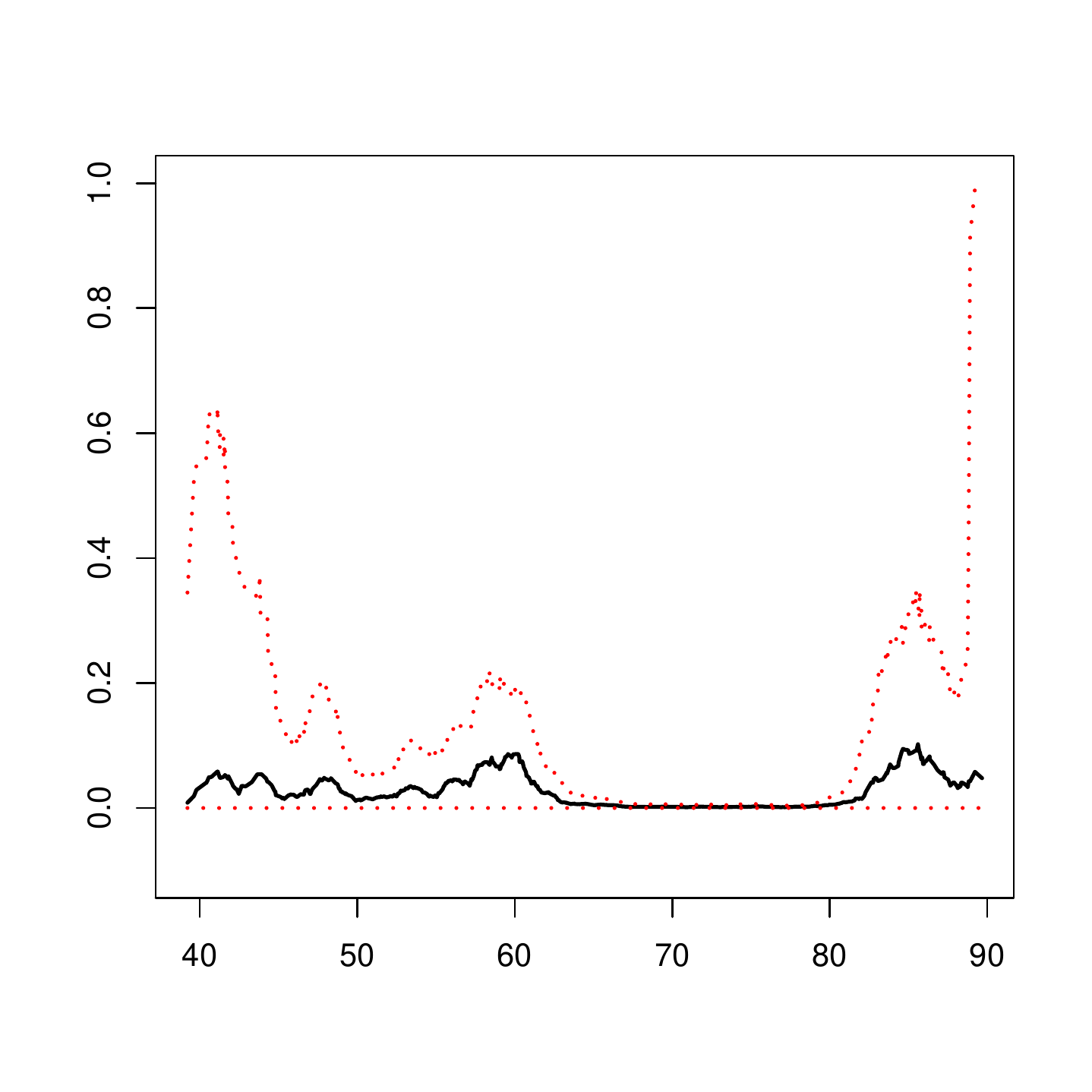}
\end{minipage}
\begin{minipage}[b]{0.48\linewidth}
\includegraphics[width=\textwidth]{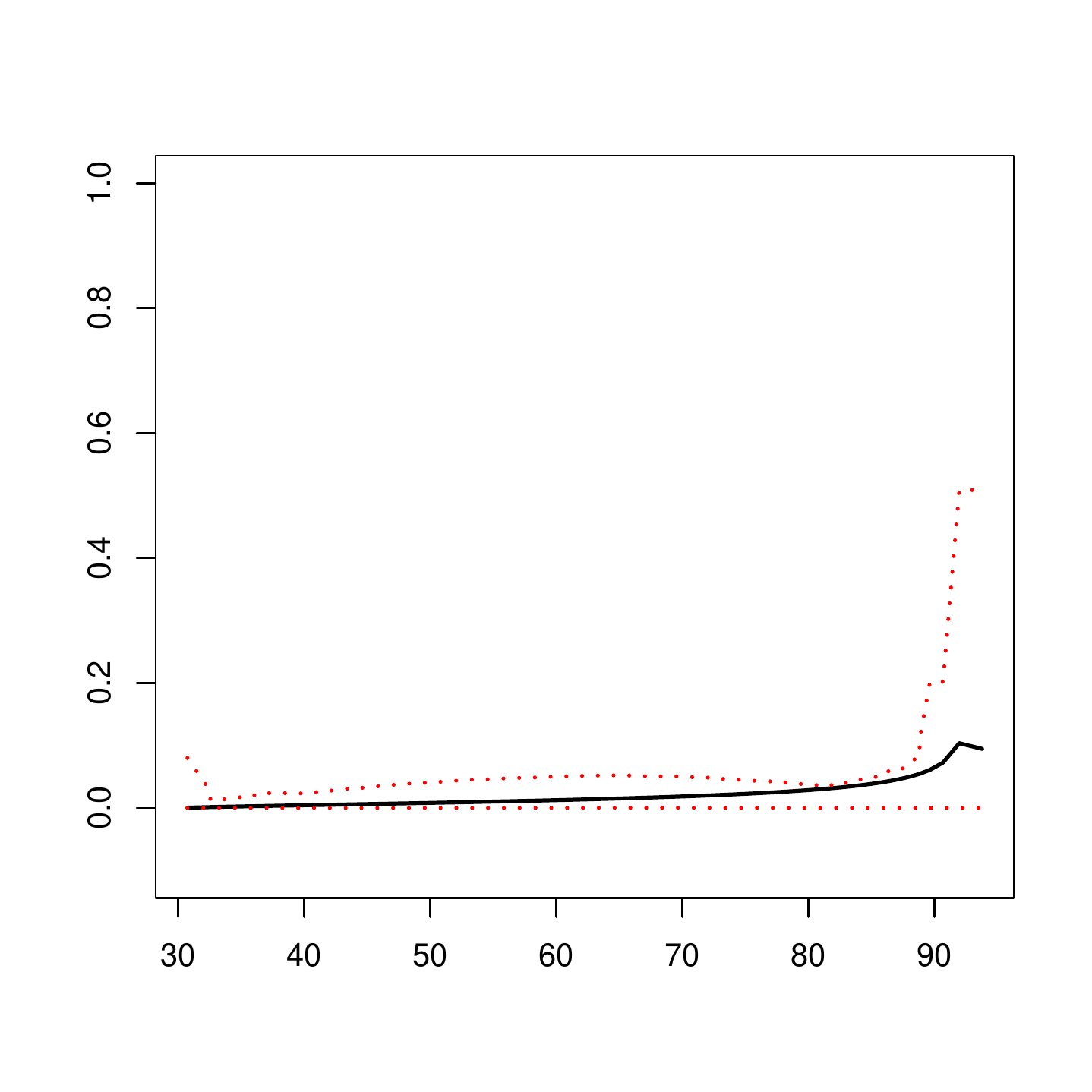}
\end{minipage}
\caption{Top row: Nonparametric hazard estimator (left panel) and the semiparametric hazard estimator (rigth panel) for the age at diagnosis (solid line), ACS data (n=917), with pointwise confidence bands at level 95\% (dotted lines) and the naive estimator (dashed line). Bottom row: Nonparametric  estimator $G_n$ (left panel, solid line) with pointwise confidence bands at level 95 \% (dotted lines) and $G_{\widehat \theta}$ based on a beta model (rigth panel, dotted lines), ACS data (n=917).}\label{Gfunction}
\end{figure}

\subsection{ AIDS Blood Transfusion data}

\indent In this subsection we  use epidemiological data on transfusion-related Acquired Immune Deficiency Syndrome (AIDS). The AIDS Blood Transfusion data were collected from the Centers for Disease Control (CDC), which is from a registry database, a common source of medical data; see \cite{Bilker96} and \cite{Lawless89}. The variable of interest $(X^*)$ is the induction or incubation time, which is defined as the time elapsed from Human Immunodeficiency virus (HIV) infection to the clinical manifestation of AIDS. The CDC AIDS Blood Transfusion Data can be viewed as being doubly truncated. The data were retrospectively ascertained for all transfusion-associated AIDS cases in which the diagnosis of AIDS occurred prior to the end of the study, thus leading to right-truncation. Besides, because HIV was unknown prior to 1982, any cases of transfusion-related AIDS before this time would not have been properly classified and thus would have been missed. Thus, in addition to right-truncation, the observed data were also truncated from the left. See \cite{Bilker96} Section 5.2, for further discussions.

Data included 494 cases reported to the CDC prior to January 1, 1987, and diagnosed prior to July 1, 1986. Of the 494 cases, 295 had consistent data, and the infection could be attributed to a single transfusion or short series of transfusions. Our analyses are restricted to this subset, which is entirely reported in \cite{Lawless89}, Table 1. Values of $U^*$ were obtained by measuring the time from HIV infection to January 1, 1982; while $V^*$ was defined as time from HIV infection to the end of study (July 1, 1986). Note that the difference between $V^*$ and its respective $U^*$ is always $4.5$ years. The times were considered in months.

After checking the existence and uniqueness of the NPMLE, the semiparametric and the nonparametric kernel estimators for the
hazard rate function of $X^*$ were computed from the $n=295$ cases, together with 95\% bootstrap pointwise confidence bands. The results are displayed in Figure
\ref{RealdataAIDS}, top row. The
transformation $(t+49)/95$ has been used for the ages at
diagnosis and the truncation variables. With this transformation,
the $U^*$ is supported on the $(0,1)$ interval. For better analysis of the figures, the ages in the horizontal axis are reported in their original scale (months).  As in our first real data illustration for the semiparametric estimator, we assume a $Beta(\theta_1,\theta_2)$
model for $U^*$, and the parameters are estimated by maximizing the
conditional likelihood of the truncation times. This parametrization has been used since it permits a range of different curves to describe the data.

The optimal (LSCV) bandwidths for the nonparametric and semiparametric estimators were $h=0.61$ and $h=0.53$ respectively. For comparison purposes, the naive kernel
hazard estimator which does not correct  the double truncation issue
is also reported. Figure \ref{RealdataAIDS}, top row, reveals that the hazard increases with the induction times, which is in accordance with the literature. It is also seen that the proposed estimators are close to each other along their whole support, while the naive estimator clearly overestimates the hazard function.\\
In Figure \ref{RealdataAIDS}, bottom row, we display the parametric and nonparametric biasing functions together with the 95\% pointwise confidence bands based on the bootstrap. The two estimators are roughly equivalent, and they both suggest a sampling probability which decreases as the induction time increases. This decreasing shape of the function $G$ is responsible for the positive bias of the naive hazard estimator. Indeed, it can be proved in general that, when $G$ is non-increasing, the hazard rate corresponding to the observed $X$ is greater than the target. Both the nonparametric and the semiparametric estimators declare a mode around 65-68 months for the hazard, although with no epidemiological interpretation.


\begin{figure}[h]
\begin{minipage}[b]{0.49\linewidth}
\includegraphics[width=\textwidth]{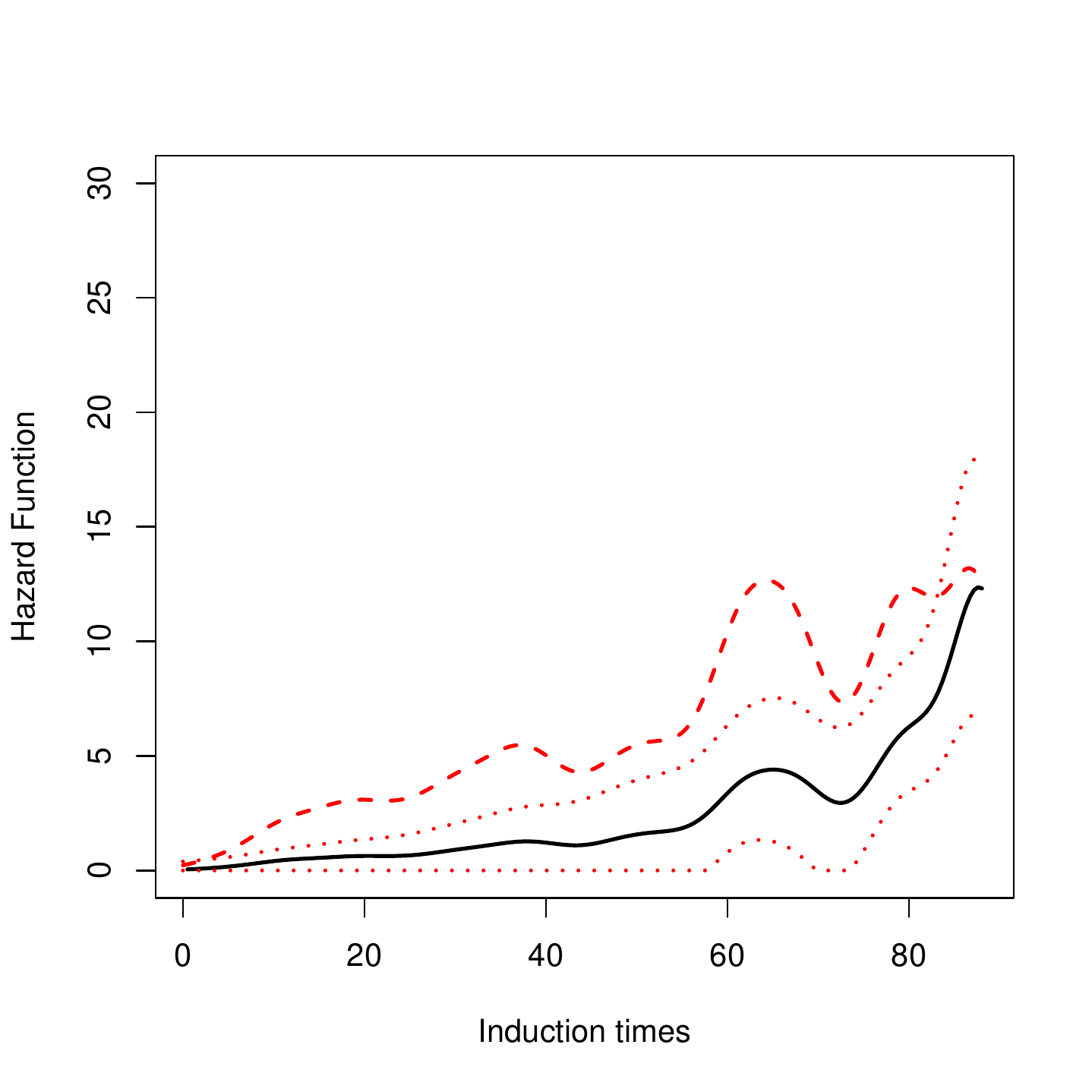}
\end{minipage}
\begin{minipage}[b]{0.49\linewidth}
\includegraphics[width=\textwidth]{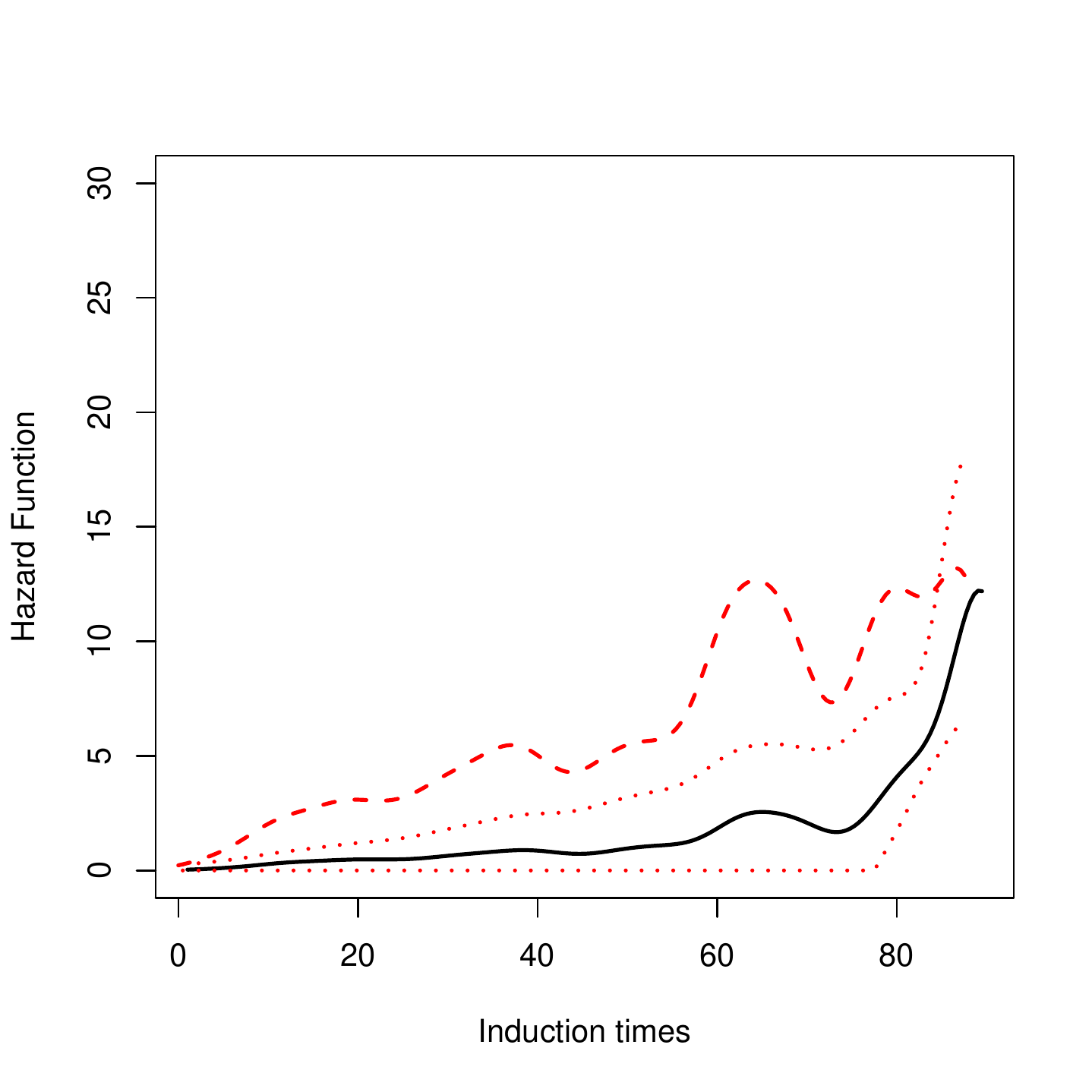}
\end{minipage}
\begin{minipage}[b]{0.49\linewidth}
\includegraphics[width=\textwidth]{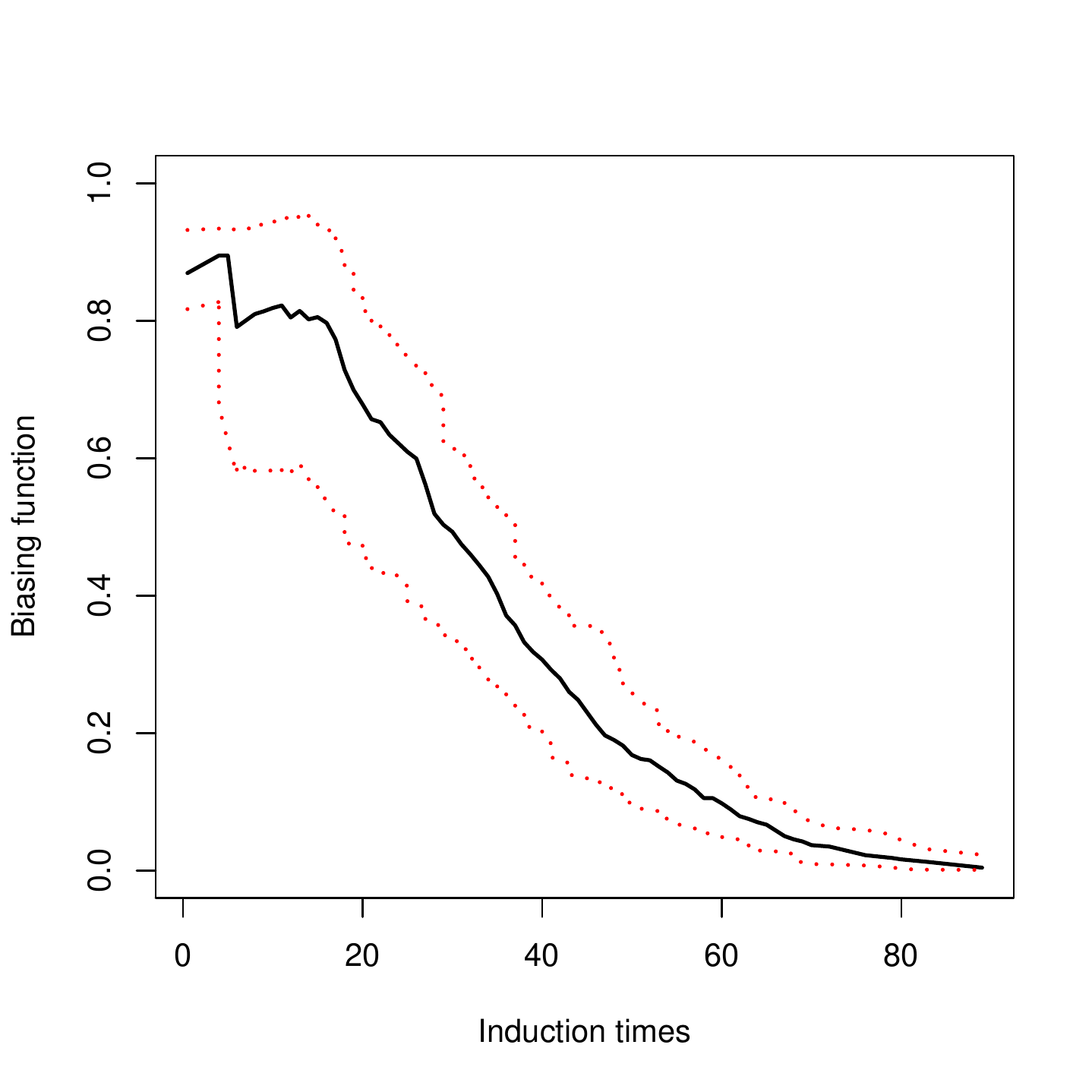}
\end{minipage}
\begin{minipage}[b]{0.49\linewidth}
\includegraphics[width=\textwidth]{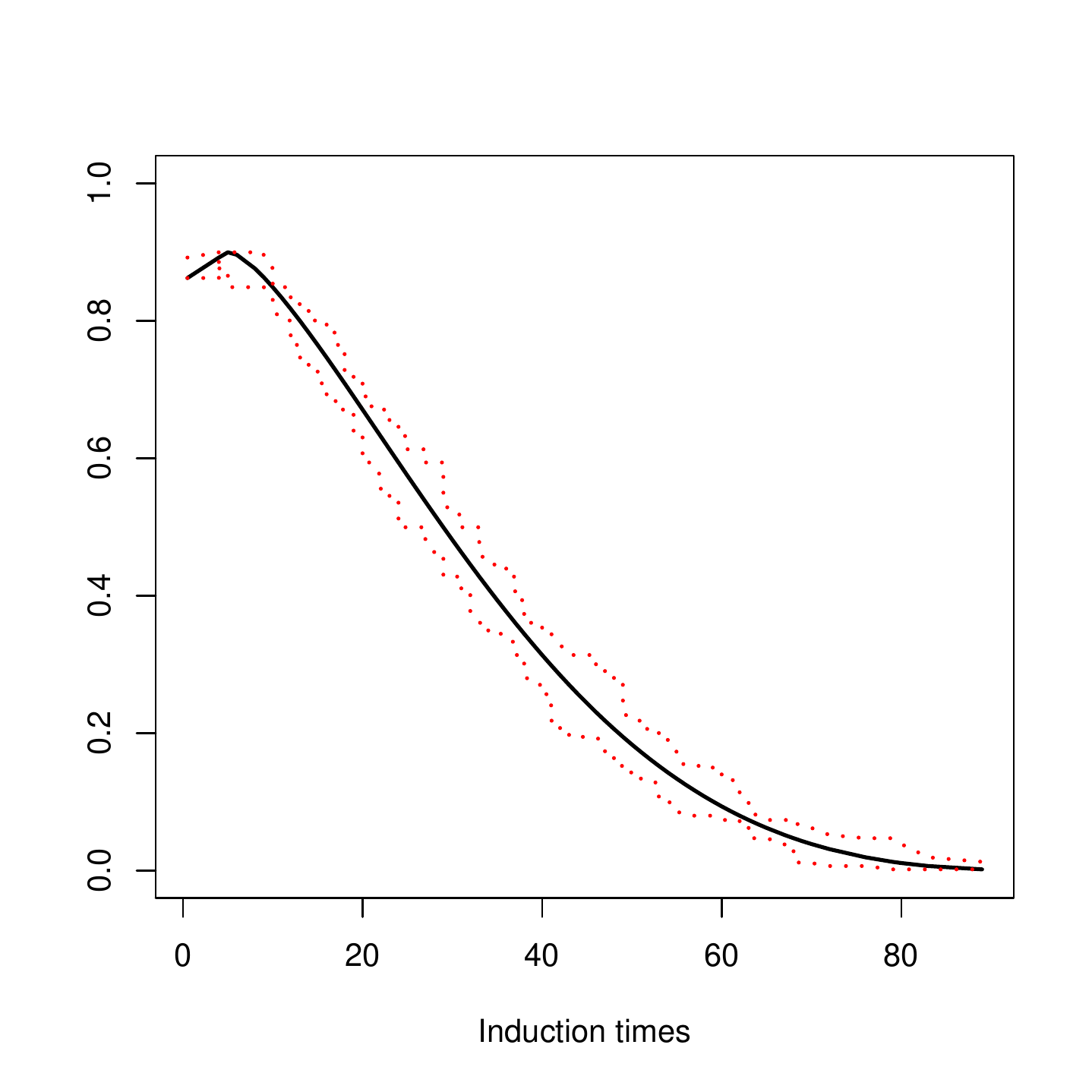}
\end{minipage}
\caption{Top row: Nonparametric hazard estimator (left panel) and the semiparametric hazard estimator (rigth panel) for the age at diagnosis (solid line), AIDS data (n=295), with pointwise confidence bands at level 95\% (dotted lines) and the naive estimator (dashed line). Bottom row: Nonparametric  estimator $G_n$ (left panel, solid line) with pointwise confidence bands at level 95 \% (dotted lines) and $G_{\widehat \theta}$ based on a beta model (rigth panel, dotted lines), AIDS data (n=295).} \label{RealdataAIDS}
\end{figure}

\section{Conclusions}
\label{Section6}
In this paper we have introduced kernel hazard estimators for a variable which is observed under random double truncation. Two estimators have been proposed. The first one is purely nonparametric, and it is defined as a convolution of a kernel function with the NPMLE of the cumulative hazard. The second estimator is based on a parametric specification for the df of the truncation variables, thus being semiparametric. Asymptotic properties of the two estimators have been discussed, including a formula for the asymptotic MISE. A LSCV bandwidth selection criterion for the automatic application of the proposed smoothers has been derived.

The nonparametric and semiparametric estimators are asymptotically equivalent in the sense of having the same asymptotic MISE. However, simulations have shown that, for moderate sample sizes, the semiparametric estimator may outperform the nonparametric estimator. Importantly, the nonparametric estimator may be missleading due to its relatively large variance, providing spurious bumps in particular applications. Furthermore, the NPMLE may not exist and, therefore, the semiparametric estimator may be the only way out in estimation. The referred issues are particularly present with interval sampling when the sampling interval is very narrow. This is in agreement with the intuition that short sampling intervals may result in little, or too deteriorated, information on the target.

Two real data illustrations were provided.  For the ACS data, some features concerning the NPMLE, such as non-existence and non-uniqueness, were encountered. In order to make the application of the nonparametric estimator possible, the sample was reduced to the largest dataset for which the NPMLE exists and is unique. In this reduced dataset it was seen that the effect of double truncation was almost negligible, the proposed estimators being close to the ordinary kernel smoother. In contrast, for the  Blood Transfusion data the effect of double truncation was found critical and the standard kernel estimator exhibited a gross positive bias. Thus, in practice, taking the double truncation issue into account is very important.

In the two real data applications the semiparametric estimator provided confidence intervals much narrower than the nonparametric estimator, indicating the importance of modeling the truncation distibution. This was much more evident for the ACS data. In practice, the semiparametric estimator may be recommended when a suitable parametric family for the truncation distribution is available. A two-parameter beta model worked well in our applications. Besides, in general the semiparametric estimator is well defined, so no sample reduction is needed. Application of the semiparametric estimator to the full ACS dataset ($n=939$) provided an estimate similar to that in Figure \ref{Gfunction}, top-right plot (results not shown). The analysis of the full dataset through the nonparametric hazard estimate was not possible due to the aforementioned nonexistence of the NPMLE.

\section*{Appendix} \label{Appendix1}
\subsection*{Proof of Theorem 1}

\begin{proof}
For (i) introduce $\widetilde{\lambda}_{h}(x)=\alpha
\frac {G(x)^{-1}}{1-F(x)}f^*_{h}(x)$ where
\begin{eqnarray*}
f^*_{h}(x)=\frac{1}{n}\sum_{i=1}^{n}K_{h}(x-X_{i})
\end{eqnarray*} is the ordinary kernel density estimator computed from the observed data.
By \cite{Devroye79} we have $f^*_{h}(x)\rightarrow
f^*(x)$ almost surely, where $f^*$ is the density of $F^*(x)=P(X_1\leq x)$. Now,
if the support of $K$ is contained in $\left[ -a,a\right] $,
\begin{eqnarray*}
\left| \overline{\lambda}_{h}(x)-\widetilde{\lambda}_{h}(x)\right| \leq \alpha
f^*_{h}(x) \sup_{x-ah\leq y\leq x+ah}\left|
\frac{G(y)^{-1}}{1-F(y)}-\frac{G(x)^{-1}}{1-F(x)}\right|,
\end{eqnarray*}
and the supremum goes to zero as $h \rightarrow 0$ by the continuity of $G$ at $x$. This
ends with the proof to (i). Statement (ii) is proved similarly to
Section 2 of \cite{Parzen62}; by following such lines we obtain
\begin{eqnarray*}
\left( nh\right) ^{1/2}\left( \overline{\lambda}_{h}(x)-E\overline{\lambda}_{h}(x)\right) \rightarrow N(0,\alpha \frac{G(x)^{-1}}{1-F(x)}\lambda(x)R(K))
\end{eqnarray*}
in distribution. Now, a two-term Taylor expansion (and the fact that $K$ is even) gives $E\overline{\lambda}_{h}(x)=\lambda(x)+O(h^{2})$. Since $nh^{5}\rightarrow 0$,
this implies the claimed result.
\end{proof}

\subsection*{LSCV bandwidth selection}


 We give the details of the LSCV bandwidth selector for the semiparametric estimator $\lambda_{\widehat \theta,h}$.  The LSCV selector for the nonparametric estimator is analogous.

LSCV aims to estimate the $MISE$ and then to minimize the $MISE$ with respect to $h$.  Given the estimator $ \lambda_{\widehat \theta, h}$ of the hazard $\lambda$, the mean integrated squared error can be written as
\begin{eqnarray*}
MISE(\lambda_{\widehat \theta, h})= E[ISE(\lambda_{\widehat \theta, h})]
= E \Big[\int (\lambda_{\widehat \theta, h}(x)-\lambda(x))^2dx \Big]= E \Big[\int \lambda_{\widehat \theta, h}^2(x) \, dx - 2 \int \lambda_{\widehat \theta, h}(x) \lambda(x)dx + \int \lambda^2(x)dx \Big].
\label{ISE}
\end{eqnarray*}

\noindent The term $\int \lambda^2(x)dx$ does not depend on $h$, and so minimizing $MISE(\lambda_{\widehat \theta, h})$ is equivalent to minimizing
\begin{eqnarray*}
S(\lambda_{\widehat \theta, h})&= MISE(\lambda_{\widehat \theta, h})-\int \lambda^2(x)dx= E\left[ \int \lambda_{\widehat \theta, h}^2(x) \, dx - 2\alpha \int \lambda_{\widehat \theta, h}(x) \frac{G^{-1}(x)}{1-F(x)} F^\ast(dx)\right].
\end{eqnarray*}

\noindent Now, in order to construct an estimator of S$(\lambda_{\widehat \theta, h})$, let $\lambda_{\widehat \theta, h;-i}$ be the hazard estimator constructed from all data points except $X_i$, i.e.
\begin{equation*}
\lambda_{\widehat \theta, h;-i}(x)=\alpha_{\widehat \theta; -i}\frac{1}{n-1}\displaystyle \sum_{j\neq i}K_{h}(x-X_{j})\frac{G_{\widehat \theta; -i}^{-1}(X_{j})}{1-F_{\widehat \theta; -i}(X_{j})}, \label{LSCV}
\end{equation*}
where $G_{\widehat \theta; -i}(\cdot)$, $\alpha_{\widehat \theta; -i}(\cdot)$ and $F_{\widehat \theta; -i}(\cdot)$ are the estimators of $G$, $\alpha$ and $F(\cdot)$, defined in Section 2, except that the $i$-th data point is not used for estimating $\theta$. Introduce
\[
LSCV(h)=\\
=\int \lambda^2_{\widehat \theta, h}(x) \, dx - 2n^{-1} \displaystyle \sum_{i=1}^n   \alpha_{\widehat \theta; -i} \lambda_{\widehat \theta, h; -i}(X_i) \frac{G^{-1}_{\widehat \theta; -i}(X_i)}{1-F_{\widehat \theta; -i}(X_i)},
\]

\noindent and estimate the optimal $h$ by minimizing $LSCV(h)$ over $h$:
\begin{eqnarray*}
\widehat h_{LSCV}=\argmin_hLSCV(h).
\end{eqnarray*}

Note that
\begin{eqnarray*}
&& E \Big[n^{-1} \displaystyle \sum_{i=1}^n \alpha_{\widehat \theta; -i} \lambda_{\widehat \theta, h; -i}(X_i) \frac{G^{-1}_{\widehat \theta; -i}(X_i)}{1-F_{\widehat \theta; -i}(X_i)} \Big] \\
&& = E \Big[\alpha_{\widehat \theta; -1} \lambda_{\widehat \theta, h; -1}(X_1) \frac{G^{-1}_{\widehat \theta; -1}(X_1)}{1-F_{\widehat \theta; -1}(X_1)} \Big] \\
&& = E \Big[\alpha_{\widehat \theta; -1} \int \lambda_{\widehat \theta, h; -1}(x) \frac{G^{-1}_{\widehat \theta; -1}(x)}{1-F_{\widehat \theta; -1}(x)}  F^\ast(dx) \Big],
\end{eqnarray*}
and this is asymptotically equivalent to $E[\alpha \int \overline \lambda_{h;-1}(x) \frac{G^{-1}(x)}{1-F(x)} F^\ast(dx)] = E [\int \overline \lambda_{h;-1}(x) \Lambda(dx)] \linebreak = E [\int \overline \lambda_{h}(x) \lambda(x)dx]$, where $\Lambda$ stands for the cumulative hazard of $F$. Last equality follows from the fact that $E\{\lambda_{\theta, h}(x)\}$ depends only on the kernel and the bandwidth, and not on the sample size. Hence, $E [LSCV(h)]$ is asymptotically equivalent to $S(\lambda_{\widehat \theta, h})$, which suggests that we can expect $\widehat h_{LSCV}$ to be close to the minimizer of  $S(\lambda_{\widehat \theta, h})$, that is, the minimizer of   $MISE(\lambda_{\widehat \theta, h})$.

\subsection*{Smoothed bootstrap}


The smoothed bootstrap procedure can be described as follows. In order to simplify the presentation,
we restrict our attention  to the semiparametric estimator $\lambda_{\widehat \theta,h}$. The proposed method can be adapted
to the nonparametric estimator in an obvious way; see below. For fixed $B$ and for $b=1,\ldots,B$:

\begin{enumerate}
\item Let $X_{b,i}^{boot}$, $i=1, \ldots, n$, be an i.i.d.\ sample from $f_{\widehat \theta, h^0}(x)=\alpha_{\widehat \theta}n^{-1}\sum_{i=1}^n K_{h^0}(x-X_i)G_{\widehat \theta}(X_i)^{-1}$, where the pilot bandwidth $h^0$ is chosen to be $\widehat h^0_{LSCV}$ (other choices for $h^0$ are possible as well). Let $(U_{b,i}^{boot}, V_{b,i}^{boot})$, $i=1, \ldots, n$, be an i.i.d.\ sample from $T_{\widehat \theta}$.  Next, for each $i=1,\ldots,n$, we keep the triplet $(U_{b,i}^{boot}, X_{b,i}^{boot}, V_{b,i}^{boot})$ in the resample only if the condition $U_{b,i}^{boot} \le X_{b,i}^{boot} \le V_{b,i}^{boot}$ is fulfilled.   If not, the same resampling procedure is repeated until a triplet satisfying the inequality is found.
\item Let $\widehat\theta_b^{boot}$ and $\lambda_{\widehat \theta_b^{boot}, b,h}^{boot}$ be the estimator of $\theta$ (that is, $\widehat \theta$) and of the hazard $\lambda$ ($\lambda_{\widehat \theta,h}$) respectively, obtained from the bootstrap sample $(U_{b,i}^{boot}, X_{b,i}^{boot}, V_{b,i}^{boot})$, $i=1, \ldots, n$.
\end{enumerate}

\noindent Variability of $\widehat \theta$ and $\lambda_{\widehat \theta,h}$ is then estimated by that of the $B$ bootstrap evaluations $\widehat\theta_b^{boot}$ and $\lambda_{\widehat \theta_b^{boot}, b,h}^{boot}$, $1\leq b \leq B$. Note that the resampling plan above is an obvious bootstrap, as opposed to the simple bootstrap which directly resamples with replacement from the triplets $(U_i,X_i,V_i)$ \citep{Moreira10}. This allows for an easy introduction of the semiparametric information, since the truncation couple is resampled from the estimated parametric model $T_{\widehat \theta}$. When the focus is the nonparametric estimator $\lambda_h$, the parametric distribution $T_{\widehat \theta}$ is replaced by the NPMLE $T_n$ in Step 1 above; and, obviously, the semiparametric density estimator $f_{\widehat \theta, h^0}(x)$ is replaced by its nonparametric counterpart $f_{h^0}(x)=\alpha_n n^{-1}\sum_{i=1}^n K_{h^0}(x-X_i)G_n(X_i)^{-1}$.

\begin{acknowledgements}
\section{Acknowledges}
Work supported by the Grant MTM2017-89422-P (MINECO/AEI/FEDER, UE) and by Portuguese Funds through FCT — Funda\c c\~ao Ci\^encia e Tecnologia, within the Projects UIDB/00013/2020 and UIDP/00013/2020. Financial support from the Xunta de Galicia (Centro singular de investigaci\'on de Galicia accreditation 2019-2022) and the EU (ERDF), Ref. ED431G2019/06, is acknowledged too.
This study is also a result of the project DOCnet (NORTE-01-0145-FEDER-000003), supported by Norte Portugal Regional Operational Programme (NORTE 2020), under the PORTUGAL 2020 Partnership Agreement, through the European Regional Development Fund (ERDF).
\end{Acknowledges}

\end{document}